\newtheorem{theorem}{Theorem}
\newtheorem{proposition}[theorem]{Proposition}
\newcommand{\doublewidetilde}[1]{{%
  \mathpalette\double@widetilde{#1}%
}}
\newcommand{\double@widetilde}[2]{%
  \sbox\z@{$\m@th#1\widetilde{#2}$}%
  \ht\z@=.9\ht\z@
  \widetilde{\box\z@}%
}
\newcommand{\argmax}{\mathop{\mathrm{argmax}}}
\def\one{\mathds{1}}
\def\sign{\text{sign}}
\def\obs{\mathrm{obs}}
\def\sign{\mathrm{sign}}
\def\R{\mathbb{R}}
\def\cN{\mathcal{N}}
\def\cV{\mathcal{V}}
\def\mkfancyprefix#1#2{%
\expandafter\def\csname fancyref#1labelprefix\endcsname{#1}%
\begingroup\def\x{\endgroup\frefformat{plain}}%
    \expandafter\x\csname fancyref#1labelprefix\endcsname
    {\MakeLowercase{#2}\fancyrefdefaultspacing##1}%
\begingroup\def\x{\endgroup\Frefformat{plain}}%
    \expandafter\x\csname fancyref#1labelprefix\endcsname
    {#2\fancyrefdefaultspacing##1}%
\begingroup\def\x{\endgroup\frefformat{vario}}
    \expandafter\x\csname fancyref#1labelprefix\endcsname
    {\MakeLowercase{#2}\fancyrefdefaultspacing##1##3}%
\begingroup\def\x{\endgroup\Frefformat{vario}}%
    \expandafter\x\csname fancyref#1labelprefix\endcsname
    {#2\fancyrefdefaultspacing##1##3}%
}
\begin{document}
\title{Post-Selection Inference for Changepoint Detection Algorithms
  with Application to Copy Number Variation Data}
\author{SANGWON HYUN$^\ast$, KEVIN LIN, MAX G'SELL, RYAN J. TIBSHIRANI\\[4pt]
\textit{
Department of Statistics,
Carnegie Mellon University,
132 Baker Hall, Pittsburgh, PA 15213.
}
\\[2pt]
}

\markboth%
{S. Hyun and others}
{Post-selection inference for changepoint detection}

\maketitle

\footnotetext{To whom correspondence should be addressed: \href{mailto:shyun@cmu.edu}{robohyun66@gmail.com}.}
 


\maketitle

\begin{abstract}
{Changepoint detection methods are used in many areas of science and engineering,
e.g., in the analysis of copy number variation data, to detect abnormalities in
copy numbers along the genome.  Despite the broad array of available tools,
methodology for quantifying our uncertainty in the strength (or presence) of
given changepoints, {\it post-detection}, are lacking. Post-selection inference
offers a framework to fill this gap, but the most straightforward application of
these methods results in low-powered tests and leaves open several important
questions about practical usability. In this work, we carefully tailor
post-selection inference methods towards changepoint detection, focusing as
our main scientific application on copy number variation data. As for
changepoint algorithms, we study binary segmentation, and two of its most popular
variants, wild and circular, and the fused lasso.  We implement some of the
latest developments in post-selection inference theory: we use auxiliary
randomization to improve power, which requires implementations of MCMC
algorithms (importance sampling and hit-and-run sampling) to carry out our
tests. We also provide recommendations for improving practical useability,
detailed simulations, and an example analysis on array comparative genomic
hybridization (CGH) data.}
{CGH analysis; changepoint detection; copy number variation;
hypothesis tests; post-selection inference; segmentation algorithms}
\end{abstract}

\section{Introduction} 
\label{sec:introduction}

Changepoint detection is the problem of identifying changes in data
distribution along a sequence of observations. We study the canonical
changepoint problem, where changes occur only in the mean: let vector
$Y=(Y_1,\ldots,Y_n) \in \R^n$ be a data vector with independent entries
following    
\begin{equation}
\label{eq:data-model}
Y_i \sim \cN(\theta_i, \sigma^2), \quad i=1,\ldots,n, 
\end{equation}
where the unknown mean vector $\theta \in \R^n$ forms a piecewise constant
sequence. That is, for locations $1 \leq b_1 < \cdots < b_t \leq n-1$,
\[
\theta_{b_j+1} = \ldots = \theta_{b_{j+1}}, \quad j=0,\ldots,t.
\]
where for convenience we write $b_0=0$ and $b_{t+1}=n$.  We call $b_1,
\ldots, b_t$ {\it changepoint} locations of $\theta$. Changepoint detection
algorithms typically focus on estimating the number of changepoints
$t$ (which could possibly be 0), as well as the locations $b_1, \ldots, b_t$,
from a single realization $Y$.  Roughly speaking, changepoint methodology (and 
its associated literature) can be divided into two classes of algorithms: {\it
  segmentation} algorithms and {\it penalization} algorithms.   The former class
includes {\it binary segmentation} (BS) \citep{vostrikova1981detecting} and
popular variants like
{\it wild binary segmentation} (WBS) \citep{fryzlewicz2014wild}   
and {\it circular binary segmentation} (CBS) \citep{olshen2004circular}; the
latter class includes the {\it fused lasso} (FL)
\citep{tibshirani2005sparsity} 
(also called {\it total variation denoising} \citep{rudin1992nonlinear} in
signal processing), and the {\it Potts estimator}
\citep{boysen2009consistencies}. These two classes have different strengths;
see, e.g., \citet{lin2016approximate} for more discussion.

Having estimated changepoint locations, a natural follow-up goal would be to
conduct statistical inference on the significance of the changes in mean at
these locations. Despite the large number of segmentation algorithms and
penalization algorithms available for changepoint detection, there has been very
little focus on formally valid inferential tools to use {\it post-detection}.
In this work, we describe a suite of inference tools to use after a changepoint
algorithm has been applied---namely, BS, WBS, CBS, or FL.  We
work in the framework of {\it post-selection inference}, also called {\it
  selective inference}. The specific machinery that we build off was first
introduced in \citet{lee2016exact,tibshirani2016exact}, and further developed in
various works, notably
\citet{fithian2014optimal,fithian2015selective,tian2018selective}, whose
extensions we rely on in particular.  The basic inference procedure we
develop can be outlined as follows.   

\begin{enumerate}
 \item Given data $Y$, apply a changepoint algorithm to detect some fixed number  
   of changepoints $k$. Denote the sorted estimated changepoint locations by
   \begin{equation} 
     \label{eq:estimated-changepoints}
     1 \leq \hat{c}_1 < \cdots < \hat{c}_k \leq n-1,
   \end{equation}
   and their respective changepoint directions (whether the estimated change in
   mean was positive or negative) by \smash{$\hat{d}_1, \ldots,\hat{d}_k \in 
   \{-1,1\}$}.  For notational convenience, we set \smash{$\hat{c}_0= 0$} and 
 \smash{$\hat{c}_{k+1} = n$}.  The specifics of the changepoint algorithms that
 we consider are given in \Fref{sec:algorithms}.

 \item Form contrast vectors $v_1,\ldots, v_k \in \R^n$, defined so that for
   arbitrary $y \in \R^n$,
   \begin{equation} 
     \label{eq:segment-contrast}
     v_j^T y = \hat{d}_j \bigg( \frac{1}{\hat{c}_{j+1}-\hat{c}_j}
     \Big(\sum_{i=\hat{c}_j+1}^{\hat{c}_{j+1}} y_i \Big)- 
     \frac{1}{\hat{c}_j-\hat{c}_{j-1}+1}
     \Big(\sum_{i=\hat{c}_{j-1}+1}^{\hat{c}_j} y_i \Big)\bigg), 
   \end{equation}
   the difference between the sample means of segments to right and left of
   \smash{$\hat{c}_j$}, for $j=1,\ldots,k$.
   
 \item For each $j = 1,\ldots, k$, we test the hypothesis $H_0: v_j^T \theta=0$ by
   rejecting for large values of a statistic $T(Y, v_j)$, which is computed
   based on knowledge of the changepoint algorithm that produced
   \eqref{eq:estimated-changepoints} in Step 1, and the desired contrast vector
   \eqref{eq:segment-contrast}  formed in Step 2.  Each statistic
      yields an exact p-value under the null (assuming Gaussian errors
   \eqref{eq:data-model}). The details are given in Sections
   \ref{sec:post-selection} and \ref{sec:inference-ours}. 

\item Optionally, we can use Bonferroni correction and multiply our p-values by
  $k$, to account for multiplicity.  
\end{enumerate}

It is worth mentioning that several variants of this basic
procedure are possible.  For example, the number of changepoints $k$ in Step 1
need not be seen as fixed and may be itself estimated from data;
the set of estimated changepoints \eqref{eq:estimated-changepoints} may be
pruned after Step 1 to eliminate changepoints that lie too close to others,
and alternative contrast vectors to \eqref{eq:segment-contrast} in
Step 2 may be used to measure more localized mean changes; these are all
briefly described in \Fref{sec:practicalities}.
Though not covered in our paper, the p-values from our tests can  
be inverted to form confidence intervals for population contrasts
$v_j^T \theta$ for $j = 1,\ldots, k$  
\citep{lee2016exact,tibshirani2016exact}. 

At a more comprehensive level, our contributions in this work are to implement 
theoretically valid inference tools and practical guidance for each combination of
the following choices that a typical user might face in a changepoint analysis:
the algorithm (BS, WBS, CBS, or FL),
number of estimated changepoints $k$ (fixed or data-driven),
the null hypothesis model (saturated or selected model, to be explained in 
\Fref{sec:post-selection}),
what type of conditioning (plain or marginalized, to be explained in \Fref{sec:randomization}),
and the error variance $\sigma^2$ (known or unknown).
In \Fref{sec:practicalities}, we summarize the tradeoffs underlying each of
these choices.   

Finally, as the primary application of our inference tools, we study 
comparative genomic hybridization (CGH) data, making particular suggestions
geared towards this problem throughout the paper.  We begin with a motivating
CGH data example in the next subsection, and return to it at the end of the
paper. 

\subsection{Motivating example: array CGH data analysis}

We examine array CGH data from the 14th chromosome of cell line GM01750, one of
the 15 datasets from \citet{Snijders2001}; more background can be found in
\citet{lai2005comparative} and references therein. Array CGH data are $\log_2$
ratios of dye intensities of diseased to healthy subjects' measurements, mixed
across many samples. Normal regions of the gene are thought to have an
underlying mean $\log_2$ ratio of zero, and aberrations are regions of upward or
downward departures from zero because the gene in that region has been mutated
-- duplicated or deleted. The presence and locations of aberrations are well
studied in the biomedical literature to be associated with the presence of a
wide range of genetically driven diseases -- as many types of cancer, Alzheimer,
and autism \citep{fanciulli2007fcgr3b, sebat2007strong, international2008rare,
  stefansson2008large, walters2010new, bochukova2010large}. Accurate changepoint
analysis of array CGH data is thus useful in studying association with diseases,
and for medical diagnosis. 

The data is plotted in the left panel of \Fref{fig:intro}. Two locations
\smash{$\hat{c}_1 < \hat{c}_2$}, marked A and B respectively, were detected by
running 2-step WBS.  Ground truth in this data set can be defined via an
external process called called karyotyping; this is done by \citet{Snijders2001}
who finds only one true changepoint at location A. (To be precise, they do not
report exact locations of abnormalities, but find a single start-to-middle
deviation from zero level.)

Without access to any post-selection inference tools, we might treat locations A
and B as fixed, and simply run t-tests for equality of means of neighboring data
segments, to the left and right of each location. This is precisely testing the
null hypothesis $H_0 : v_j^T\theta=0$, $j = 1,2$, where the contrast vectors are
as defined in \eqref{eq:segment-contrast}.  P-values from the t-tests are
reported in the first row of the table in \Fref{fig:intro}: we see that location
A has a p-value of $< 10^{-5}$, but location B also has a small
p-value of $5 \times 10^{-4}$, which is troublesome. 
The problem is that
location B was specifically selected by WBS because (loosely put) the sample
means to left and right of B are well separated, thus a t-test a location B
is bound to be optimistic.

Using the tools we describe shortly, we test $H_0 :v_j^T \theta=0$,  $j = 1,2$ 
in two ways: using a {\it saturated model} and a {\it selected model} on the mean
vector $\theta$.  The satured model assumes nothing about $\theta$, while the
selected model assumes $\theta$ is constant between the intervals
formed by $A$ and $B$.  
Both tests yield a p-value $< 10^{-5}$ at location A, but only a moderately small p-value at
location B.  If we were to use the Bonferroni correction at a nominal
significance level $\alpha=0.05$, then in neither case would we reject
the null at location B.  

\begin{figure}[ht!]
  \begin{minipage}{0.5\linewidth}
    \centering
    \includegraphics[width=\linewidth]{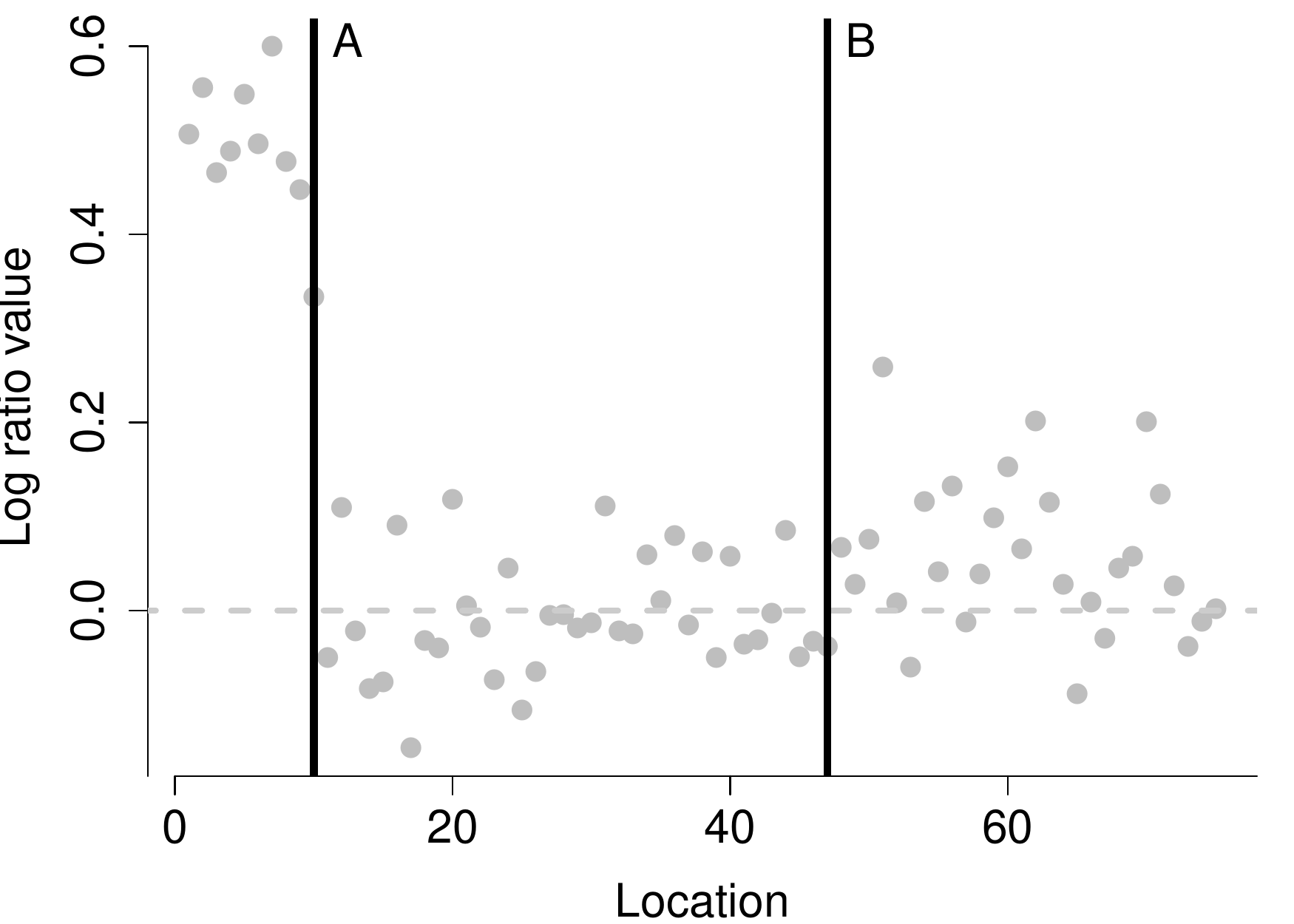} 
  \end{minipage} 
  \begin{minipage}{0.5\linewidth}
    \centering\small
    \begin{tabular}{ccc}
      \toprule
      Location & A & B \\ 
      \midrule
      Karyotype & True & False \\ 
      \midrule
      Classical t-test & 0 & $5 \times 10^{-4}$  \\ 
      Saturated model test & 0 & 0.050 \\ 
      Selected model test & 0 & 0.027 \\
      \bottomrule
    \end{tabular}
  \end{minipage}
  \caption{\it\small Left: array CGH data from the 14th chromosome of
    fibroblast cell line GM01750, from \citet{Snijders2001}.  The x-axis denotes
    the relative index of the genome position, and the y-axis denotes the log
    ratio in fluorescence intensities of the test and reference samples.  The
    dotted horizontal line denotes a log ratio of 0 for reference.  The bold
    vertical lines denote the locations A and B from running WBS for 2 steps.  Right: the
    p-values using classical (naive) t-tests, saturated model tests, and
    selected model tests, at each location A and B.  The ground truth is also 
    given, as determined by karyotyping. The saturated model test used an estimated
    noise level $\sigma^2$ from the entire 23-chromosome data set. The selected
    model test was performed in the unknown $\sigma^2$ setting.}
  \label{fig:intro}
\end{figure}
\subsection{Related work}

In addition to the references on general post-selection inference methodology
given previously, we highlight the recent work of \citet{hyun2018exact}, who
study post-selection inference for the generalized lasso, a special case of
which is the fused lasso. These authors already characterize the polyhedral form
of fused lasso selection events, and study inference using contrasts as in
\eqref{eq:segment-contrast}. While writing the current paper, we became aware of
the independent contributions of \citet{umezu2017selective}, who study
multi-dimensional changepoint sequences, but focus problems in which the mean
$\theta$ has only one changepoint.  Aside from these papers, there is little
focus on valid inference methods to apply post-detection in changepoint
analysis.  On the other hand, there is a huge literature on changepoint
estimation, and inference for {\it fixed} hypotheses in changepoint problems; we
refer to \citet{jandhyala2013,aueHorvath2013,horvath2014}, which collectively
summarize a good deal of the literature.

\section{Preliminaries}

\subsection{Review: changepoint algorithms}
\label{sec:algorithms}

Below we describe the changepoint algorithms that we will study in this paper.  
For the first three segmentation algorithms, we will focus on formulations
that run the algorithm for a given number of steps $k$; these algorithms
are typically described in the literature as being run until internally
calculated statistics do not exceed a given threshold level $\tau$.  The reason
that we choose the former formulation is twofold: first, we feel it is easier
for a user to specify a priori a reasonable number of steps $k$, versus a
threshold level $\tau$; second, we can use the method in \citet{hyun2018exact}
to adaptively choose the number of steps $k$ and still perform valid
inferences.  In what follows, we use the
notation \smash{$y_{a:b}=(y_a, y_{a+1}, \ldots, y_b)$} and 
\smash{$\bar{y}_{a:b} = (b-a+1)^{-1} \sum_{i=a}^b y_i$}
for a vector $y$.

\paragraph{\textbf{Binary segmentation (BS).}} Given a data vector $y \in \R^n$, the
$k$-step BS algorithm \citep{vostrikova1981detecting} sequentially splits the
data based on the cumulative sum 
(CUSUM) statistics, defined below.  At a step $\ell = 1,\ldots,k$, let 
\smash{$\hat{b}_{1:(\ell-1)}$} be the changepoints estimated so far, and let
$I_j$, $j=1,\ldots,\ell-1$ be the partition of $\{1,\ldots,n\}$ induced by
\smash{$\hat{b}_{1:(\ell-1)}$}.  Intervals of length 1 are discarded.
Let $s_j$ and $e_j$ be the start and end indices of $I_j$.
The next changepoint \smash{$\hat{b}_\ell$} and maximizing interval
\smash{$\hat{j}_\ell$} are chosen to maximize the absolute CUSUM statistic:  
\begin{gather}
 \big\{\hat j_{\ell}, \hat b_{\ell}\big\} =
 \argmax_{\substack{j \in \{1, \ldots, \ell-1\} \\ 
     b \in \{s_j, \ldots, e_j-1 \}}} 
 \big|g^T_{(s_j, b, e_j)} y\big|,
\quad \text{where} \nonumber\\
\label{eq:bs-g-fun}
g_{(s,b,e)}^Ty = \sqrt{\frac{1}{\frac{1}{|e-b|}+\frac{1}{|b+1-s| 
}}}\big(\bar y_{(b+1):e} - \bar y_{s:b}\big).
\end{gather}
Additionally, the direction \smash{$\hat{d}_\ell$} of the new changepoint is
calculated by the sign of the maximizing absolute CUSUM statistic,
\smash{$\hat{d}_{\ell} = \sign(g_{(s_j, b_{\ell}, e_j)}^Ty)$} for 
$j = \hat j_{\ell+1}$.  

\paragraph{\textbf{Wild binary segmentation (WBS).}} The $k$-step WBS algorithm
\citep{fryzlewicz2014wild} is a modification of BS that calculates
CUSUM statistics over randomly drawn segments of the data. 
Denote by
$w = \{w_1, \ldots, w_B\} = \{(s_1, \ldots, e_1), \ldots, (s_B, \ldots, 
e_B)\}$
  a set of $B$ uniformly randomly drawn
intervals with $1 \leq s_i < e_i \leq n$, $i=1,\ldots,B$.  At a step
$\ell=1,\ldots,k$, let $J_\ell$ to be the index set of the intervals
in $w$ which do not intersect with the changepoints 
\smash{$\hat{b}_{1:(\ell-1))}$} estimated so far. The next
changepoint \smash{$\hat{b}_{\ell}$} and the maximizing interval  
\smash{$\hat{j}_{\ell}$} are obtained by: 
\begin{equation*}
  \big\{\hat j_{\ell}, \hat b_{\ell}\big\} = 
  \argmax_{\substack{j \in J_\ell \\
      b \in \{s_j, \ldots, e_j-1\}}}
  \big| g^T_{(s_j, b, e_j)} y \big|,
\end{equation*}
where $g_{(s,b,e)}^T y$ is as defined in \eqref{eq:bs-g-fun}.
Similar to BS, the direction of the changepoint \smash{$\hat{d}_\ell$} is
defined by the sign of the maximizing absolute CUSUM statistic. 

\paragraph{\textbf{Circular binary segmentation (CBS).}} The $k$-step CBS algorithm
\citep{olshen2004circular} specializes in detecting {\it pairs} of changepoints
that have alternating directions.  At a step $\ell=1,\ldots,k$, let \smash{$\hat 
a_{1:(\ell-1)}$}, \smash{$\hat b_{1:(\ell-1)}$} be the changepoints estimated
so far (with the pair $a_j$, $b_j$ estimated at step $j$), and let $I_j$,
$j=1,\ldots,2(\ell-1)+1$ be the associated partition of $\{1,\ldots,n\}$. 
Intervals of length 2 are discarded.  
Let $s_j$ and $e_j$ denote the start and end index of $I_j$. The next
changepoint pair \smash{$\hat a_{\ell}$} and \smash{$\hat b_{\ell}$}, and the 
maximizing interval \smash{$\hat j_{\ell}$}, are found by:
\begin{gather}
  \label{eq:cbs-opt-prob}
  \big\{\hat j_{\ell}, \hat a_{\ell}, \hat b_{\ell}\big\} =  
  \argmax_{\substack{ j \in \{1,\ldots,2(\ell-1)+1)\} \\ 
      a < b \in \{s_j, \ldots, e_j-1\}   }}  
  \big| g^T_{(s_j, a, b, e_j)}y
  \big| \quad \text{where} \\
  \label{eq:cbs-g-fun}
  g_{(s,a,b,e)}^Ty = 
  \sqrt{\frac{1}{\frac{1}{|b-a|}+\frac{1}{|e-s-b+a|}}} 
  \Big(\bar y_{(a+1):b} - \bar y_{\{s:a\}\cup\{(b+1):e\}}\Big).
\end{gather}
As before, the new changepoint direction \smash{$\hat d_{\ell}$} is defined 
based on the sign of the (modified) CUSUM statistic,
\smash{$\hat d_{\ell} = \sign(g^T_{(s_j, a_{\ell+1}, b_{\ell+1}, e_j)}y)$} for 
\smash{$j = \hat j_{\ell+1}(y)$.}


\paragraph{\textbf{Fused lasso.}} The fused lasso (FL) estimator
\citep{rudin1992nonlinear,tibshirani2005sparsity} is defined by solving the
convex optimization problem:
\begin{equation}
 \label{eq:fl}
 \min_{\theta \in \R^n} \; \sum_{i=1}^n (y_i - \theta_i)^2 + \lambda 
 \sum_{i=1}^{n-1} |\theta_i - \theta_{i+1}|,
\end{equation}
for a tuning parameter $\lambda \geq 0$.  The fused lasso can be seen as a
$k$-step algorithm by sweeping the tuning parameter from $\lambda=\infty$ down 
to $\lambda=0$. Then, at given values of $\lambda$ (called knots), 
the FL estimator introduces an additional changepoint in the solution in
\eqref{eq:fl} \citep{hoefling2010path}. 

\subsection{Review: post-selection inference} 
\label{sec:post-selection}

We briefly review post-selection inference as developed in 
\citet{lee2016exact,tibshirani2016exact,fithian2014optimal}.  For a more
thorough and general treatment, we refer to these papers or to
\citet{hyun2018exact}.  Our description here will be cast towards changepoint 
problems. For clarity, we notationally distinguish between a random vector $Y$
distributed as in \eqref{eq:data-model}, and $y_\obs$, a
single data vector we observe for changepoint analysis.  When a changepoint
algorithm---such as BS, WBS, CBS, or FL---is applied to the data $y_\obs$, it  
selects a particular changepoint model $M(y_\obs)$. The specific forms
of such models are described in \Fref{sec:polyhedra}; for now,
loosely, we may think of $M(y_\obs)$ as the estimated changepoint locations and 
directions made by the algorithm on the data at hand.
Post-selection inference revolves around the selective distribution, i.e., the
law of
\begin{equation}
\label{eq:selective-distribution}
v^T Y \; | \;  \Big(M(Y) = M(y_\obs),\; q(Y) = q(y_\obs)\Big),
\end{equation}
under the null hypothesis $H_0: v^T \theta = 0$,
for any $v$ that is a measurable function of $M(y_\obs)$. Here $q(Y)$ is a
vector of sufficient statistic of nuisance parameters that need to be
conditioned on in order to tractably compute inferences based on
\eqref{eq:selective-distribution}.  The explicit form of $q(Y)$ differs based on
the assumptions imposed on $\theta$ under the null model. Broadly, there are two
classes of null models we may study: saturated and selected
models \citep{fithian2014optimal}. 
Computationally, in either null models, it is important for the selection event $\{ y: M(y) = M(y_\obs)\}$ be polyhedral. This is described in detail in  Section \ref{sec:polyhedra},
where we show that this holds for 
BS, WBS, CBS, and FL.

\paragraph{\textbf{Saturated model.}} The {\it saturated model} assumes that $Y$ is
distributed as in \eqref{eq:data-model} with known error variance
$\sigma^2$, and assumes nothing about the mean vector $\theta$. 
We set $q(Y) = \Pi_v^\perp Y$, the projection of $Y$ onto the hyperplane
orthogonal to $v$. The selective distribution becomes the law of    
\begin{equation}
\label{eq:selective-distribution-saturated}
v^TY \; | \; \Big(M(Y) = M(y_\obs),\; \Pi_v^\perp Y = \Pi_v^\perp y_\obs\Big).
\end{equation}

\paragraph{\textbf{Selected model.}} The {\it selected model} again assumes that $Y$
follows \eqref{eq:data-model}, but additionally assumes that the mean vector
$\theta$ is piecewise constant with changepoints at the sorted estimated
locations \smash{$\hat c_{1:k}=\hat c_{1:k}(y_\obs)$} (assuming we have run our 
changepoint algorithm for $k$ steps).  That is, we assume 
\[
\theta_{\hat c_j+1} = \ldots = \theta_{\hat c_{j+1}}, \quad j\in\{0,\ldots,k\}.
\]
where for convenience we use \smash{$\hat c_0=0$} and \smash{$\hat c_{k+1}=n$}.
Under this assumption, the law of $Y$ becomes a $(k+1)$-parameter Gaussian distribution.
Additionally, with the 
contrast vector $v_j$ defined as in \eqref{eq:segment-contrast}, for any fixed 
$j=1,\ldots,k$, the quantity $v_j^T \theta$ of interest is simply the difference
between two of the parameters in this distribution.  Assuming $\sigma^2$ is known, the
sufficient statistics $q(Y)$ for the nuisance parameters in the Gaussian family
are simply sample averages of the appropriate data segments, and the selective 
distribution becomes the law of 
\begin{equation}
\label{eq:selective-distribution-selected-known-sigma}
\big( \bar{Y}_{(\hat c_j + 1) : \hat c_{j+1}} - 
\bar{Y}_{(\hat c_{j-1}+1) : \hat c_j} \big) \; \big| \; 
\Big(M(Y) = M(y_\obs),\; 
\bar{Y}_{(\hat c_\ell + 1) : \hat c_{\ell+1}} = 
\big(\bar{y}_\obs\big)_{(\hat c_\ell + 1) : \hat c_{\ell+1}}, \; \ell \neq j\Big).
\end{equation}
Part of the strength of the selected model is that we can properly treat
$\sigma^2$ as unknown; in this case, we must only additionally condition on the
Euclidean norm of $y_\obs$ to cover this nuisance parameter, and the selective
distribution becomes the law of
\begin{multline}
\label{eq:selective-distribution-selected-unknown-sigma}
\big( \bar{Y}_{(\hat c_j + 1) : \hat c_{j+1}} - 
\bar{Y}_{(\hat c_{j-1}+1) : \hat c_j} \big) \; \big| \; 
\Big(M(Y) = M(y_\obs),\; 
\bar{Y}_{(\hat c_\ell + 1) : \hat c_{\ell+1}} = 
\big(\bar{y}_\obs\big)_{(\hat c_\ell + 1) : \hat c_{\ell+1}}, \; \ell \neq j, \\
\|Y\|_2 = \|y_\obs\|_2\Big).
\end{multline}


  
\section{Inference for changepoint algorithms}
\label{sec:inference-ours}

We describe our contributions that enable post-selection inference for
changepoint analyses, beginning with the form of model selection events for
common changepoint algorithms.  We then describe computational details for
saturated and selected model tests, and auxiliary randomization.

\subsection{Polyhedral selection events} 
\label{sec:polyhedra}

We show that, for each of the BS, WBS, and CBS algorithms, there is a
parametrization for their models such that event $\{y : M(y) = M(y_\obs)\}$ is a
polyhedron---in fact a convex cone---of the form $\{ y : \Gamma y \geq 0 \}$,
for a matrix $\Gamma \in \R^{m \times n}$ that depends on $M(y_\obs)$ (and we
interpret the inequality $\Gamma y \geq 0$ componentwise). 
Throughout the description of the polyhedra for each algorithm, we display
the number of rows in $\Gamma$ since it loosely denotes how ``complex'' each
model selection event is. 
 The same
was already shown for FL in \citet{hyun2018exact}, and we omit details, but
briefly comment on it below.  
Overall, the
$\Gamma$ matrices for FL and BS are linear in $n$, while it is quadratic in $n$
for CBS, and $O(Bkp)$ for WBS using intervals of length $p$.
This number can grow faster than linear in $n$ if $B\ge n$, which
is recommended in practice \citep{fryzlewicz2014wild}.

\paragraph{\textbf{Selection event for BS.}} We define the model for the $k$-step BS estimator as  
\[
M^{\mathrm{BS}}_{1:k}(y_\obs)  = \big\{\hat b_{1:k}(y_\obs), \; \hat d_{1:k}(y_\obs)\big\},
\]
where \smash{$\hat b_{1:k}(y_\obs)$} and \smash{$\hat d_{1:k}(y_\obs)$} are the
changepoint locations and directions when the algorithm is run on $y_\obs$, as
described in Section \ref{sec:algorithms}.   

\begin{proposition}
\label{prop:bs-polyhedral-event}
\textit{Given any fixed $k \geq 1$ and $b_{1:k},d_{1:k}$, we can explicitly construct
$\Gamma$ where
\[
\big\{y : M_{1:k}^{\mathrm{BS}}(y) = \{b_{1:k}, d_{1:k} \} \big\} = 
\{ y : \Gamma y \geq 0 \},
\]
and where $\Gamma$ has \smash{$2 \sum_{\ell=1}^k (n - \ell - 1)$}   
rows.}
\end{proposition}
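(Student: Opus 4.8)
The plan is to exploit the fact that each CUSUM functional $y \mapsto g_{(s,b,e)}^T y$ from \eqref{eq:bs-g-fun} is \emph{linear} in $y$, so that recording the sequence of argmax/sign decisions made by BS reduces to a finite list of linear inequalities. The conceptual crux is that we condition on the \emph{entire} model $\{b_{1:k}, d_{1:k}\}$ and BS is sequential: once $b_{1:(\ell-1)}$ is fixed, the partition of $\{1,\ldots,n\}$ into intervals is determined, and hence so is the collection of candidate triples $(s_j, b, e_j)$ considered at step $\ell$. Without fixing the whole path, the event ``$b_\ell$ is selected at step $\ell$'' would be a union over all admissible prior configurations; freezing $b_{1:(\ell-1)}$ collapses each step into a single deterministic comparison, which is what makes the total event polyhedral rather than a union of polyhedra.

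Given this, I would characterize the event step by step. At step $\ell$, write $G_\ell(y)$ for the (linear) CUSUM functional of the chosen triple and $g_i^T y$, $i=1,\ldots,n-\ell-1$, for those of the competing candidates. The BS decision at step $\ell$ is exactly the pair of conditions $\sign(G_\ell(y)) = d_\ell$ and $|G_\ell(y)| \geq |g_i^T y|$ for every competitor $i$. Using the sign condition to replace $|G_\ell(y)|$ by $d_\ell G_\ell(y)$, each comparison becomes $d_\ell G_\ell(y) \geq |g_i^T y|$, which I would split into the two homogeneous linear inequalities $d_\ell G_\ell(y) - g_i^T y \geq 0$ and $d_\ell G_\ell(y) + g_i^T y \geq 0$. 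The sign condition itself then costs nothing: any single competitor inequality already forces $d_\ell G_\ell(y) \geq |g_i^T y| \geq 0$, so it contributes no additional row.

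The count follows from a length argument. The $\ell-1$ estimated changepoints partition $\{1,\ldots,n\}$ into $\ell$ intervals whose lengths $L_1,\ldots,L_\ell$ sum to $n$, and an interval of length $L$ offers $L-1$ admissible split points (length-$1$ intervals being discarded), so the number of candidate positions at step $\ell$ is $\sum_j (L_j - 1) = n - \ell$. Removing the chosen position leaves $n-\ell-1$ competitors, each contributing two rows, for $2(n-\ell-1)$ rows at step $\ell$; stacking over $\ell = 1,\ldots,k$ yields a matrix $\Gamma$ with $2\sum_{\ell=1}^k (n-\ell-1)$ rows for which the selection event equals $\{y : \Gamma y \geq 0\}$. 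Every inequality being homogeneous, the set is automatically a convex cone, as claimed. (One should assume the prescribed configuration is realizable, i.e.\ each $b_\ell$ lies strictly inside one of the current intervals; otherwise the event is empty.)

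I expect the main obstacle to be organizational rather than mathematical: keeping the bookkeeping of the frozen candidate sets and their indices consistent across the $k$ sequential steps, and verifying carefully that the sign decision is genuinely subsumed by the magnitude comparisons so that no stray rows inflate the count. The underlying linear algebra is routine once the linearity of $g_{(s,b,e)}^T y$ and the argmax-as-inequalities reformulation are in hand.
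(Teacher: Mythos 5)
Your proposal is correct and follows essentially the same route as the paper's own proof: freeze the whole selected path $\{b_{1:k},d_{1:k}\}$ so each step's candidate set is deterministic, encode each argmax decision as two homogeneous linear inequalities per competitor (with the sign condition absorbed, since $d_\ell G_\ell(y) \geq |g_i^T y| \geq 0$), and count $n-\ell-1$ competitors at step $\ell$ by summing interval lengths. The only difference is bookkeeping---you count candidates globally per step, while the paper splits the count between the maximizing interval ($2(|I_k|-2)$ rows) and the remaining intervals ($2(|I_\ell|-1)$ rows each)---and these yield the identical total.
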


\begin{proof}  When $k=1$, $2(n-2)$ linear inequalities
  characterize the single changepoint model $\{b_1, d_1\}$: 
\[
d_1 \cdot g^T_{(1, b_1, n)} y \geq g^T_{(1,b,n)} y, 
\quad \text{and} \quad 
d_1 \cdot g^T_{(1, b_1, n)} y \geq -g^T_{(1,b,n)} y, 
\quad 
b \in \{1,\ldots, n-1\} \backslash \{ b_1\}. 
\]
Now by induction, assume we have constructed a polyhedral representation of the
selection event up through step $k-1$.  All that remains is to characterize the
$k$th estimated changepoint and direction $\{b_k, d_k\}$ by inequalities that
are linear in $y$.  This can be done with $2 (n-k-1)$
inequalities. To see this, assume without a loss of generality that the maximizing
interval is $j_k=k$; then $\{b_k,d_k\}$ must satisfy the $2 (|I_k|-2)$
inequalities  
\[
 d_k \cdot g^T_{(s_k, b_k, e_k)} y \geq g^T_{(s_k, b, e_k)} y  
 \quad \text{and} \quad
 d_k \cdot g^T_{(s_k, b_k, e_k)} y \geq -g^T_{(s_k, b, e_k)} y,
 \quad 
 b \in \{s_k, \ldots, e_k - 1\} \backslash \{b_k\}.
\]
For each interval $I_\ell$, $\ell=1,\ldots,k-1$, we also have $2 (|I_\ell|-1)$ 
inequalities 
\[
 d_k \cdot g^T_{(s_k, b_k, e_k)}
 y \geq g^T_{(s_\ell, b, e_\ell)} y  \quad\text{and}\quad
  d_k \cdot g^T_{(s_k, b_k, e_k)}
 y \geq -g^T_{(s_\ell, b, e_\ell)} y,
 \quad
 b \in \{s_\ell, \ldots, e_\ell- 1\}.
\]
The last two displays together completely determine $\{b_k,d_k\}$, and as 
\smash{$\sum_{\ell=1}^k |I_\ell| = n$}, we get our desired total of 
$2 (n-k-1)$  inequalities.  
\end{proof}

\paragraph{\textbf{Selection event for WBS.}}  We define the model of the $k$-step WBS estimator as  
\[
M^{\mathrm{WBS}}_{1:k}(y_\obs, w) = \big\{\hat b_{1:k}(y_\obs), \; \hat d_{1:k}(y_\obs),\; \hat
j_{1:k}(y_\obs) \big\},
\]
where $w$ is the set of $B$ intervals that the
algorithm uses, \smash{$\hat b_{1:k}(y_\obs)$} and \smash{$\hat d_{1:k}(y_\obs)$} are the
changepoint locations and directions, and \smash{$\hat j_{1:k}(y_\obs)$} are the
maximizing intervals.    

\begin{proposition}
\label{prop:wbs-polyhedral-event}
\textit{Given any fixed $k \geq 1$, and $\{w,b_{1:k},d_{1:k},j_{1:k}\}$, we can explicitly
construct $\Gamma$ where
\[
\big\{y : M_{1:k}^{\mathrm{WBS}}(y, w) = \{b_{1:k},d_{1:k},j_{1:k}\} \big\} = 
\big\{ y : \Gamma y \geq 0 \big\}.
\]
The number of rows in $\Gamma$ will vary depending on the configuration of $w$
and $b_{1:k}$, but if each of the $B$ intervals in $w$ has length $p$, it will be at most 
$2\sum_{\ell=1}^{k}((B-\ell)\cdot(p-1) + (p-2))$.}
\end{proposition}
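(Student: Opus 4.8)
The plan is to mimic the inductive argmax-to-inequalities strategy of the BS proof (Proposition \ref{prop:bs-polyhedral-event}), exploiting that the (modified) CUSUM statistic $g^T_{(s,b,e)}y$ is linear in $y$ for fixed endpoints. The only structural differences from BS are that the intervals over which the CUSUM is maximized are the fixed random intervals of $w$, restricted to the admissible set $J_\ell$, rather than the segments of an evolving data-driven partition, and that the model additionally records the maximizing interval index $\hat j_\ell$. The crucial point is that once we condition on the entire model $\{b_{1:k},d_{1:k},j_{1:k}\}$ together with $w$, the admissible set $J_\ell$ at each step---the indices of intervals in $w$ disjoint from $b_{1:(\ell-1)}$---is completely determined and does not depend on $y$. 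Hence every comparison below has coefficients that are fixed functions of the conditioned model, so each inequality is genuinely linear (and in fact homogeneous) in $y$, which also gives the conic form $\{y:\Gamma y\ge 0\}$.

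First I would set up the induction on the step index $\ell$. Suppose the selection event through step $\ell-1$ has been written as $\{y:\Gamma^{(\ell-1)}y\ge 0\}$. The requirement that step $\ell$ produce $\{j_\ell,b_\ell,d_\ell\}$ is exactly that $(j_\ell,b_\ell)$ attains $\max_{j\in J_\ell,\,b}|g^T_{(s_j,b,e_j)}y|$ with maximizing sign equal to $d_\ell$. As in BS, I encode this by the pairs of linear inequalities
\[
d_\ell\, g^T_{(s_{j_\ell},b_\ell,e_{j_\ell})}y \;\ge\; g^T_{(s_j,b,e_j)}y
\quad\text{and}\quad
d_\ell\, g^T_{(s_{j_\ell},b_\ell,e_{j_\ell})}y \;\ge\; -\,g^T_{(s_j,b,e_j)}y,
\]
ranging over all admissible candidates $(j,b)$ with $j\in J_\ell$, $b\in\{s_j,\ldots,e_j-1\}$ other than the winner. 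Taken together these force $d_\ell\, g^T_{(s_{j_\ell},b_\ell,e_{j_\ell})}y\ge |g^T_{(s_j,b,e_j)}y|\ge 0$, so they simultaneously certify that $(j_\ell,b_\ell)$ is the argmax and pin the sign $\hat d_\ell=d_\ell$; no separate sign inequality is needed. Stacking these with $\Gamma^{(\ell-1)}$ and iterating to $\ell=k$ yields the claimed representation.

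For the row count, at step $\ell$ the number of candidate pairs is $\sum_{j\in J_\ell}(e_j-s_j)$, which equals $|J_\ell|(p-1)$ when every interval of $w$ has length $p$; excluding the winner and doubling for the two-sided comparisons gives $2\big(|J_\ell|(p-1)-1\big)$ inequalities. The key step is bounding $|J_\ell|$. Here I would observe that the maximizing interval $w_{j_{\ell'}}$ selected at any earlier step $\ell'<\ell$ contains its own changepoint $b_{\ell'}$ (since $s_{j_{\ell'}}\le b_{\ell'}<e_{j_{\ell'}}$), so it intersects $b_{1:(\ell-1)}$ and is therefore excluded from $J_\ell$; moreover these previous maximizing intervals are pairwise distinct, because $w_{j_{\ell'}}$ is already absent from $J_{\ell''}$ for $\ell''>\ell'$ while $j_{\ell''}\in J_{\ell''}$. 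Thus $\ell-1$ distinct intervals are purged, yielding $|J_\ell|\le B-(\ell-1)=B-\ell+1$. Substituting this gives $2\big((B-\ell)(p-1)+(p-2)\big)$ rows at step $\ell$, and summing over $\ell=1,\ldots,k$ produces the stated total $2\sum_{\ell=1}^{k}\big((B-\ell)(p-1)+(p-2)\big)$.

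I expect the main obstacle to be precisely this counting step---verifying that the previously selected maximizing intervals are distinct and are all removed from $J_\ell$, which is what sharpens the naive bound $|J_\ell|\le B$ into $B-\ell+1$ and makes the row count an honest upper bound. The polyhedral (conic) structure itself is immediate from linearity once $J_\ell$ is seen to be $y$-independent after conditioning on $\{w,b_{1:k},d_{1:k},j_{1:k}\}$; the remaining care is only bookkeeping, noting that any other configuration of $w$ and $b_{1:k}$ can only shrink $J_\ell$, so the equal-length-$p$ case dominates and the bound holds as an inequality.
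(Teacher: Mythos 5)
Your proof is correct and follows essentially the same route as the paper's: the BS-style inductive construction with the argmax encoded as paired linear inequalities, now ranging over the fixed intervals of $w$ indexed by $J_\ell$, followed by the count $2\big((B-\ell)(p-1)+(p-2)\big)$ rows per step. In fact you supply more detail than the paper does---the paper simply asserts that at most $B-\ell+1$ intervals remain at step $\ell$, whereas you justify it by noting that the earlier maximizing intervals each contain their own changepoint and are pairwise distinct, hence all $\ell-1$ of them are purged from $J_\ell$.
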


The proof of \Fref{prop:wbs-polyhedral-event} is only slightly more complicated 
than that of \Fref{prop:bs-polyhedral-event}, and is deferred until Appendix 
\ref{sec:proofs}. Note that unlike BS, the maximizing intervals $\hat j_{1:k}$ are part of WBS's model.

\paragraph{\textbf{Selection event for CBS.}} Finally, we define the model for the $k$-step CBS 
estimator
as
\[
M^{\mathrm{CBS}}_{1:k}(y_\obs) = \big\{\hat a_{1:k}(y_\obs), \; \hat b_{1:k}(y_\obs),\; \hat
d_{1:k}(y_\obs) \big\},
\]
where now \smash{$\hat a_{1:k}(y_\obs)$} and \smash{$\hat b_{1:k}(y_\obs)$} are the pairs
of estimated changepoint locations, and \smash{$\hat d_{1:k}(y_\obs)$} are the
changepoint directions, as described in Section \ref{sec:algorithms}.    

\begin{proposition}
\label{prop:cbs-polyhedral-event}
\textit{Given any fixed $k \geq 1$ and $\{a_{1:k},b_{1:k},d_{1:k}\}$, we can 
explicitly construct $\Gamma$ where
\[
\big\{y : M_{1:k}^{\mathrm{CBS}}(y, w) = \{a_{1:k},b_{1:k},d_{1:k}\} \big\} = 
\big\{ y : \Gamma y \geq 0 \big\}.
\]
Let $I_j^{(\ell)}$ denote the 
$j$th interval formed and $j_\ell$ be the selected interval defined in \eqref{eq:cbs-opt-prob}
for an intermediate step $\ell \in \{1,\ldots,k\}$, 
and let $C(x,2) = {x \choose 2}$.
Then $\Gamma$ has a number of rows equal to
\[
2 \sum_{\ell = 1}^{k} \Big[C(|I^{(\ell)}_{j_k}| - 1, 2) -1  + \sum_{j' \neq j_k} C(|I^{(\ell)}_{j'}| -1 , 2)\Big].
\]}
\end{proposition}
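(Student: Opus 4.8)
The plan is to mirror the inductive polyhedral construction used for BS in \Fref{prop:bs-polyhedral-event}, the only essential change being that each CBS step selects a \emph{pair} of split points within the maximizing interval rather than a single split point. The key structural fact is that, once the tuple $(s,a,b,e)$ is fixed, the modified CUSUM $g_{(s,a,b,e)}^T y$ in \eqref{eq:cbs-g-fun} is a fixed homogeneous linear functional of $y$: it is a scalar multiple of a difference of two segment averages, and the index sets over which those averages are taken are determined entirely by $(s,a,b,e)$. Conditioning on $M_{1:k}^{\mathrm{CBS}}(y)=\{a_{1:k},b_{1:k},d_{1:k}\}$ fixes, at every step $\ell$, the partition $\{I^{(\ell)}_{j}\}$ (a deterministic function of the previously selected pairs $a_{1:(\ell-1)},b_{1:(\ell-1)}$), the maximizing interval $j_\ell$, the winning pair $(a_\ell,b_\ell)$, and the direction $d_\ell$; hence every CUSUM appearing in the argmax at step $\ell$ is a fixed linear functional of $y$.

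First I would treat the base case $k=1$, where the partition is the single interval $\{1,\dots,n\}$ and the selection event states exactly that the signed winning CUSUM dominates every competitor in absolute value. As in the BS proof, for each competing pair $(a,b)$ the two inequalities
\[
d_1\, g^T_{(1,a_1,b_1,n)} y \;\ge\; g^T_{(1,a,b,n)} y
\quad\text{and}\quad
d_1\, g^T_{(1,a_1,b_1,n)} y \;\ge\; -\,g^T_{(1,a,b,n)} y
\]
together encode both $|g^T_{(1,a_1,b_1,n)}y|\ge |g^T_{(1,a,b,n)}y|$ and $\sign(g^T_{(1,a_1,b_1,n)}y)=d_1$, since the first conclusion already forces the signed winner to be nonnegative and therefore equal to its own absolute value. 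For the inductive step I would assume the event through step $k-1$ has been written as $\{y:\Gamma' y\ge 0\}$, and then append, for step $k$, the analogous pair of inequalities for every competitor $(j',a,b)$, with $j'$ ranging over the current partition and $a<b$ ranging over $\{s_{j'},\dots,e_{j'}-1\}$, each compared against the fixed functional $d_k\, g^T_{(s_{j_k},a_k,b_k,e_{j_k})}y$. Every such inequality is linear and homogeneous in $y$, so stacking all of them produces a single matrix $\Gamma$ with $\{y:\Gamma y\ge 0\}$ equal to the selection event, a convex cone.

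The counting then follows by bookkeeping. An interval $I^{(\ell)}_{j}$ offers $\{s_j,\dots,e_j-1\}$, i.e.\ $|I^{(\ell)}_{j}|-1$ positions, as the common range for $a$ and $b$, so it contributes $C(|I^{(\ell)}_{j}|-1,2)=\binom{|I^{(\ell)}_{j}|-1}{2}$ candidate pairs; note that an interval of length $2$ contributes $C(1,2)=0$, correctly reflecting that such intervals are discarded. Within the maximizing interval $j_\ell$ we exclude the winning pair itself, leaving $C(|I^{(\ell)}_{j_\ell}|-1,2)-1$ competitors, while every other interval $j'\ne j_\ell$ contributes all $C(|I^{(\ell)}_{j'}|-1,2)$ of its pairs. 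Multiplying by $2$ for the two signs and summing over $\ell=1,\dots,k$ reproduces exactly the stated row count.

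The part requiring the most care is not the linear-inequality construction, which is a direct analogue of the BS argument, but rather verifying that the competitor set at each step is enumerated precisely: one must track that the partition at step $\ell$ consists of $2(\ell-1)+1$ intervals, that the admissible range for a pair is $\{s_j,\dots,e_j-1\}$ so the relevant combinatorial count is $\binom{|I|-1}{2}$ and not $\binom{|I|}{2}$, and that only the single winning pair, not an entire interval, is removed from the maximizing interval. Once this enumeration is pinned down, the homogeneous linearity of $g_{(s,a,b,e)}^T y$ makes both the polyhedral (conic) claim and the row count immediate.
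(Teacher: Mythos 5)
Your proposal is correct and follows essentially the same route as the paper's proof: the same induction on the step index, the same two homogeneous linear inequalities per competing pair $(r,t)$ (one per sign) compared against the signed winning statistic $d_\ell\, g^T_{(s,a_\ell,b_\ell,e)}y$, and the same pair-counting $C(|I|-1,2)$ per interval with the single winning pair excluded from the maximizing interval. If anything, your writeup is slightly more careful than the paper's in stating that only the winning pair $(a_\ell,b_\ell)$ is excluded (the paper's condition ``$r\neq a_1$ and $t\neq b_1$'' is loosely worded) and in explaining why the two-sided inequalities also pin down the direction $d_\ell$.
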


The proof of \Fref{prop:cbs-polyhedral-event} is only slightly more complicated 
than that of \Fref{prop:bs-polyhedral-event}, and is deferred until Appendix 
\ref{sec:proofs}. 

\paragraph{\textbf{Selection events for FL, and a brief comparison.}}
The model for the $k$-step FL estimator is:
\[
  M^{\mathrm{FL}}_{1:k}(y_\obs) = \big\{ \hat b_{1:k}(y_\obs), \; \hat d_{1:k}(y_\obs) , \; \hat
  R_{1:k}(y_\obs)\big\},
\]
where \smash{$\hat b_{1:k}(y)$} and \smash{$\hat d_{1:k}(y)$} are changepoint
locations and directions, and
$\smash{\hat R_{\ell}(y)\in\R^{n-\ell}, \ell=1,\ldots,k}$ whose elements
represent signs of a certain statistic $h_i(y)$ calculated at location $i$ in
competition for maximization with $\hat b_\ell$ at step $\ell$. These statistics
$h_i(y)$ are weighted mean differences at location $i$ and are analogous to
CUSUM statistics in BS.  \cite{hyun2018exact} make this representation more
explicit, proving that for any fixed $k \geq 1$ and $b_{1:k},d_{1:k}, R_{1:k}$,
we can explicitly construct $\Gamma$ such that
\[
  \big\{y : M_{1:k}^{\mathrm{FL}}(y) = \{b_{1:k}, d_{1:k}, R_{1:k} \} \big\} =
  \{ y : \Gamma y \geq 0 \},
\]
where $\Gamma$ has the same number of rows as a $k$-step BS event. 

\subsection{Computation of p-values}\label{sec:computation}

Given a precise description of the polyhedral selection event
$\{y : M(y) = M(y_\obs)\}$, we can describe the methods to compute the p-value,
i.e. the tail probability of the selective distributions described in
\Fref{sec:post-selection}.  Without loss of generality, all of our descriptions
will be specialized to testing the null hypothesis of $H_0: v^T\theta = 0$
against the one-sided alternative $H_1: v^T \theta>0$.  For
saturated model tests, this exact calculation has been developed in previous
work and we review it as it is relevant to our contributions on increasing its
power.  For selected model tests, an approximation was described in previous
work, but we develop a new hit-and-run sampler that has not been implemented
before.

\paragraph{\textbf{Saturated model tests: exact formulae.}}

As shown in \citet{lee2016exact} and \citet{tibshirani2016exact}, 
the saturated selective distribution \eqref{eq:selective-distribution-saturated} has
a particularly computationally convenient distribution when $Y$ is Gaussian and 
the model selection event $\{y : M(y)=M(y_\obs)\}$ is a polyhedral set in
$y$. In this case, the law of \eqref{eq:selective-distribution-saturated} is  
a {\it truncated Gaussian} (TG), whose truncation limits depend only on
\smash{$\Pi_v^\perp y_\obs$}, and can be computed explicitly.  
Its tail probability can be computed in closed form (without Monte Carlo sampling). 
That is, the probability that $v^TY \geq v^Ty_\obs$ under the law of \eqref{eq:selective-distribution-saturated} is exactly equal to
 \begin{equation}\label{eq:tg_statistic}
 (\Phi(\cV_{\text{up}}/\tau) - \Phi(v^Ty_\obs/\tau))/(\Phi(\cV_{\text{up}}/\tau) - \Phi(\cV_{\text{lo}}/\tau))
 \end{equation}
where $\Phi(\cdot)$ represents the standard Gaussian CDF, 
$\tau = \sigma^2 \|v\|^2_2$, $\rho = \Gamma v /\|v\|^2_2$ and 
\begin{equation} \label{eq:vlo_vup}
\cV_{\text{lo}} = v^Ty_\obs - \min_{j: \rho_j > 0}
\big(\Gamma y_\obs \big)_j/\rho_j,\quad \text{and}\quad
\cV_{\text{up}} =  v^Ty_\obs - \max_{j: \rho_j < 0}
\big(\Gamma y_\obs \big)_j/\rho_j.
\end{equation}
This above equation is commonly referred as the TG statistic.
Since this statistic is a pivot, it is the p-value used for the saturated model test.

\paragraph{\textbf{Selected model tests: hit-and-run sampling.}}

To compute the p-value for selected model tests,
\cite{fithian2015selective} proposed a hit-and-run
strategy for sampling from the distribution for the known $\sigma^2$ setting,
\eqref{eq:selective-distribution-selected-known-sigma}.
This was implemented by the authors, and we briefly review the details in Appendix  \ref{app:known_sigma}.
For the unknown $\sigma^2$ setting, \cite{fithian2014optimal} 
suggested an importance sampling strategy for sampling the distribution
\eqref{eq:selective-distribution-selected-unknown-sigma}.
However, we find that an intuitive hit-and-run strategy can be adapted to the unknown $\sigma^2$
setting and implement this as a new algorithm. 

Given a changepoint $j = 1,\ldots, k$, observe that 
we can design a segment test contrast $v$ where
sampling from
\eqref{eq:selective-distribution-selected-unknown-sigma} is equivalent to 
sampling uniformly from the set
 \begin{equation} \label{eq:selected_set}
 \Big\{ v^TY: M(Y) = M(y_\obs),\;  \|Y\|_2 = \|y_{\text{obs}}\|_2, \; \bar{Y}_{(\hat{c}_{\ell}+1):\hat{c}_{\ell+1}} = \bar{y}_{\obs, (\hat{c}_{\ell}+1):\hat{c}_{\ell+1}} \ell \neq j
 \Big\}.
\end{equation}
Note that the above set no longer depends on $\theta$ or $\sigma^2$.
This is because we conditioned all the relevant sufficient statistics under the
selected model.
Our hit-and-run sampler then sequentially draws samples $v^TY$ from the above set.
For notational convenience, observe that the last $k$ constraints in \eqref{eq:selected_set} 
can be rewritten as $AY = Ay_{\text{(obs)}}$ for some matrix $A \in \R^{k \times n}$.
Our new hit-and-run algorithm is then shown in \Fref{alg:hitandrun}.  

\begin{algorithm}[t]
Choose a number $M$ of iterations.\\
Set $y^{(0)} = y_\obs$.\\
 \For{$m \in \{1,\ldots,M\}$}{
Uniformly sample two unit vectors $s$ and $t$ in the
nullspace of $A$.\\
Compute the set $\mathcal{I} \subseteq [-\pi/2, \pi/2]$
that intersects the set 
\begin{equation*} 
\Big\{ y\; : \; y = y^{(m-1)} + r(\omega)\sin(\omega) \cdot s + r(\omega) \cos(\omega)\cdot t \quad \text{for any }\omega \in [-\pi/2,\pi/2]\Big\},
\end{equation*}
for the radius function $r(\omega) = -2(y^{(m-1)})^T(\sin(\omega)\cdot s + \cos(\omega)\cdot t)$,
with the polyhedral set implied by the selected model $M(y_\obs)$ based on \Fref{sec:polyhedra}.\\
Uniformly sample $\omega^{(m)}$ from $\mathcal{I}$ and form the next sample
\[
y^{(m)} = y^{(m-1)} + r(\omega^{(m)})\sin(\omega^{(m)})\cdot s + r(\omega) \cos(\omega^{(m)})\cdot t.
\]
}
Return the approximate for the tail probability of \eqref{eq:selective-distribution-selected-unknown-sigma},
$
\sum_{m=1}^{M} \one[v^Ty^{(m)} \geq v^Ty_\obs]/M.
$
 \caption{MCMC hit-and-run algorithm for selected model test with unknown $\sigma^2$} \label{alg:hitandrun}
\end{algorithm}

\subsection{Randomization and marginalization}
\label{sec:randomization}

We apply the ideas of randomization in \cite{randomized-selinf} that improve the
power of selective inference to changepoint algorithms and devise explicit
samplers.  We investigate two specific forms of randomization: randomization
over additive noise and randomization over random intervals.  We specialize the
following descriptions to saturated models.  We note that similar randomization
of selected model inferences is also possible but is doubly computationally
burdensome.

\paragraph{\textbf{Marginalization over additive noise.}}
\cite{randomized-selinf} shows that performing inference based on the selected
model $ M(y_\obs + w_\obs)$ where $w_\obs$ is additive noise
and then marginalizing over $W$ leads to improved power.
Here, $w_\obs$ is a realization of a random
component $W$ sampled from $\cN(0, \sigma_{\text{add}}^2I_n)$, where
$\sigma_{\text{add}}^2 > 0$ is set by the user.
\cite{fithian2014optimal} provides a mathematical basis for pursuing such randomization,
stating that less conditioning results in an increase in Fisher
information. 
For additive noise, the above model
selection event is:
$$\{ y : \Gamma(y + w_\obs) \geq 0\} = \{y : \Gamma y \geq -\Gamma w_\obs\}.$$
This means the new polyhedron formed by the model selection event based on
perturbed data $y_\obs + w_\obs$ is slightly shifted.

Porting the ideas of \cite{randomized-selinf} to our setting, to
test the one-sided null hypothesis $H_0: v^T\theta = 0$, we want to compute
the following tail probability of the marginalized selective distribution,
\begin{equation}\label{eq:cond-dist-addnoise-marg}
  T(y_\obs, v) = \mathbb{P}\bigg(v^TY \geq v^Ty_\obs ~\big|~ \Big( M(Y + W) = M(y_\obs + W), \; \Pi_v^\perp Y = \Pi_v^\perp y_{\text{obs}}\Big)\bigg).
 \end{equation}
It is hard to directly compute this.
 However, the formulas in \eqref{eq:tg_statistic} and \eqref{eq:vlo_vup} 
 give us
 exact formulas to compute the
 non-marginalized tail-probabilities,
 \begin{equation*}
   T(y_\obs, v, w_\obs) = \mathbb{P}\bigg(v^TY \geq v^Ty_\obs~\big|~ \Big(M(Y + W) = M(y_\obs + W), \;\Pi_v^\perp Y = \Pi_v^\perp y_{\text{obs}},\; W=w_{\text{obs}}\Big)\bigg).
 \end{equation*}
The following proposition shows that we can compute $T(y_\obs, v) $
 by reweighting instances of $T(y_\obs, v, w_\obs) $ via importance sampling.
 Here, let  $E_1 = \one[M(Y + W) = M(y_\obs + W)]$ and
 $E_2 = \one[\Pi_v^\perp Y = \Pi_v^\perp y_{\text{obs}}]$.

\begin{proposition}\label{prop:additive_noise}
\textit{Let $\Omega$ denote the support of the random component $W$.
If the distribution of $W$ is independent of the random 
event
$E_2$, 
\eqref{eq:cond-dist-addnoise-marg} can be exactly
computed as
\begin{equation}\label{eq:additive_noise}
T(y_\obs, v) = \int_{\Omega} T(y_\obs, v, w_\obs) \cdot a(w_\obs) \; dP_W(w_\obs) 
= \frac{\int_{\Omega} \Phi\big(\cV_{\text{up}}/\tau\big) - \Phi\big(v^Ty_\obs/\tau\big) \;dP_W(w_\obs)}{\int_{\Omega} \Phi\big(\cV_{\text{up}}/\tau\big) - \Phi\big(\cV_{\text{lo}}/\tau\big) \; dP_W(w_\obs)}. 
\end{equation}
where the weighting factor is
$a(w_\obs)= \mathbb{P}(W = w_\obs | E_1, E_2)/\mathbb{P}(W = w_\obs)$.}
\end{proposition}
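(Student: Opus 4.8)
The plan is to condition on a fixed value of the randomization first, reducing the marginalized quantity to the saturated-model truncated-Gaussian calculation already recorded in \eqref{eq:tg_statistic}--\eqref{eq:vlo_vup}, and then to integrate back over $W$ by the tower property. Throughout, every conditional probability is anchored on the null set $E_2 = \{\Pi_v^\perp Y = \Pi_v^\perp y_\obs\}$, so I read all such conditionals through regular conditional distributions, and I read $\mathbb{P}(W = w_\obs)$ in the weight $a$ as a density; the assumption $W \indep E_2$ is what will let these densities factor.

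First I would fix $W = w_\obs$. By the polyhedral representation of Section \ref{sec:polyhedra}, conditional on this value the event $E_1 = \{M(Y+W) = M(y_\obs+W)\}$ is the shifted polyhedron $\{y : \Gamma y \geq -\Gamma w_\obs\}$, so conditional on $W=w_\obs$ and $E_2$ the null law of $v^TY$ is a truncated Gaussian and the formulas \eqref{eq:tg_statistic}--\eqref{eq:vlo_vup} apply verbatim, the offset $-\Gamma w_\obs$ being absorbed into truncation limits $\cV_{\mathrm{lo}}, \cV_{\mathrm{up}}$ that now depend on $w_\obs$. Writing $N(w_\obs) = \Phi(\cV_{\mathrm{up}}/\tau) - \Phi(v^Ty_\obs/\tau)$ and $D(w_\obs) = \Phi(\cV_{\mathrm{up}}/\tau) - \Phi(\cV_{\mathrm{lo}}/\tau)$, this gives $T(y_\obs, v, w_\obs) = N(w_\obs)/D(w_\obs)$ and, crucially, identifies $D(w_\obs) = \mathbb{P}(E_1 \mid E_2, W=w_\obs)$ and $N(w_\obs) = \mathbb{P}(v^TY \geq v^Ty_\obs,\, E_1 \mid E_2, W=w_\obs)$ as the normalizer of the truncated law and its upper-tail piece.

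For the first equality I would apply the tower property over $W$, giving $T(y_\obs, v) = \int_\Omega T(y_\obs, v, w_\obs)\, dP_{W\mid E_1, E_2}(w_\obs)$, and then use Bayes' rule to recognize the Radon--Nikodym derivative $dP_{W\mid E_1, E_2}/dP_W(w_\obs) = \mathbb{P}(W=w_\obs \mid E_1, E_2)/\mathbb{P}(W = w_\obs) = a(w_\obs)$, which is exactly the stated weight. For the second (explicit) equality I would instead expand $T(y_\obs, v) = \mathbb{P}(v^TY \geq v^Ty_\obs,\, E_1 \mid E_2)/\mathbb{P}(E_1 \mid E_2)$ and integrate numerator and denominator over $W$ against $dP_{W\mid E_2}$; here the hypothesis $W \indep E_2$ lets me replace $dP_{W\mid E_2}$ by $dP_W$, turning the numerator into $\int_\Omega N(w_\obs)\,dP_W(w_\obs)$ and the denominator into $\int_\Omega D(w_\obs)\,dP_W(w_\obs)$, which is precisely \eqref{eq:additive_noise}. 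To confirm the two displays agree I would note that Bayes together with $W \indep E_2$ yields $\mathbb{P}(E_1, E_2 \mid W=w_\obs) = D(w_\obs)\,\mathbb{P}(E_2)$, so $a(w_\obs) = D(w_\obs)\mathbb{P}(E_2)/\mathbb{P}(E_1, E_2) \propto D(w_\obs)$; substituting $T(y_\obs, v, w_\obs)\,a(w_\obs) = N(w_\obs)\mathbb{P}(E_2)/\mathbb{P}(E_1, E_2)$ and $\mathbb{P}(E_1, E_2) = \mathbb{P}(E_2)\int_\Omega D\,dP_W$ collapses the weighted integral to the ratio of integrals.

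The hard part is not any single computation but the measure-theoretic bookkeeping: all the conditional probabilities are supported on the measure-zero event $E_2$ and must be given meaning via conditional densities, and the one place where the proposition's hypothesis actually enters is in replacing the $E_2$-conditional law of $W$ by its unconditional law $P_W$ (equivalently, in the factorization $\mathbb{P}(E_1,E_2\mid W=w_\obs) = D(w_\obs)\mathbb{P}(E_2)$). I would therefore phrase the argument in terms of the joint density of $(v^TY, \Pi_v^\perp Y, W)$ from the outset, so that each $\mathbb{P}(\cdot\mid E_2, W = w_\obs)$ is an honest conditional density and the interchange of integration and conditioning is justified by Fubini.
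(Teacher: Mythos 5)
Your proposal is correct and takes essentially the same route as the paper's proof: both condition on $W=w$ to invoke the truncated-Gaussian formulas \eqref{eq:tg_statistic}--\eqref{eq:vlo_vup} (identifying $\Phi(\cV_{\text{up}}/\tau)-\Phi(\cV_{\text{lo}}/\tau)$ as $\mathbb{P}(E_1 \mid E_2, W=w)$), recover the weight $a(w)$ via Bayes' rule, and use the independence of $W$ and $E_2$ exactly where you say it enters, namely to replace the $E_2$-conditional law of $W$ by $P_W$, arriving at the same ratio of integrals. The only difference is organizational: you obtain the second equality by applying the tower property separately to the numerator and denominator of $\mathbb{P}(v^TY \geq v^Ty_\obs,\, E_1 \mid E_2)/\mathbb{P}(E_1 \mid E_2)$, whereas the paper substitutes the closed forms of $T(y_\obs,v,w)$ and $g(w)$ into the weighted integral from the first equality; these manipulations are equivalent.
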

The first equality in \eqref{eq:additive_noise} demonstrates the reweighting of $T(y_\obs, v, w_\obs) $,
but the second equality
gives a sampling strategy where we approximate the integrals.
\Fref{alg:additive-importance-sampler} describes this, where for one realization $w_\obs$,
we let  $k(w_\obs)$ and $g(w_\obs)$ denote the integrand of the 
last term's numerator and denominator in \eqref{eq:additive_noise} respectively.
  
  \paragraph{\textbf{Marginalization over WBS intervals.}}
In contrast to the above setting where $W$ represents Gaussian noise,
in wild binary segmentation described in \Fref{sec:algorithms}, $W$ represents
the set of $B$ randomly drawn intervals. 
Observe that  \Fref{prop:additive_noise} still applies to this setting, where 
$M(y_\obs + w_\obs)$ is now replaced with $M(y_\obs, w_\obs)$,
as described in \Fref{sec:polyhedra}.
However, one
additional complication is that the maximizing intervals $\hat j_{1:k}$ in the model $M(y_\obs, w_\obs)$
are embedded in the construction of the matrix $\Gamma$ representing the
polyhedra. 
%
This prevents a naive resampling of all $B$ intervals.

We describe how to overcome this complication.
Let
 \smash{$\{W_{\hat j_1}, \ldots, W_{\hat j_k}\}$}  be the maximizing intervals. 
We resample all other intervals,
$W_\ell$ for \smash{$\ell \in \{1, \ldots, B\} \backslash \{\hat j_1, \ldots, \hat j_k\}$}.
Specifically, for each of such intervals $W_\ell = (s_\ell, \ldots, e_\ell)$,
$s_\ell$ and $e_\ell$ are sampled uniformly between $1$ to $n$ where $s_\ell < e_\ell$.
After all $B-k$ intervals are resampled, a check is performed to ensure that 
  $\{W_{\hat j_1}, \ldots, W_{\hat j_k}\}$ are still the maximizing intervals when WBS is
  applied again to $y_\obs$.
  The full algorithm is in \Fref{alg:wbs-importance-sampler}.


\hspace{-.5cm}\begin{minipage}[t]{6cm}
  \vspace{0pt}  
  \begin{algorithm}[H]
    \caption{Marginalizing over additive noise} \label{alg:additive-importance-sampler}
   Choose a number $T$ of trials.\\
   \For{$t \in \{1, \ldots, T\}$}{
  Sample the additive noise $w_j$ from $\cN(0, \sigma^2_{\text{add}}I_n)$.\\
  Compute $k(w_t)$ and $g(w_t)$.\\
 }
 Return the approximate for the tail probability \eqref{eq:additive_noise},
 \[
\frac{\sum_{t=1}^{T}k(w_t)}{\sum_{t=1}^{T}g(w_t)}.
 \]
  \end{algorithm}
\end{minipage}
\hspace{0.5cm}
\begin{minipage}[t]{9.5cm}
  \vspace{0pt}
  \begin{algorithm}[H]
    \caption{Marginalizing over random intervals}\label{alg:wbs-importance-sampler}
  Choose a number $T$ of trials.\\
      \For{$t \in \{1, \ldots, T\}$}{
  Sample  the non-maximizing intervals
  $w_\ell = (s_\ell, \ldots, e_\ell)$ for $\ell \in \{1, \ldots, B\} \backslash \{\hat j_{1:k}\}$ 
  where $s_\ell, e_\ell$ are uniformly drawn from 1 to $n$ and $s_\ell < e_\ell$.\\
  Check to see that $\{\hat j_{1:k}\}$ are still the indices of the maximizing intervals. If not,
  return to the previous step.\\
 Compute $k(w_t)$ and $g(w_t)$.\\
 }
 Return the approximate for the tail probability \eqref{eq:additive_noise},
 \[
\frac{\sum_{t=1}^{T}k(w_t)}{\sum_{t=1}^{T}g(w_t)}.
 \]
      \end{algorithm}
\end{minipage}

\section{Practicalities and extensions}
\label{sec:practicalities}



The above sections formalize the mechanisms to perform selective inference with
respect to the basic procedure highlighted in \Fref{sec:introduction}.  We now
briefly summarize the all the combination of choices that the user faces based
on the methods developed in the above sections and their practical impact.

\subsection{Practical considerations}
There are some practical choices that the user needs to make when implementing
the procedure. Here, we outline a few, each related with a key element of the
broader inference procedure.


\begin{itemize}
\item \textbf{Algorithm} (BS, WBS, CBS and FL): It is useful for the user to be
  able to compare algorithms.  CBS is specialized for pairs of changepoints, and
  WBS specializes in localized changepoint detection compared to BS. FL and BS
  have similar mechansims which sequentially admit changepoints by maximizing a
  statistic. However, BS has a simpler mechanism and a less complex selection
  event, potentially giving higher post-selection conditional power.

\item \textbf{Conditioning} (Plain or marginalized): Marginalizing over a source
  of randomness yields tests with higher power than plain inference, but at two
  costs: increased computational burden due to MCMC sampling being required, and
  worsened detection ability when using additive noise marginalization. Also,
  the marginalized p-values are subject to the sampling randomness, and the
  number of trials $T$ needed to reduce the p-values' intrinsic
  variability 
  scales with $\sigma^2_{\text{add}}$.

\item \textbf{Number of estimated changepoints $k$} (Fixed or data-driven):
  As currently described in \Fref{sec:algorithms}, the changepoint algorithms
  discussed in our paper require the user to pre-specify the number of estimated
  changepoints $k$.
  However, we can adopt local stopping rules from \cite{hyun2018exact} 
  to adaptively choose $k$.
  This variation 
  increases the complexity of the polyhedra compared to those in 
   \Fref{sec:polyhedra}, leading to
  lower statistical power than its fixed-$k$
  counterpart. This is shown in Appendix \ref{app:ic}.

\item \textbf{Assumed null model} (Saturated or selected): As mentioned in
  \Fref{sec:post-selection}, selected model tests are valid under a stricter set
  of assumptions but often yield higher power. 
  Computationally, saturated model tests are often simpler to perform
  than selected model tests due to the closed form expression of the tail probability.

\item \textbf{Error variance $\sigma^2$} (Known or unknown): Saturated model
  tests require $\sigma^2$ to be known. In practice, we need to estimate it
  in-sample from a reasonable changepoint mean fitted to the same data, or
  estimated out-of-sample on left-out data. Selected model tests have the
  advantage of not requiring knowledge of $\sigma^2$.

\end{itemize}

\subsection{Extensions}

As mentioned in \cite{hyun2018exact}, there are many practically-motivated
extensions to the baseline procedure mentioned in \Fref{sec:introduction}
to either improve power or
interpretability. We highlight these below. All
of these extensions will still give proper Type-I error control under the
appropriate null hypotheses.


\begin{itemize}
\itemsep-.5em 
\item \textbf{Designing linear contrasts}: The user can make many
types of contrast vectors $v$ to fit their analysis, in addition to the
  segment test contrasts \eqref{eq:segment-contrast}, as long as it
  measurable with respect to $M(y_\obs)$. 
  One example is the spike test from
  \citep{hyun2018exact} of single location mean changes. For CNV analysis, it
  could be useful to test regions between an adjacent pair of changepoints away
  from the immediately surrounding regions. 
  Also, a step-sign plot (a plot
  that shows the locations and direction of the changepoints, but not their magnitude)
  can help the user design contrasts \citep{hyun2018exact}.

\item \textbf{Post-processing the estimated changepoints}: Multiple detected
  changepoints too close to one another can hurt the power of segment tests. 
  Post-processing the estimated changepoints based on decluttering
  \citep{hyun2018exact} or filtering \citep{lin2017sharp} so the new set of
  changepoints are well-separated can lead to contrasts that yield higher
  power. We show empirical evidence of this improving power of the fused lasso,
  in Appendix \ref{app:unique-detection}.




\item \textbf{Pre-cutting}: We can also modify all the algorithms in
  \Fref{sec:algorithms} to start with an initial existing set of 
  changepoints. This is useful in CGH analyses, when it is not meaningful to
  consider segments that start in one chromosome and end in another.  By pooling
  information in this manner from separate chromosomal regions, the pre-cut
  analysis is an improvement over conducting separate analyses in individual
  chromosomes.  

\end{itemize}

\section{Simulations} \label{sec:simulation}

\subsection{Gaussian simulations}

In this section, we show simulation examples to demonstrate properties of the
segmentation post-selection inference tools presented in the current paper.  The
mean $\theta$ consists of two alternating-direction changepoints of size
$\delta$ in the middle as in \eqref{eq:middle-mutation}, chosen to be a
realistic example of mutation phenomena as observed in array CGH datasets
\citep{Snijders2001}. We vary the signal size $\delta \in (0,4)$, while
generating Gaussian data from a fixed noise level $\sigma^2=1$.

This is the \textit{duplication}
mutation scenario. The sample size $n=200$ is chosen
to be in the scale of the chromosomal data. An example of this synthetic dataset
can be seen in Figure \ref{fig:power-comparison-data}.


\begin{equation}\label{eq:middle-mutation}
  \hspace{-20mm}\textbf{Middle mutation:}\hspace{5mm}
  y_i \sim \cN(\theta_i, 1), \;\; 
  \theta_i = \begin{cases}
    \delta & \text{ if } 101\le i \le 140\\
    0 & \text{ if otherwise } \\
  \end{cases}
\end{equation}

\begin{figure}[ht]
  \centering
  \includegraphics[width=.5\linewidth]{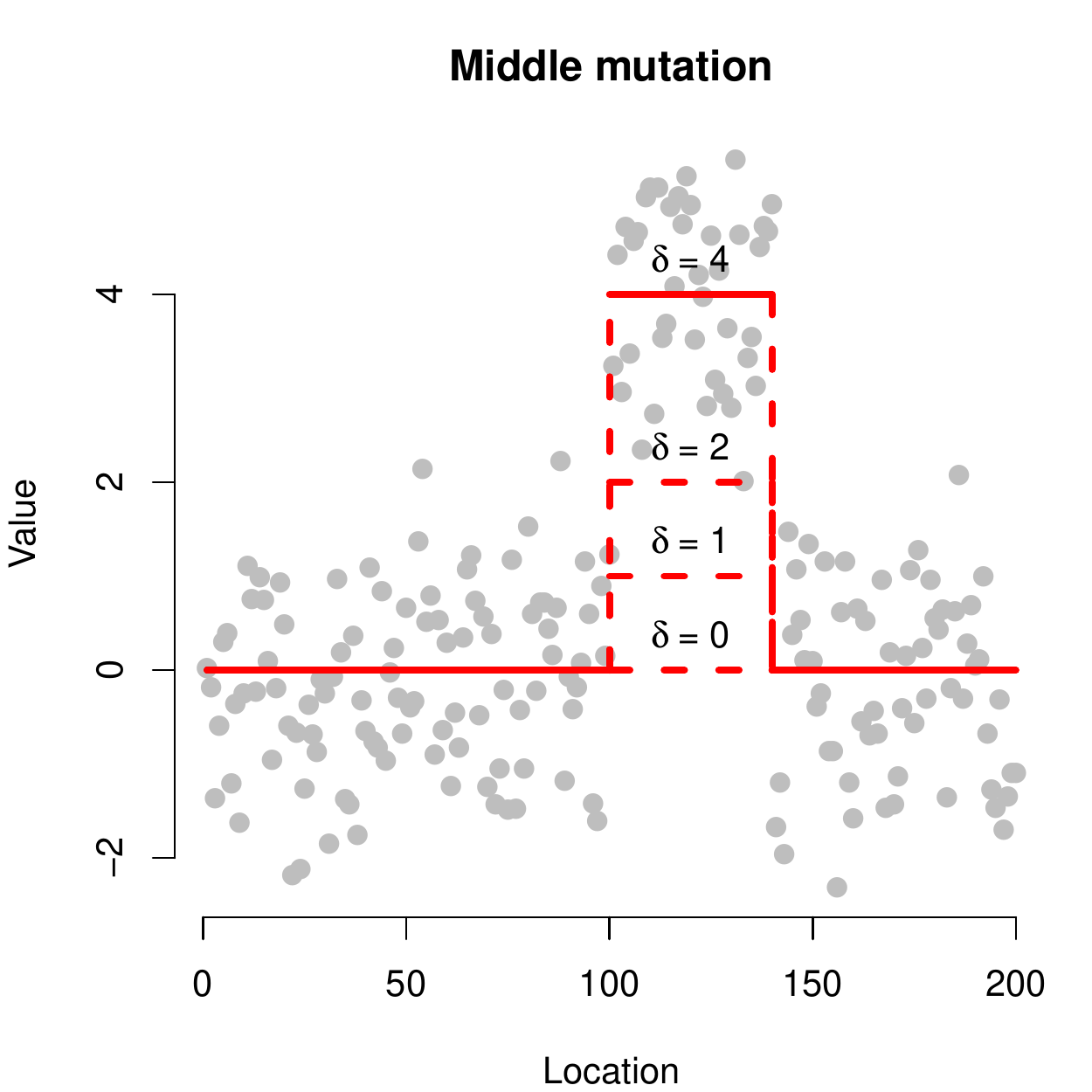}
  \caption{\it\small Example of simulated Gaussian data for middle mutation as
    defined in \eqref{eq:middle-mutation} with $\delta=4$, with data length
    $n=200$ and noise level $\sigma=1$. The possible mean vectors $\theta$ for
    $\delta = 0, 1, 2$ are also shown.}
  \label{fig:power-comparison-data}
\end{figure}

\paragraph{\textbf{Methodology.}}
In the following simulations, we consider the following four estimators (BS, WBS,
CBS and FL) each run for two steps. From each, we perform both saturated
and selected model tests. For the latter, we only include the results of BS
and FL for simplicity, for both settings of known and unknown noise parameter
$\sigma^2$. 
We use the basis procedure outlined in \Fref{sec:introduction} with
a significance level of $\alpha=0.05$. We verify the Type-I error control of our
methods next.  Throughout the entire simulation suite to come, the standard
deviation in each of the power curves and detection probabilities is less than
0.02. For each method, for each signal-to-noise size $\delta$, we run more than
250 trials.





\paragraph{\textbf{Type-I error control verification.}}
We examine all our statistical inferences under the global null
where $\theta = 0$ to
demonstrate their validity -- uniformity of null p-values, or type I error control. 
Specifically, any simulations from the no-signal regime $\delta=0$ from the 
middle mutation \eqref{eq:middle-mutation} can be
used. When there is no signal, the null scenario $v^T\theta=0$ is always true so we
expect all p-value to be uniformly distributed between 0 and 1.  
We verify this
expected behavior in \Fref{fig:null-dist}.
We notice that the methods that require MCMC (marginalized saturated and
selected model tests) requires more trials to converge towards
the uniform distribution compared to their counterparts that have exact calculations.

\begin{figure}
  \centering
    \includegraphics[width=.33\linewidth]{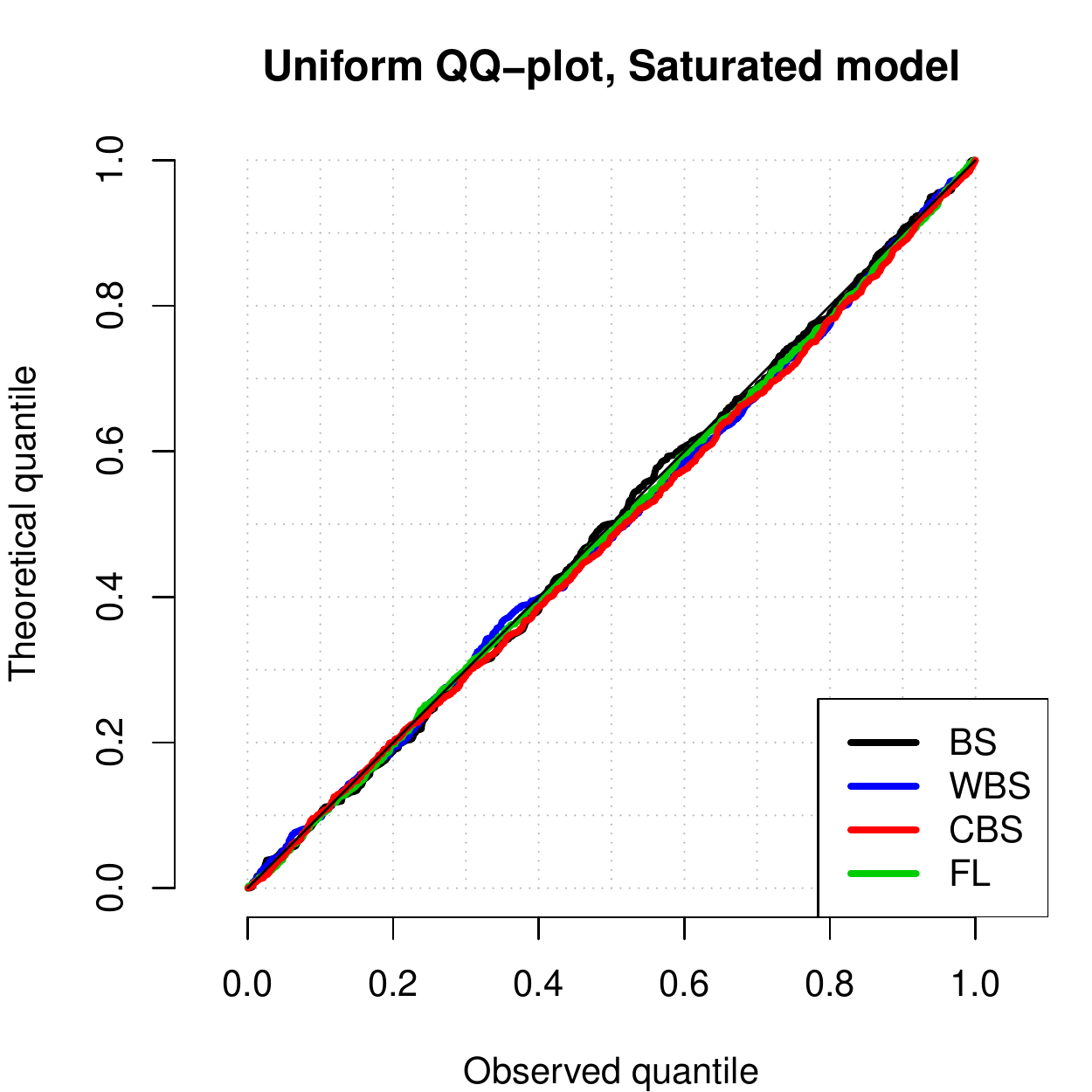}\hspace{-3mm} 
    \includegraphics[width=.33\linewidth]{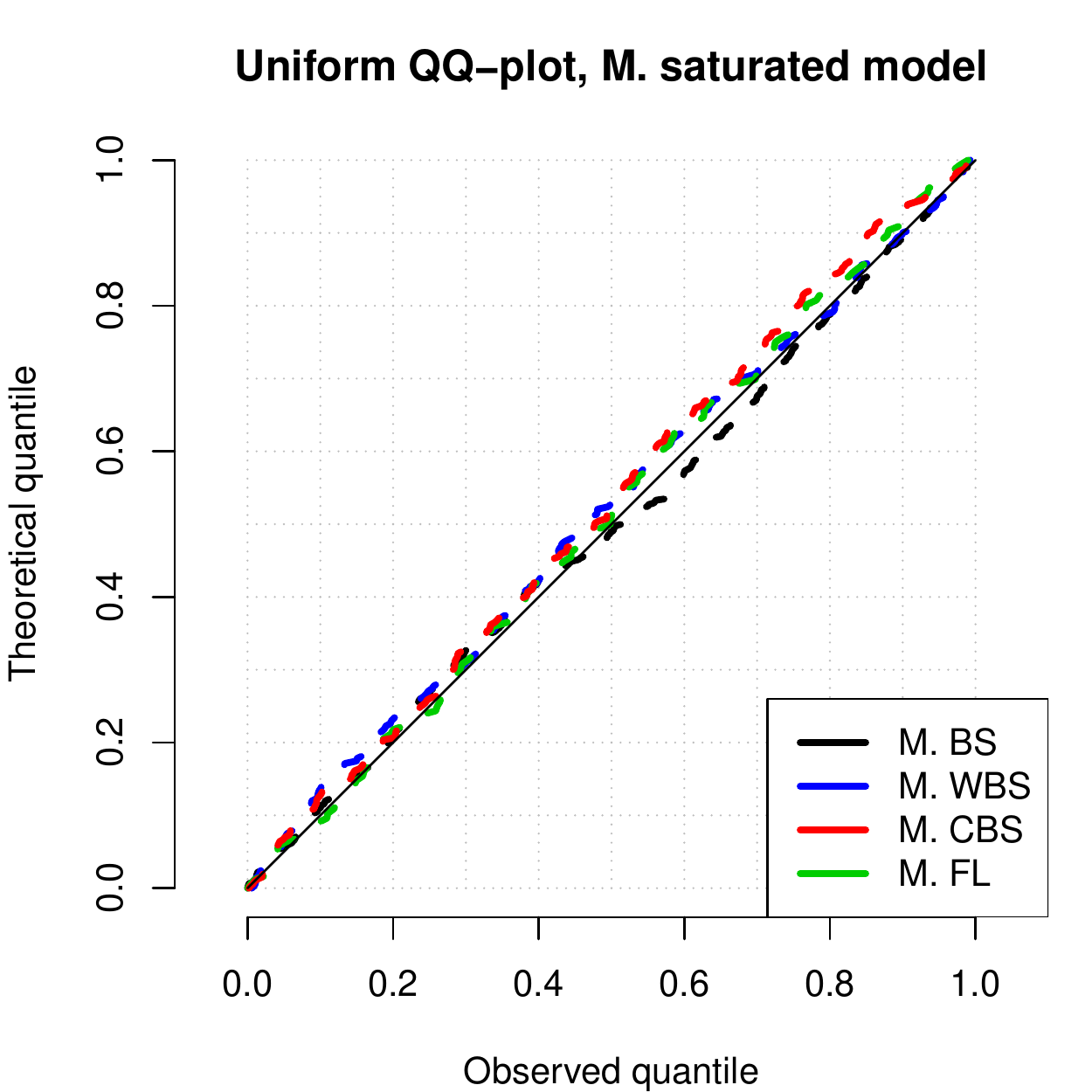}\hspace{-3mm}
    \includegraphics[width=.33\linewidth]{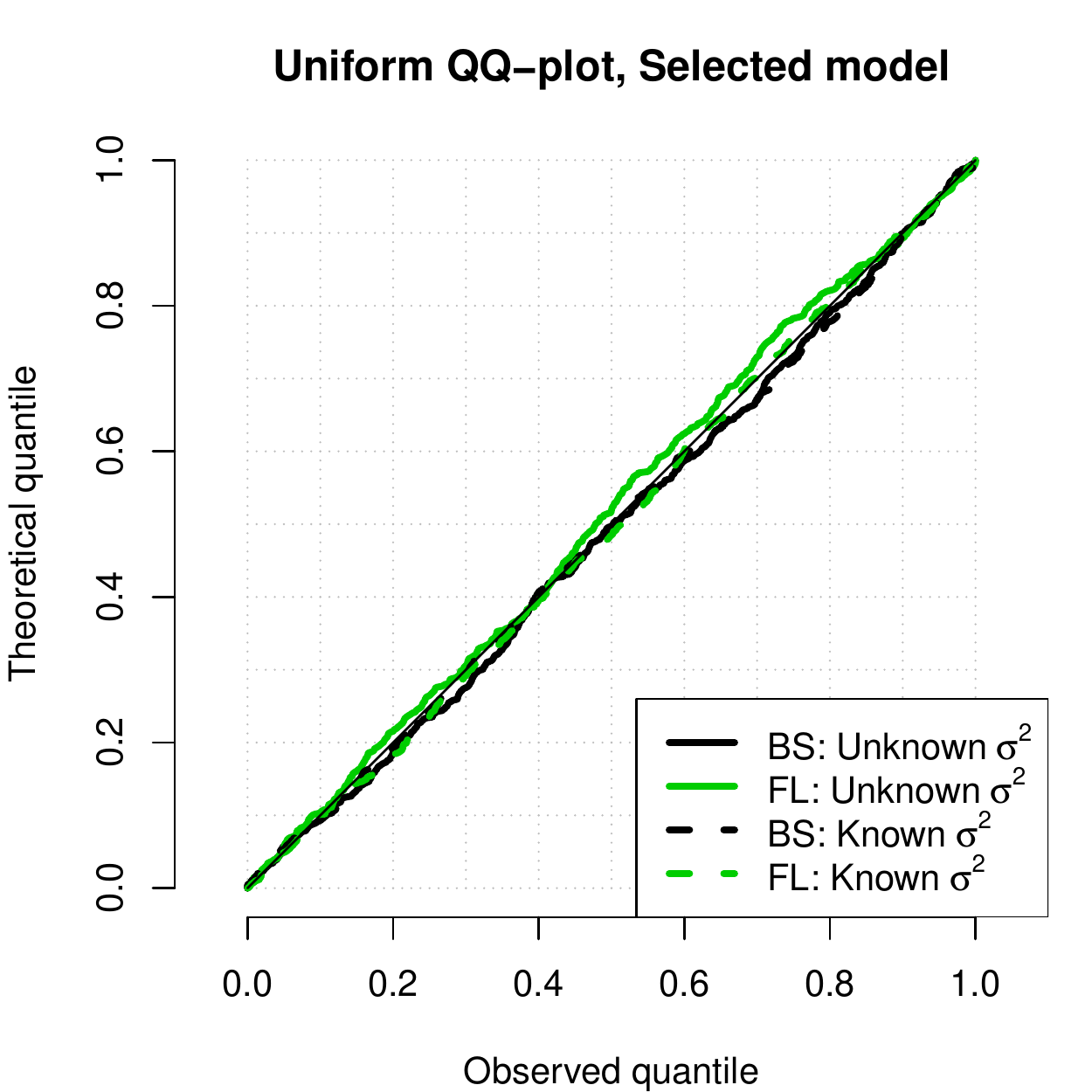}\hspace{-3mm}
    \caption{\it\small 
    All plots showing the p-values of various statistical inferences under the global null,
    with colors of lines given according to \Fref{fig:power-comparison} and \ref{fig:power-comparison-selected}.
    (Left): Saturated model tests, specifically BS (black), WBS (blue), CBS (red) and
    FL (green). (Middle): Marginalized variants of the left plot.
    (Right): Selected model tests, specifically BS (black) and FL (green), either
    with unknown $\sigma^2$ (solid) or known $\sigma^2$ (dashed).
      }
      \label{fig:null-dist}
\end{figure}

\paragraph{\textbf{Calculating power.}}
Since the tests are performed only when a changepoint is selected, it is necessary to
separate the detection ability of the estimator from power of the test. To
that end, we define the following quantities, 
\begin{align}
  \text{Conditional power} &=  \frac{\# \;\text{correctly detected \& rejected}}{ \#\; \text{correctly detected}}\label{eq:powdef2}\\
  \text{Detection probability} &=  \frac{\# \;\text{correctly detected}}{ \# \;\text{tests conducted}}\label{eq:powdef1}\\
  \text{Unconditional power} &=  \frac{\# \;\text{correctly detected \& rejected}}{\# \;\text{tests
                               conducted}} = \text{Detection} \times \text{Conditional power} \label{eq:powdef3}
\end{align}
The overall power of an inference tool can only be assessed by examining the
conditional and unconditional power together.
We consider a detection to be
correct if it is within $\pm 2$ of the true changepoint locations.

\paragraph{\textbf{Power comparison across signal sizes $\delta$.}}
For saturated model tests, we perform additive-noise inferences
using Gaussian $\cN(0,\sigma_{\text{add}}^2)$ with
$\sigma_{\text{add}}=0.2$ for BS, FL, and CBS. For WBS, we employ the randomization scheme as
described in \Fref{sec:randomization} with $B=n$. With the metrics in
\eqref{eq:powdef1}-\eqref{eq:powdef3}, we examine the performance of the four
methods. 
The solid lines in \Fref{fig:power-comparison} show the ``plain'' method where
model selection based on $M(y_\obs)$.
The dotted lines show the marginalized counterparts where
the model selection is $M(y_\obs, W)$, margnialized over $W$.

WBS and CBS have higher conditional and unconditional power than
BS. This is as expected since the former two are more adept for localized
change-points of alternating directions. FL noticeably under-performs in power
compared to segmentation methods. This is partially caused by FL's detection
behavior, and can be explained by examining alternative measures of detection
and improved with post-processing. This investigation is deferred to Appendix
\ref{app:unique-detection}. The marginalized versions of each algorithm have
noticeably improved power, but almost unnoticeably worse detection than their
non-randomized, plain versions (middle panel of
\Fref{fig:power-comparison}) 
. Combined, in
terms of unconditional power, marginalized inferences clearly dominate their
plain counterparts.
 
Selected model inference simulations are shown in
\Fref{fig:power-comparison-selected}. 
Surprisingly, there is an almost inconceivable
drop in power from unknown $\sigma^2$ to known $\sigma^2$.
Compared to the saturated model tests in \Fref{fig:power-comparison},
there is smaller power gap between FL and BS. Also, selected model tests appear
to have higher power than saturated model tests. In general however, it is hard
to compare the power of saturated and selected models due to the clear
difference in model assumptions.  



\paragraph{\textbf{Comparison with sample-splitting.}}

Sample splitting is another valid inference technique. After splitting the dataset 
in half based on even and odd indices, we run a changepoint algorithm on one dataset and
conduct classical one-sided t-test on the other. This is the most comparable
test, as it 
does not assume
$\sigma^2$ is known and conducts a one-sided test of the null
$H_0:v^T\theta=0$. Instead of $\pm 2$ slack used for calculating detection in
selective inference detection (dotted and dashed lines), $\pm 1$ was used for
sample splitting inference (solid line). The loss in detection accuracy in the
middle panel of \Fref{fig:samplesplit} shows the downside of halving data size
for detection. Unconditional power for marginalized saturated model tests and
selected model tests are noticeably higher than the other two. 


\begin{figure}
  \centering
    \includegraphics[width=.33\linewidth]{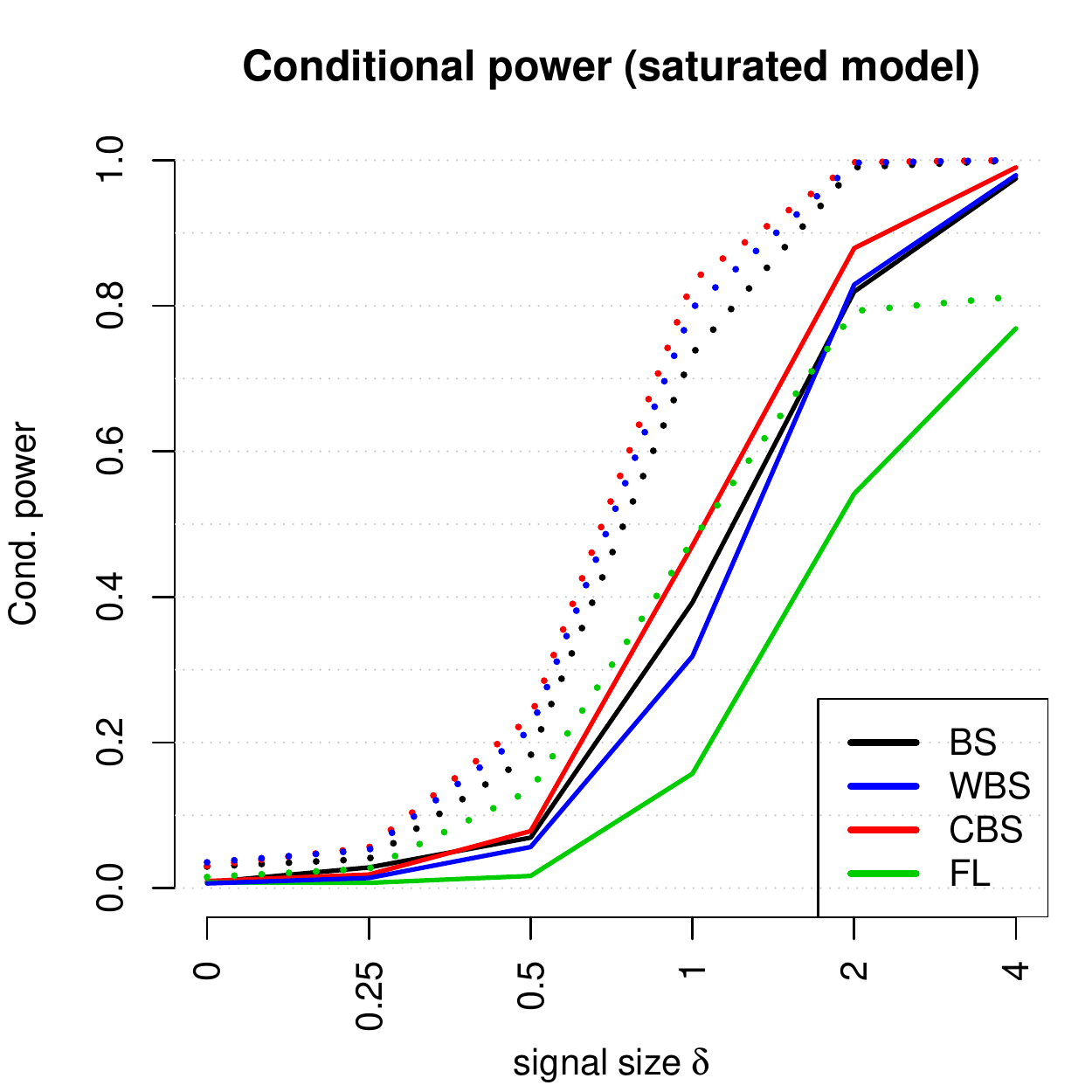}\hspace{-3mm}
    \includegraphics[width=.33\linewidth]{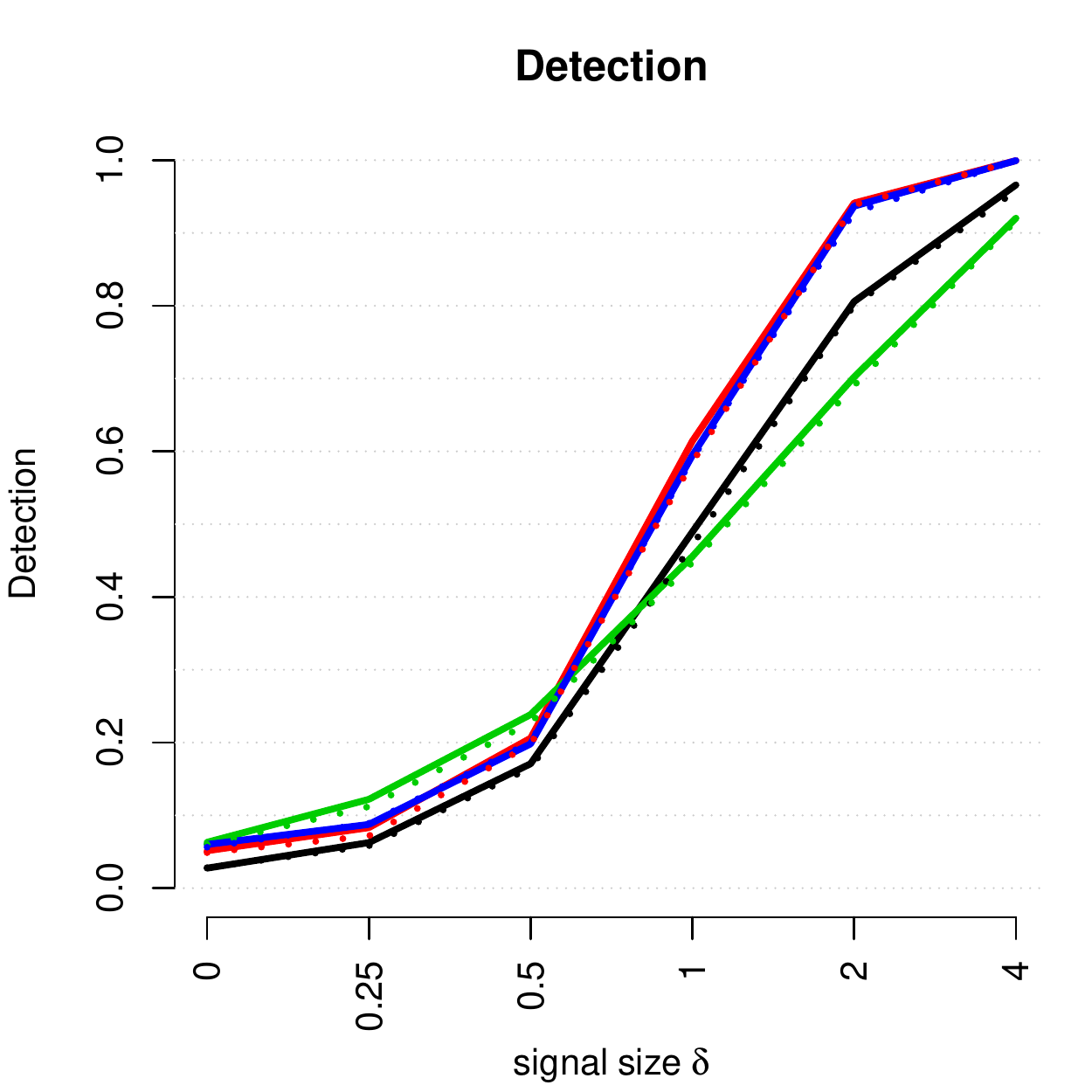}\hspace{-3mm}
    \includegraphics[width=.33\linewidth]{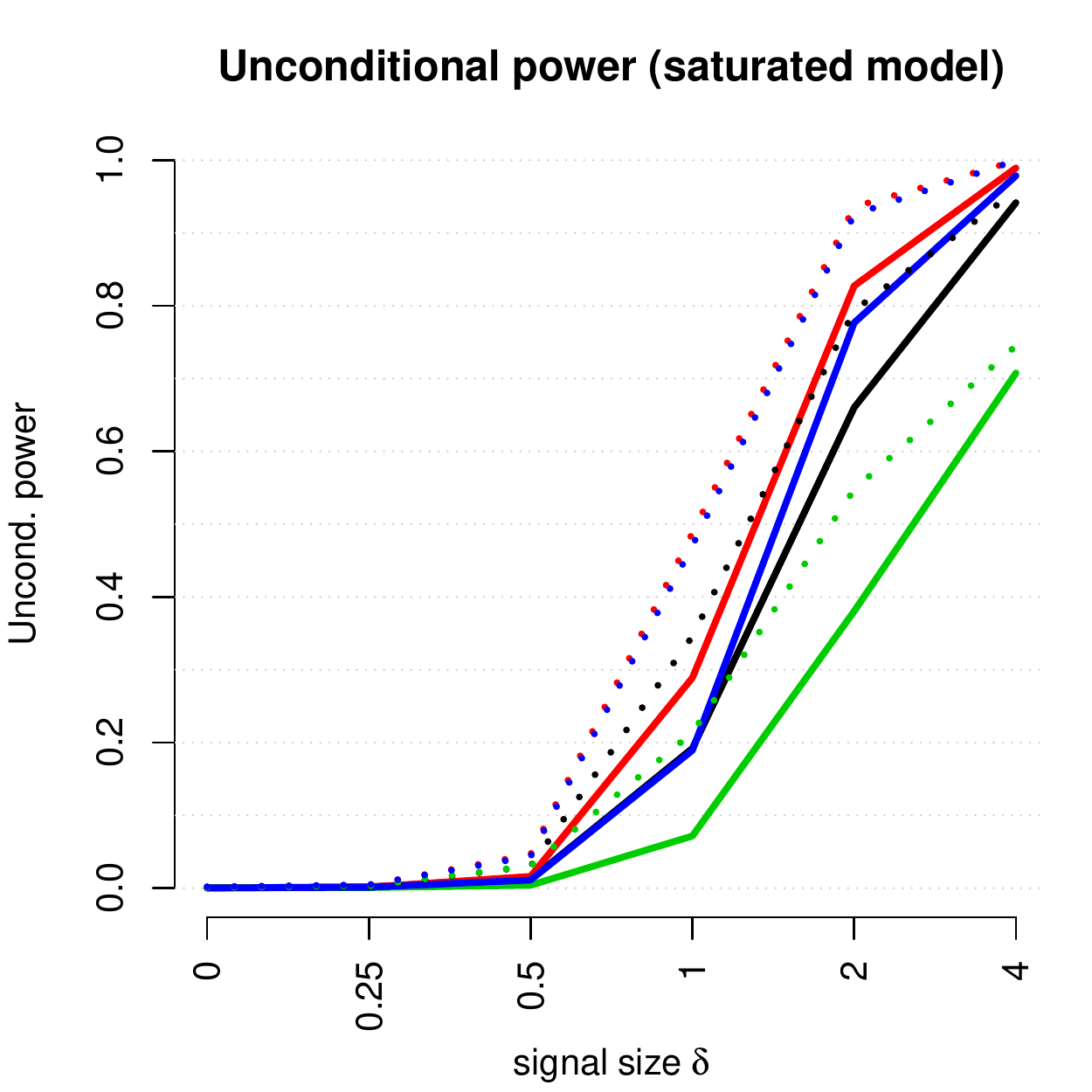}\hspace{-3mm}
    \caption{\it\small Data was simulated from two settings over signal size
      $\delta \in (0,4)$ with $n=200$ data points.  Several two-step algorithms
      (WBS, SBS, CBS, FL) were applied, and post-selection segment test
      inference was conducted on the resulting two detected changepoints from
      each method. The dotted lines are the marginalized versions of each
      test. }
\label{fig:power-comparison}

  \centering
    \includegraphics[width=.33\linewidth]{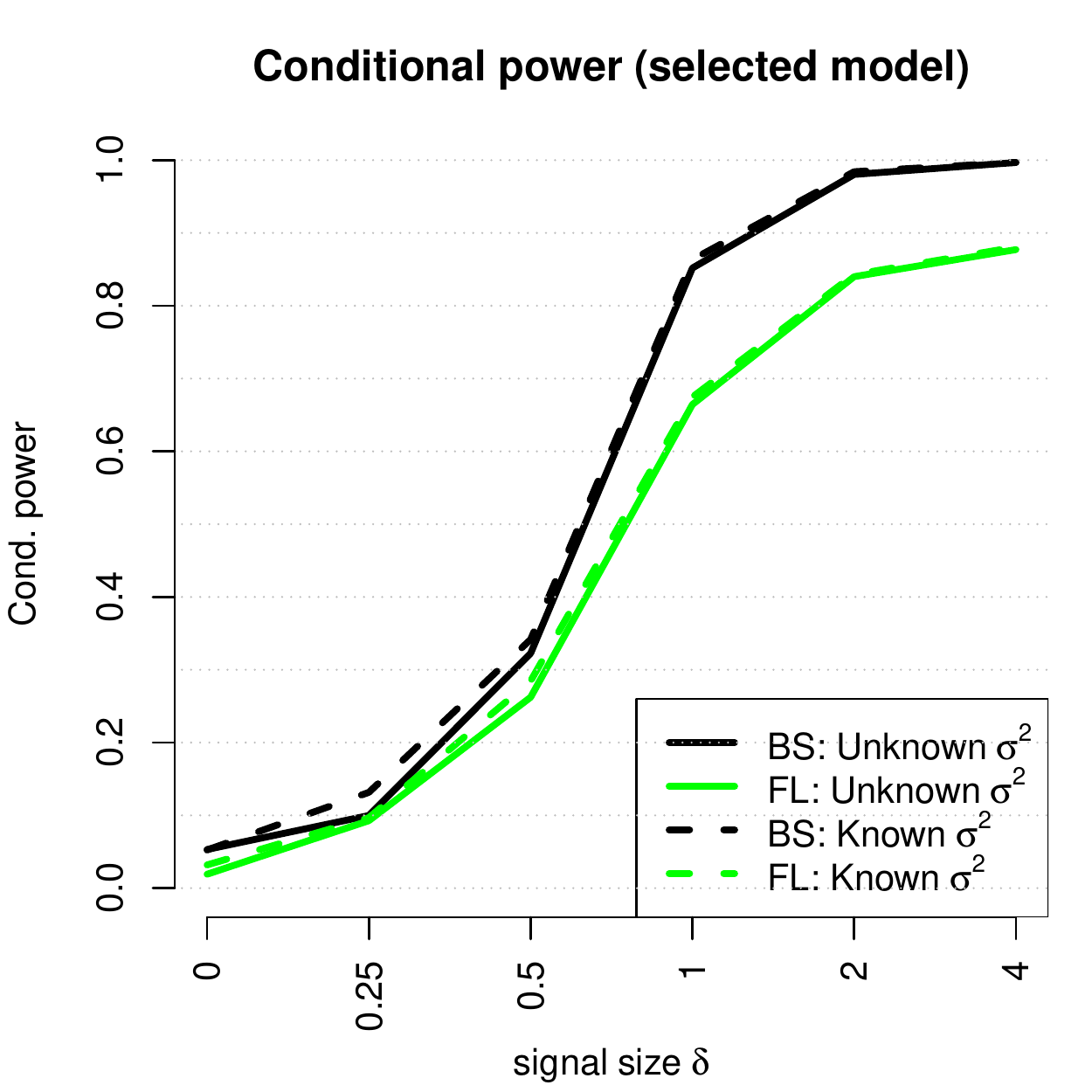}\hspace{-3mm}
    \includegraphics[width=.33\linewidth]{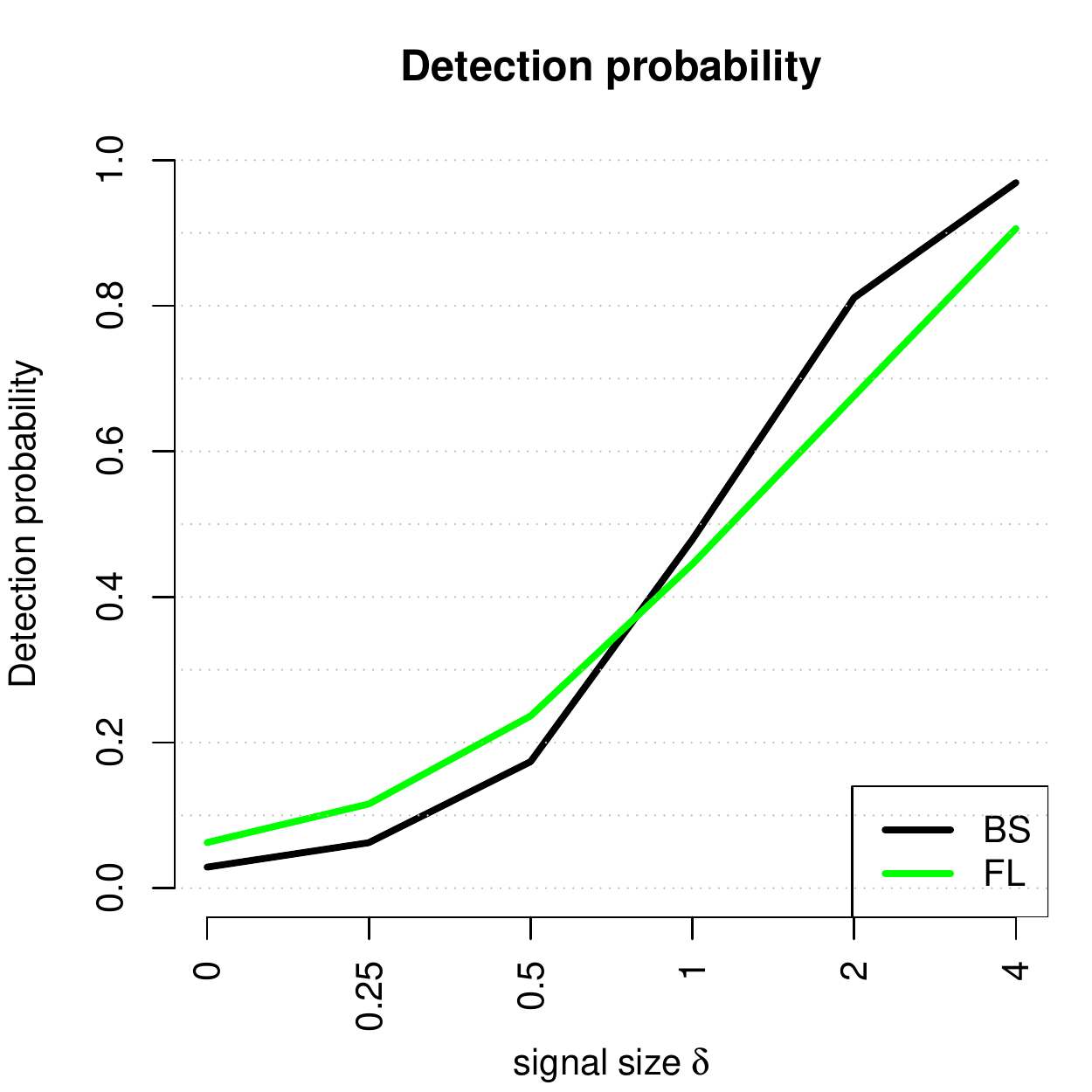}\hspace{-3mm}
    \includegraphics[width=.33\linewidth]{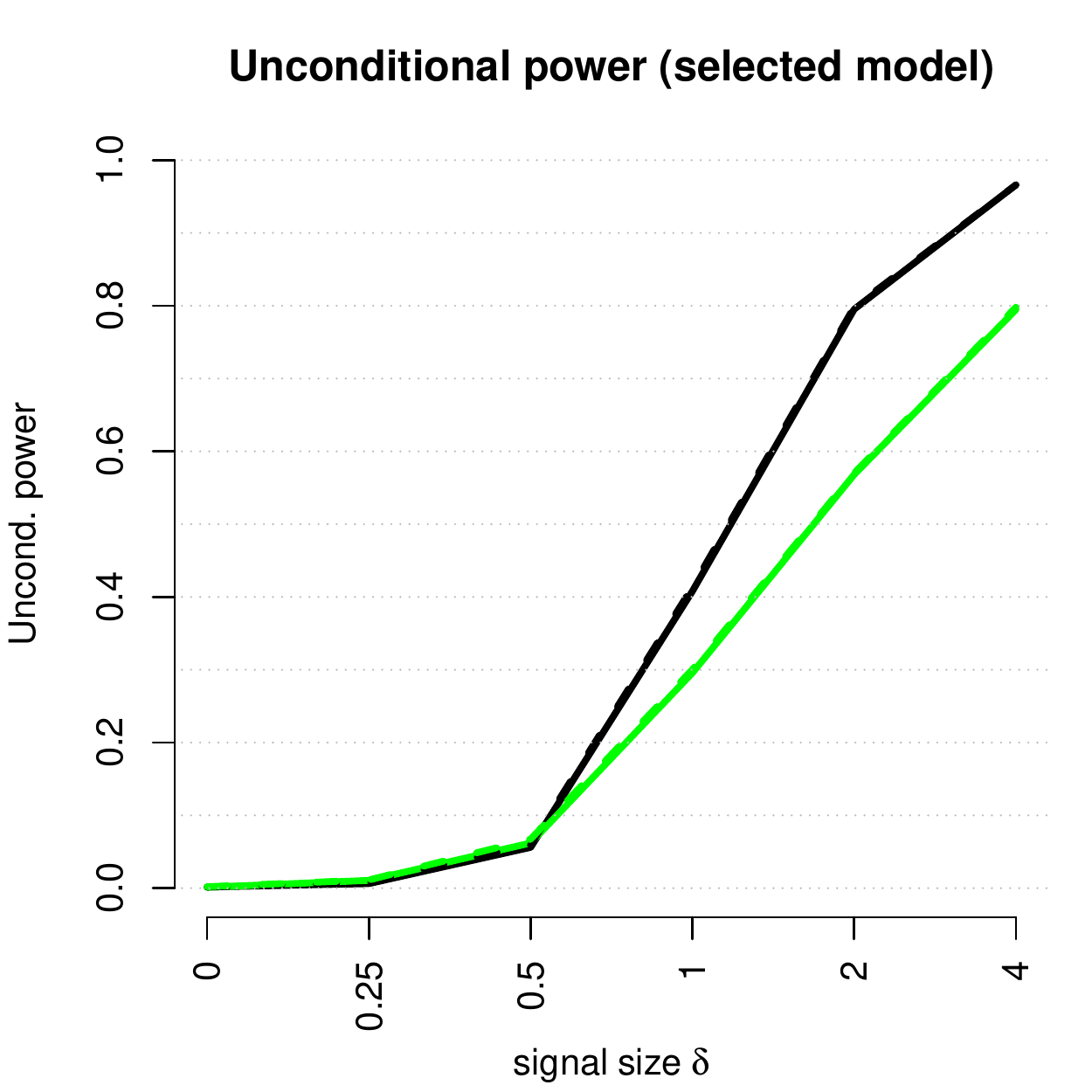}\hspace{-3mm}
    \caption{\it\small Setup similar to \Fref{fig:power-comparison} but for
      selected model tests. Only BS (black) and FL (green)  are shown.
      but the selected model test is applied to both known  (dashed line) and unknown noise
      parameter $\sigma^2$ (solid line). 
      }
\label{fig:power-comparison-selected}

\centering
  \includegraphics[width=.33\linewidth]{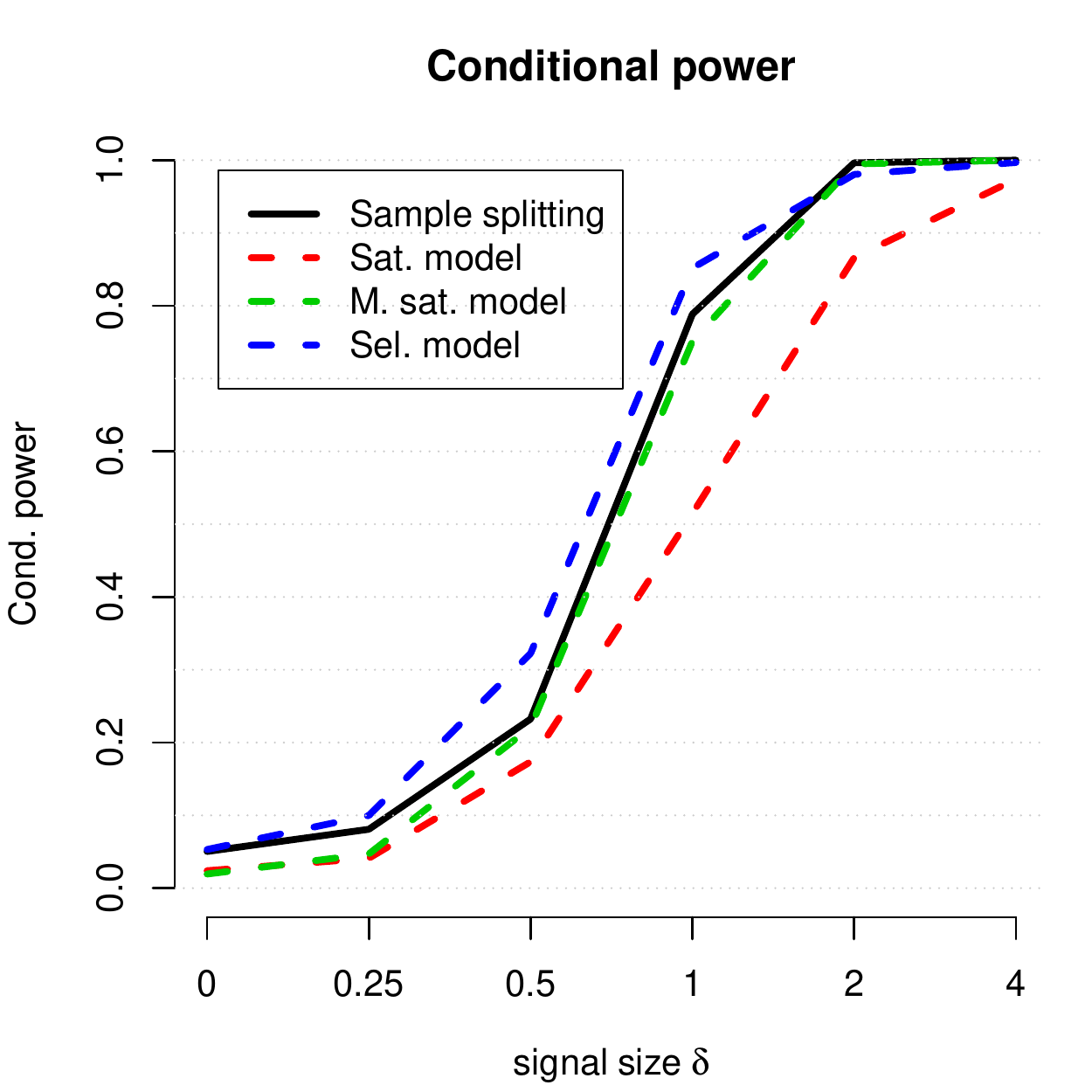}\hspace{-3mm} 
  \includegraphics[width=.33\linewidth]{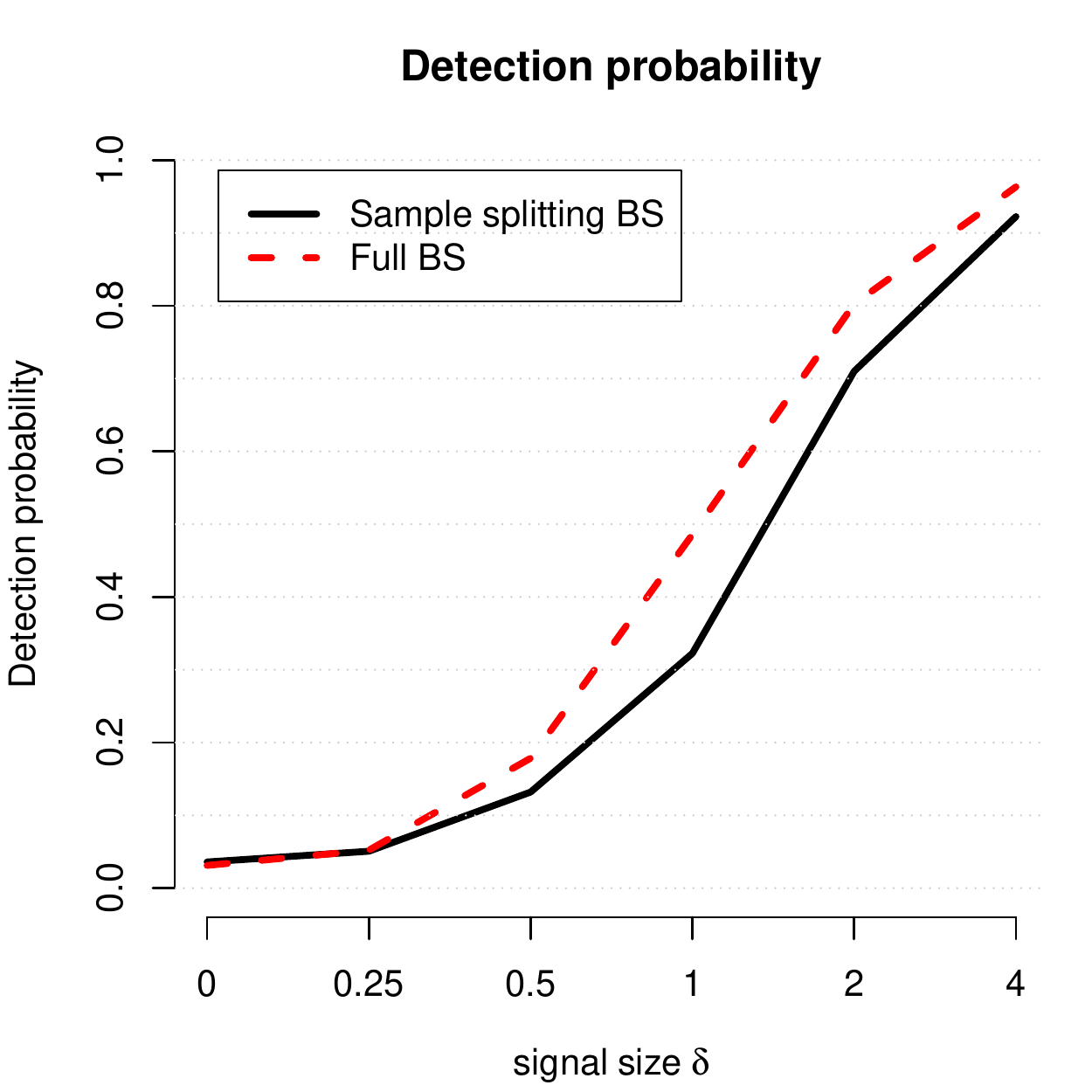}\hspace{-3mm}
  \includegraphics[width=.33\linewidth]{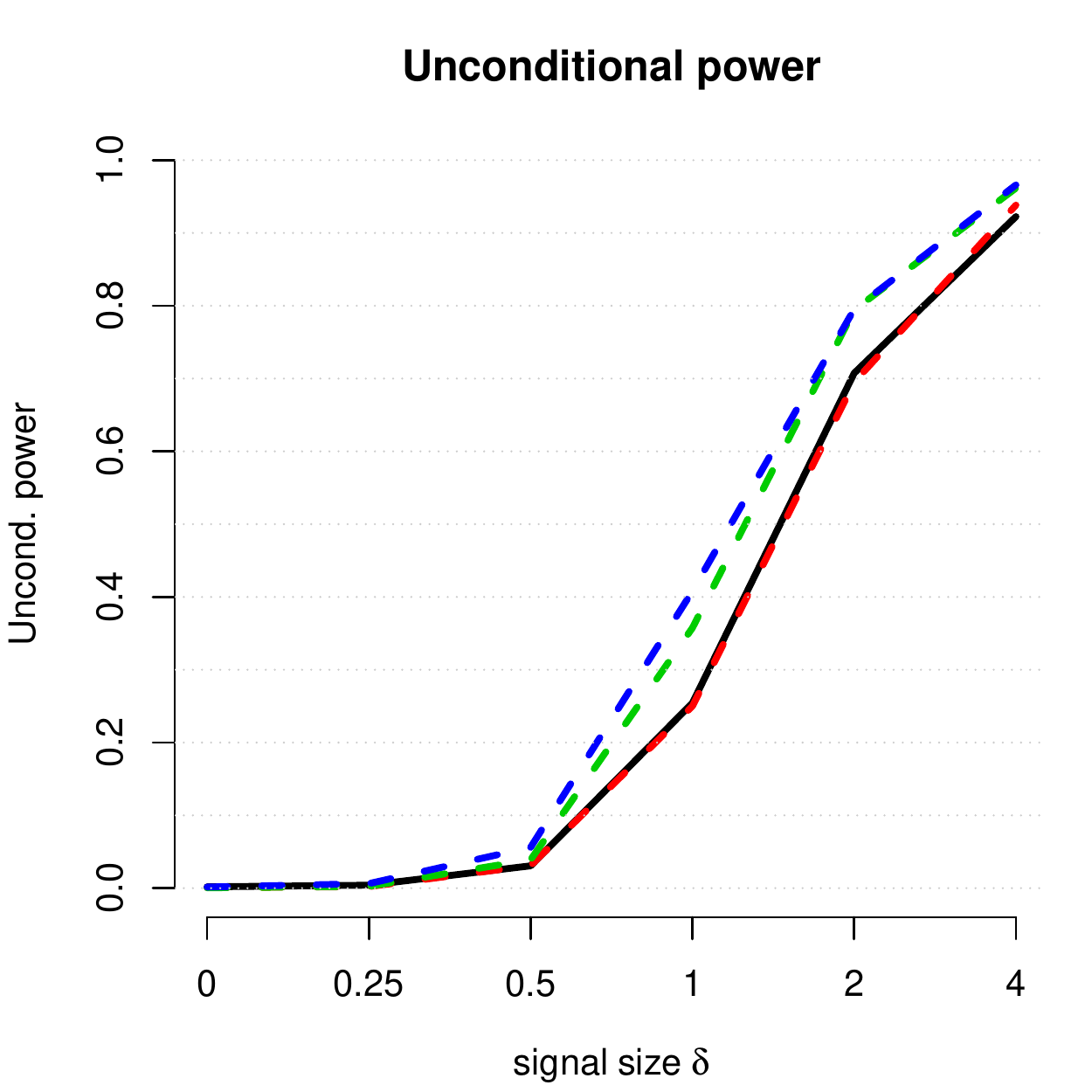}\hspace{-3mm} 
  \caption{\it\small 
  Setup similar to \Fref{fig:power-comparison} but 
  comparing sample
    splitting (black solid), plain saturated model test (red dashed),
    additive noise marginalized saturated model test (green dashed),
    and selected model test with unknown $\sigma^2$ (blue dashed),
   all using a 2-step binary segmentation.
    (Middle): Detection probability for the binary segmentation
    applied on the sample split dataset (black solid)
    or the full dataset (red dashed). 
    (Right): Unconditional power, computed by multiplying the conditional power curve and its
    relevant detection probability curve.
    }
  \label{fig:samplesplit}
\end{figure}




\subsection{Pseudo-real simulation with heavy tails}
\label{sec:heavytail}

We present pseudo-real datasets based on a single chromosome --
chromosome 9 in GM01750 -- in order to investigate how heavy-tailed distributions
affect our inferences. 
We only present saturated model tests for brevity. 
From
the original data, we estimate a 1-changepoint mean $\theta$, shown in the bold red
line in Figure \ref{fig:pseudoreal}, and residuals $r$, both based on a fitted
1-step wild binary segmentation model. 
The QQ plot shows that these residuals have heavier tails than a Gaussian
(top middle panel of \Fref{fig:pseudoreal}), and are close in distribution to
a Laplacian.
This motivates us to generate synthetic
data $y = \theta + \epsilon$ by adding noise $\epsilon$ in three ways:
\begin{enumerate}
\itemsep-.5em 
\item Gaussian noise $\epsilon \sim \cN(0, \sigma^2 I)$ (black), 
\item Laplace noise $\epsilon \sim \operatorname{Laplace}(0, \sigma/\sqrt{2})$ (green), and
\item Bootstrapped residuals, $\epsilon = b(r)$, where $b(\cdot)$ samples the residuals with
  replacement (red).  \label{eq:bootstrap-data}
\end{enumerate}



\begin{figure}[h!]
  \centering
  \begin{subfigure}{.35\linewidth}
    \includegraphics[width=\linewidth]{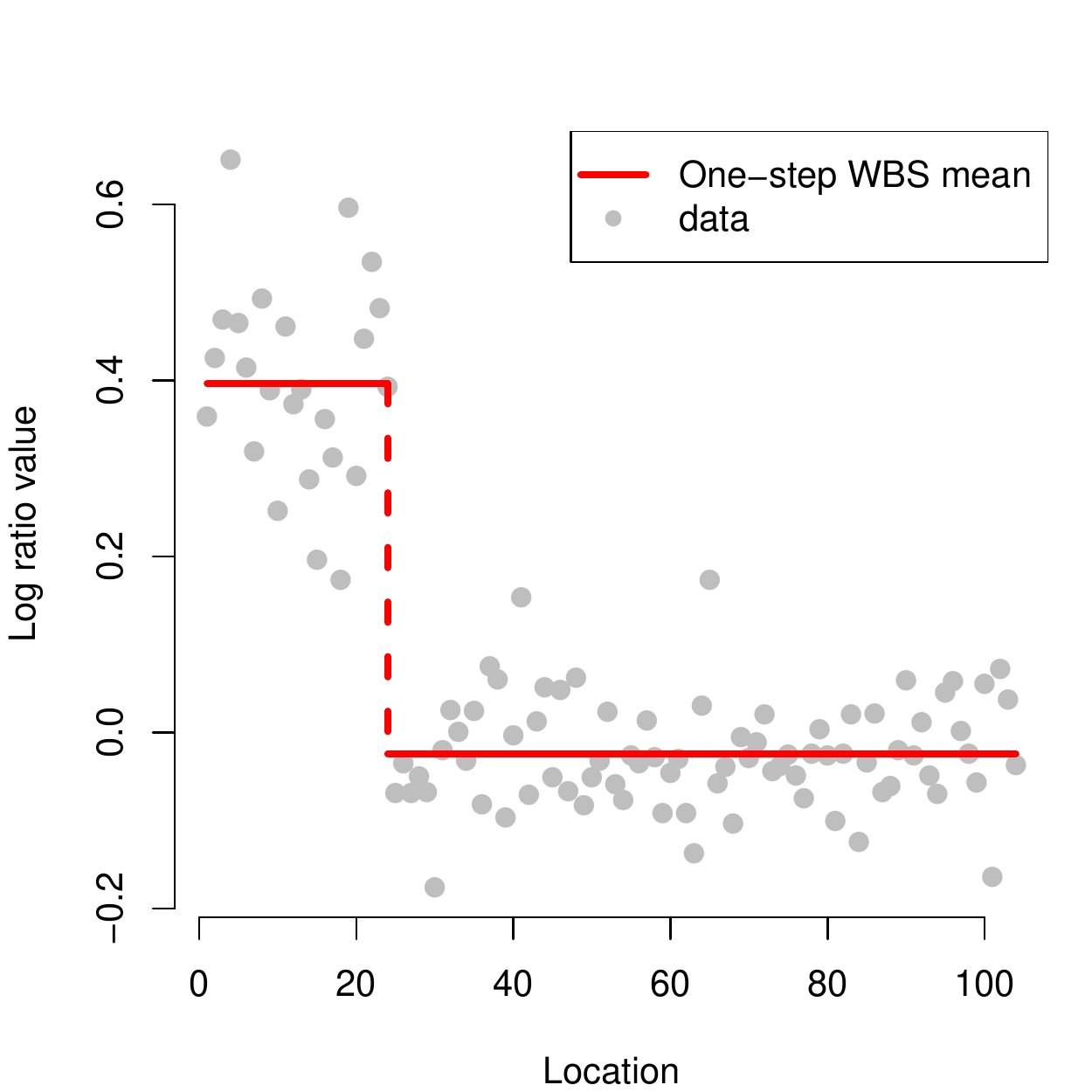} 
  \end{subfigure} 
\begin{subfigure}{.6\textwidth}
 \includegraphics[width=.45\linewidth]{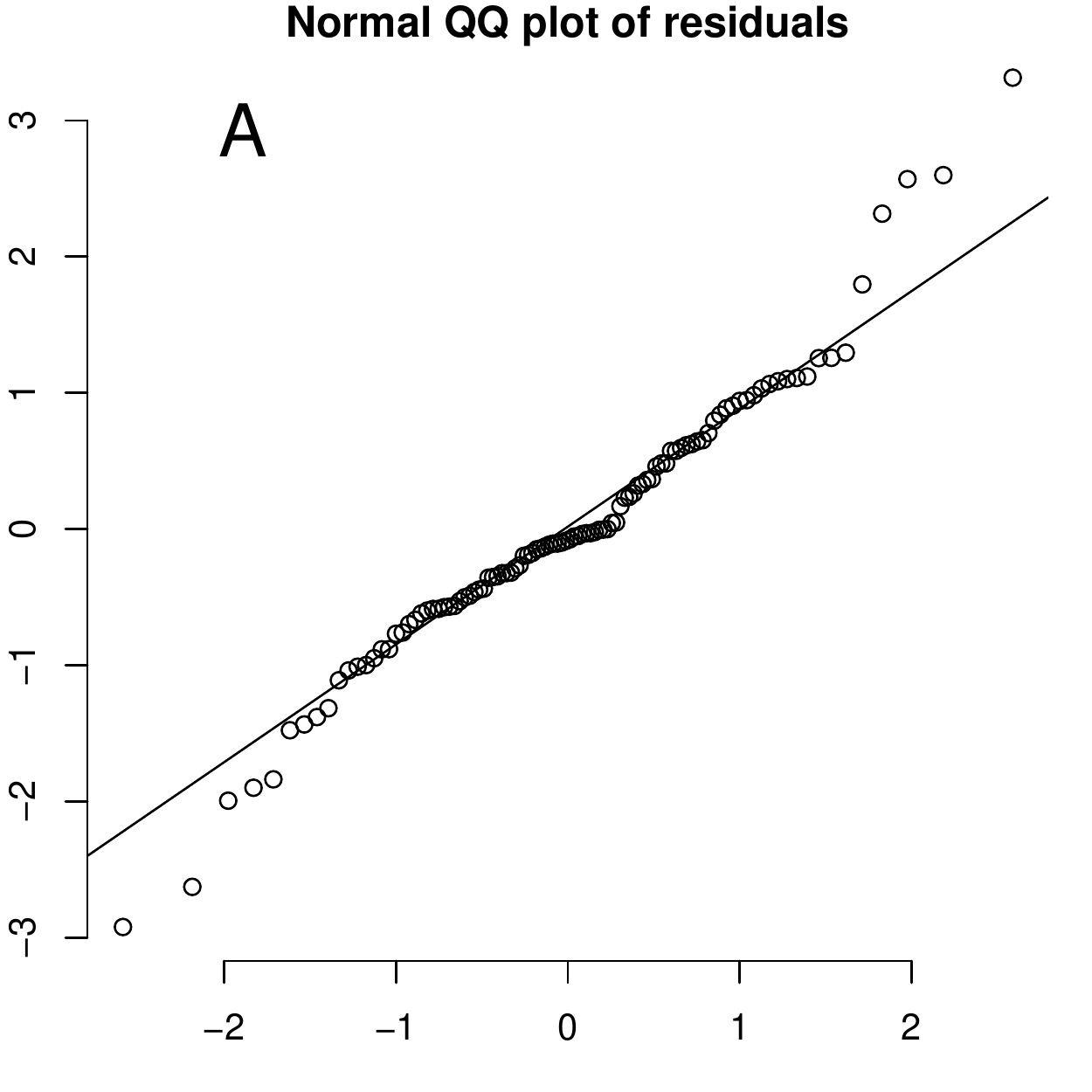} 
  \includegraphics[width=.45\linewidth]{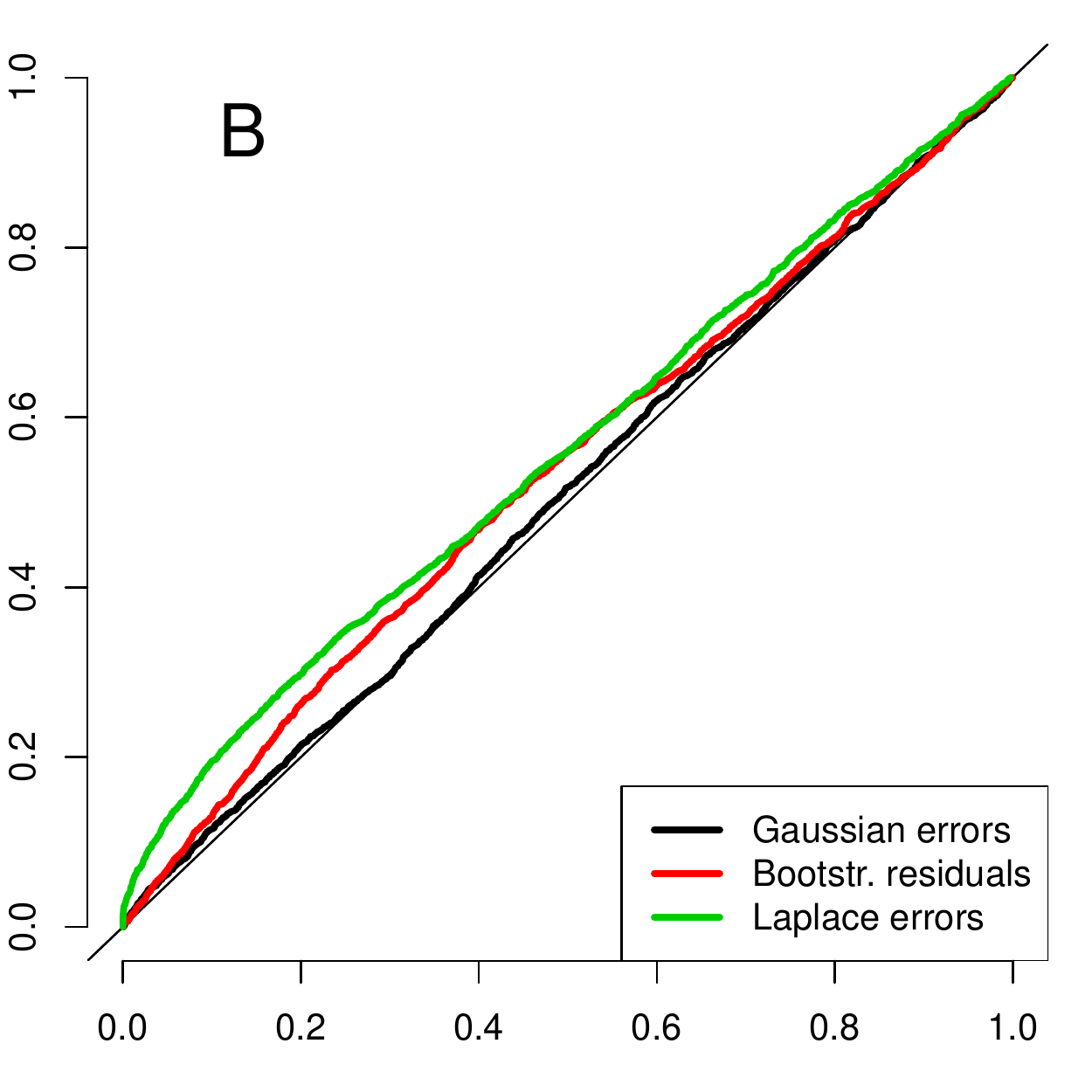} 
  \includegraphics[width=.45\linewidth]{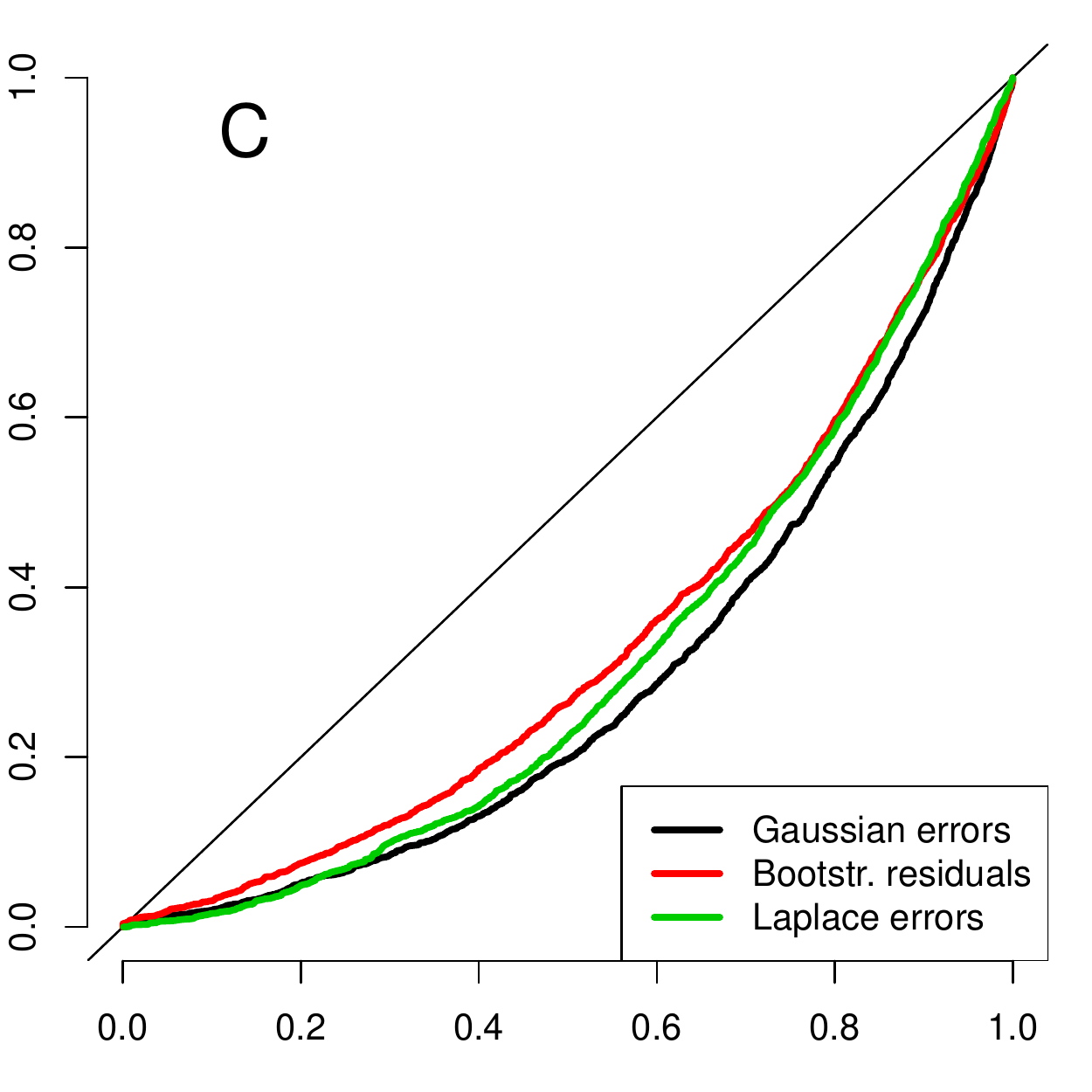}\hspace{4mm}
  \includegraphics[width=.45\linewidth]{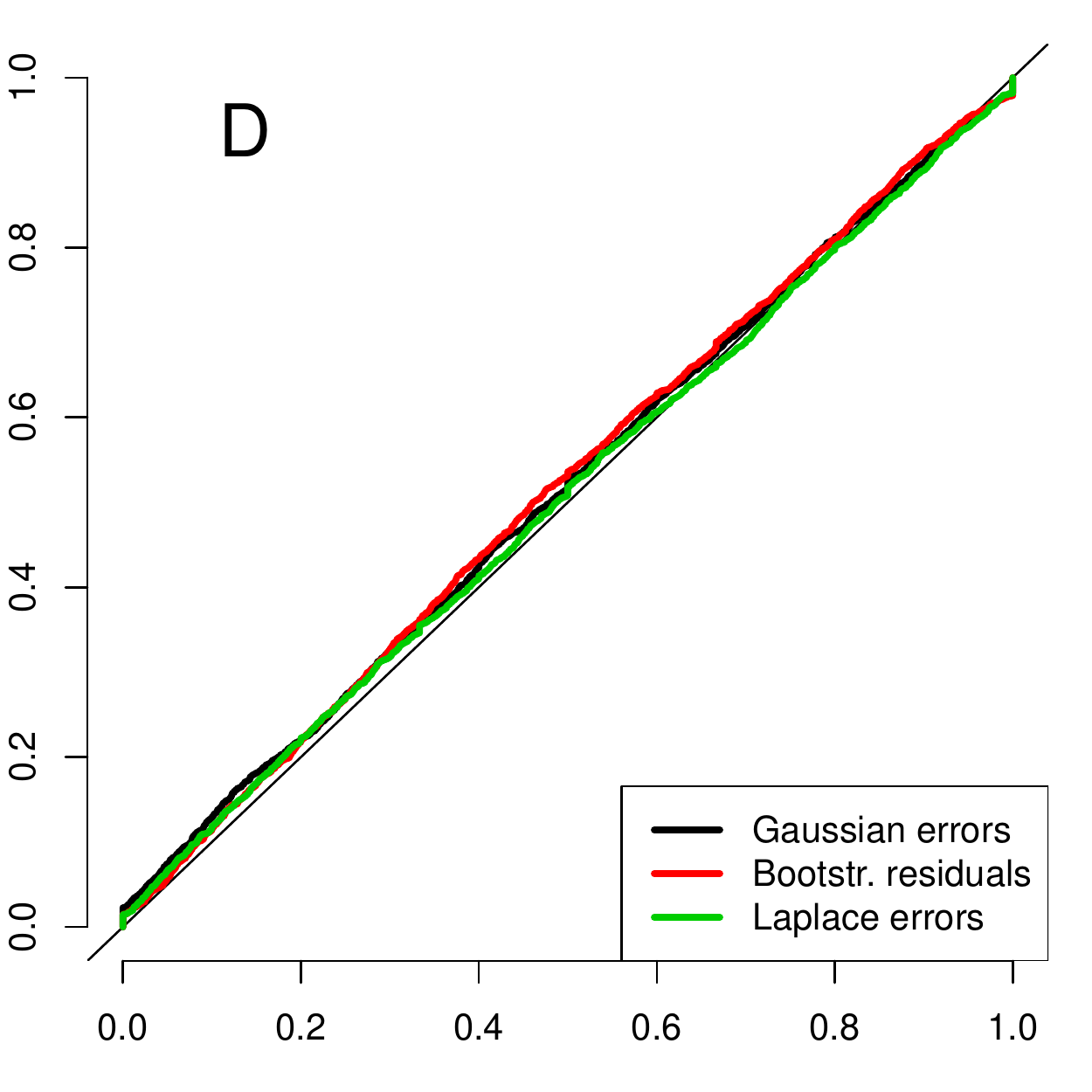}
  \end{subfigure}
  \caption{\small \it {
  (Left) Bootstrapped residuals added to the artificially
    constructed mean, generated from chromosome 9 in GM01750. 
    (Panel A): QQ plot of residuals.  The remaining 3 panels show the p-values
    of saturated model tests under three different noise models, Gaussian
    (black), bootstrapped residuals (red) and Laplacian (green).  (Panel B):
    Application of vanilla saturated model tests (no modifications).  (Panel C):
    P-values after using the bootstrap substitution method \citep{asymppostsel}.
    (Panel D): P-values after using our modified bootstrap substitution method
    that involves bootstrapping $y - \hat\theta$ instead of $y - \bar y$.
    }}
  \label{fig:pseudoreal}
\end{figure}

We then investigate the behavior of saturated model tests after a 3-step binary
segmentation across all three types of noises when the null hypothesis
$H_0: v^T\theta = 0$ is true.  To set $\sigma^2$ for these saturated model
tests, we compute the empirical variance after fitting a pre-cut 10-step wild
binary segmentation across the entire cell line.
The results are shown in Figure
\ref{fig:pseudoreal}. Exactly valid null p-values would follow the theoretical
$U(0,1)$ distribution, optimistic (superuniform) p-values would lie below the
diagonal, and conservative (subuniform) p-values would lie above the diagonal. We
see that the inferences are exactly valid with Gaussian noise 
but is optimistic
with both Laplacian noise and bootstrapped residuals (panel B of \Fref{fig:pseudoreal}).

To overcome this optimism, we modify the \textit{bootstrap substitution method}
\citep{asymppostsel}.  Let $\beta$ denote $\bar \theta$, the grand mean of
$\theta$.  Originally, the authors' main idea is to approximate the law of
$v^TY$ used to construct the TG statistic \eqref{eq:tg_statistic} with the
bootstrapped distribution of $v^T(Y- \beta)$ by bootstrapping the residuals,
$y-\bar y$.  Here, the empirical grand mean $\bar y$ represents the simplest
model with no changepoints.  While this estimate will usually restore validity,
it is expected to produce overly conservative p-values if there exist
\textit{any} changepoints (panel C of \Fref{fig:pseudoreal}).

%


Hence, we instead consider the bootstrapped distribution of $v^T(Y - \theta)$,
by bootstrapping the residuals, $y - \hat{\theta}$,
where 
$\hat \theta$ is a piecewise constant estimate of $\theta$.  For our instance,
we use a $k$-step binary segmentation model to estimate $\hat \theta$, where we
choose $k$ using two-fold cross validation from a two-fold split of the data $y$
into odd and even indices. This procedure is not valid in general and should be
used with caution.
In order to combat the main risk of over-fitting of $\hat \theta$, we may
further modify this procedure by
excluding shorter segments in $\hat \theta$ prior to bootstrapping. For our
dataset, these potential downsides do not seem to come to fruition in
practice. At the sample size $n \simeq 100$ and signal-to-noise ratio of our
current dataset, the resulting p-values in both heavy-tailed and Gaussian data
are convincingly uniform (panel D of \Fref{fig:pseudoreal}).

\section{Copy Number Variation (CNV) data application} \label{sec:application}

Array CGH analyses
detect changes in expression levels (measured as a log ratio in fluorescence intensity
between test and reference samples) across the genome. 
Aberrations found are linked 
with the presence of
a wide range of genetically driven diseases -- as many types of cancer,
Alzheimer's disease, and autism, see, eg. \citet{
  international2008rare, 
  bochukova2010large}.  
%
%

The datasets we study in this paper are originally from \cite{Snijders2001}, and
have been studied by numerous works in the statistics literature, e.g.
\citet{sara, lai2008}. In each dataset consist of
individual cell lines with $2,000$ measurements or more across
23 chromosomes. 
Our analysis focuses on middle-to-middle duplication, the setting that was
studied in \Fref{sec:simulation}.




In our analysis, we use a 4-step wild binary segmentation and perform
marginalized saturated model tests on two cell lines GM01524 and GM01750 in
\Fref{fig:analysis}.  Recall that the 14th chromosome of the latter cell line
was shown in \Fref{fig:intro}. As decribed in \Fref{sec:practicalities}, we
pre-cut both analyses at chromosome boundaries since the ordering of 1 through
23 is essentially arbitrary. In GM01524, we can see that the our choice of
methods -- segment test inferences on changepoints recovered from pre-cut
wild binary segmentation, after decluttering -- deems two changepoint locations
A and B of alternating directions in chromosome 6 to be significant, and two
other locations to be spurious, at the signifance level $\alpha = 0.05$ after
Bonferroni correction. This result is consistent with karyotyping results of a
single middle-to-middle duplication. Likewise, in GM01750, the wild binary
segmentation inference correctly identified the two start-to-middle duplications
in chromosomes 9 and 14 which were confirmed with karyotyping, and correctly
invalidated the rest.  

\section{Conclusions}
%
  We have described an approach to conduct post-selection inference on
changepoints detected by common segmentation algorithms, using the same data for
detection and testing. 
Through simulations, we demonstrated the detection probability and power over
signal-to-noise ratios in a variety of settings, as well as our tools' robustness to heavy-tailed data.
Finally, we demonstrated the
application in array CGH data, where we show that our methods effectively
provide a statistical filter 
that retains the changepoints that validated by karyotyping and discards the rest.

Future work in this area could improve the practical applicability of these
methods. One useful extension would be to incorporate more complex and realistic
noise models. 
For example, the selected model testing framework can be extended to include
other exponential family models. 
The methodology for inference after changepoint detection may also be extended
to multiple streams of copy number variation data 
in order to make more powerful 
inferences about changepoint locations. These and other methodological
extensions can be useful for newer types of copy number variation data from
recent technology, such as next-generation sequencing.



\begin{figure}[ht!]
  \centering
  \includegraphics[width=0.9\textwidth]{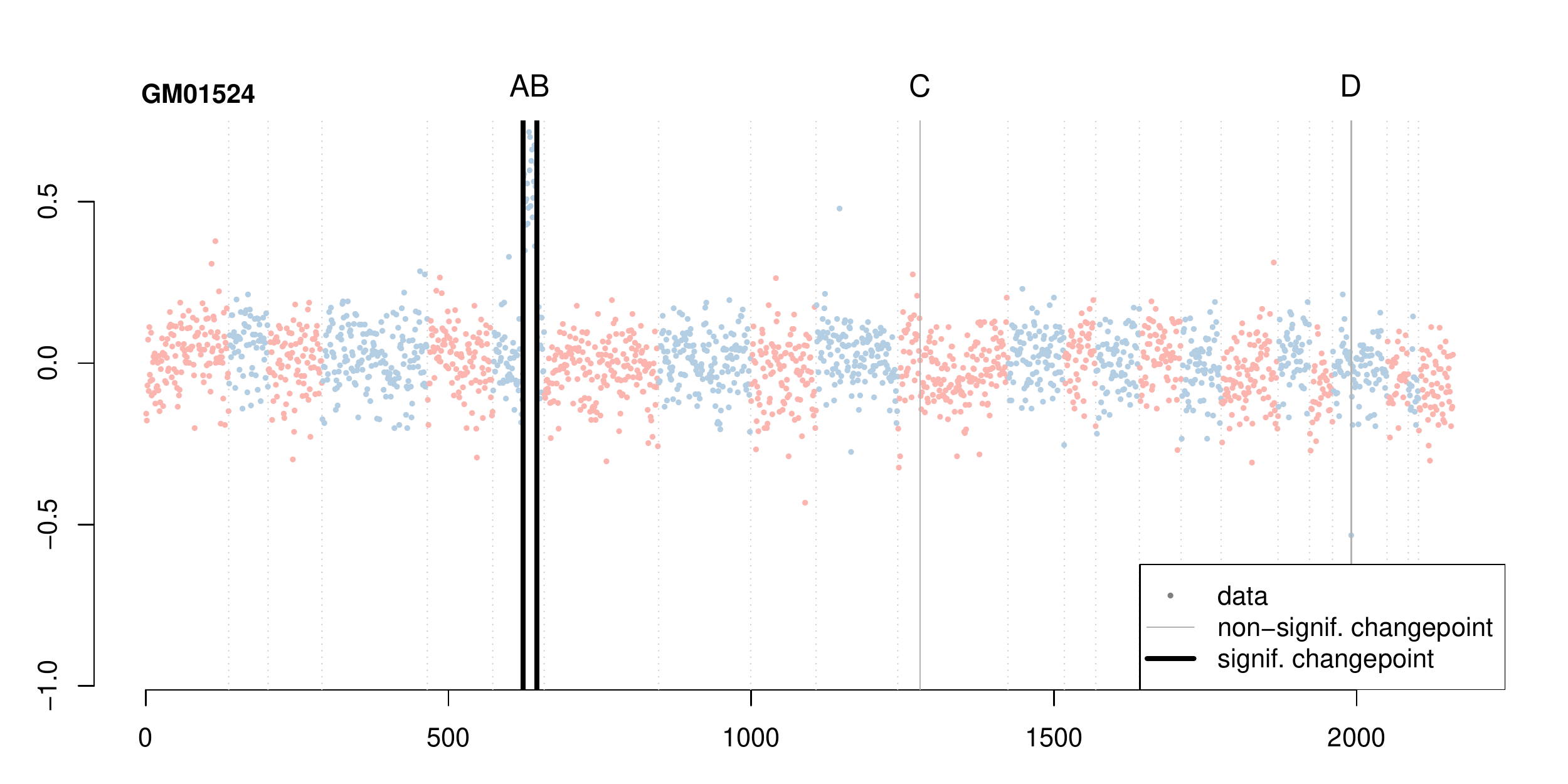} 
  \includegraphics[width=0.9\textwidth]{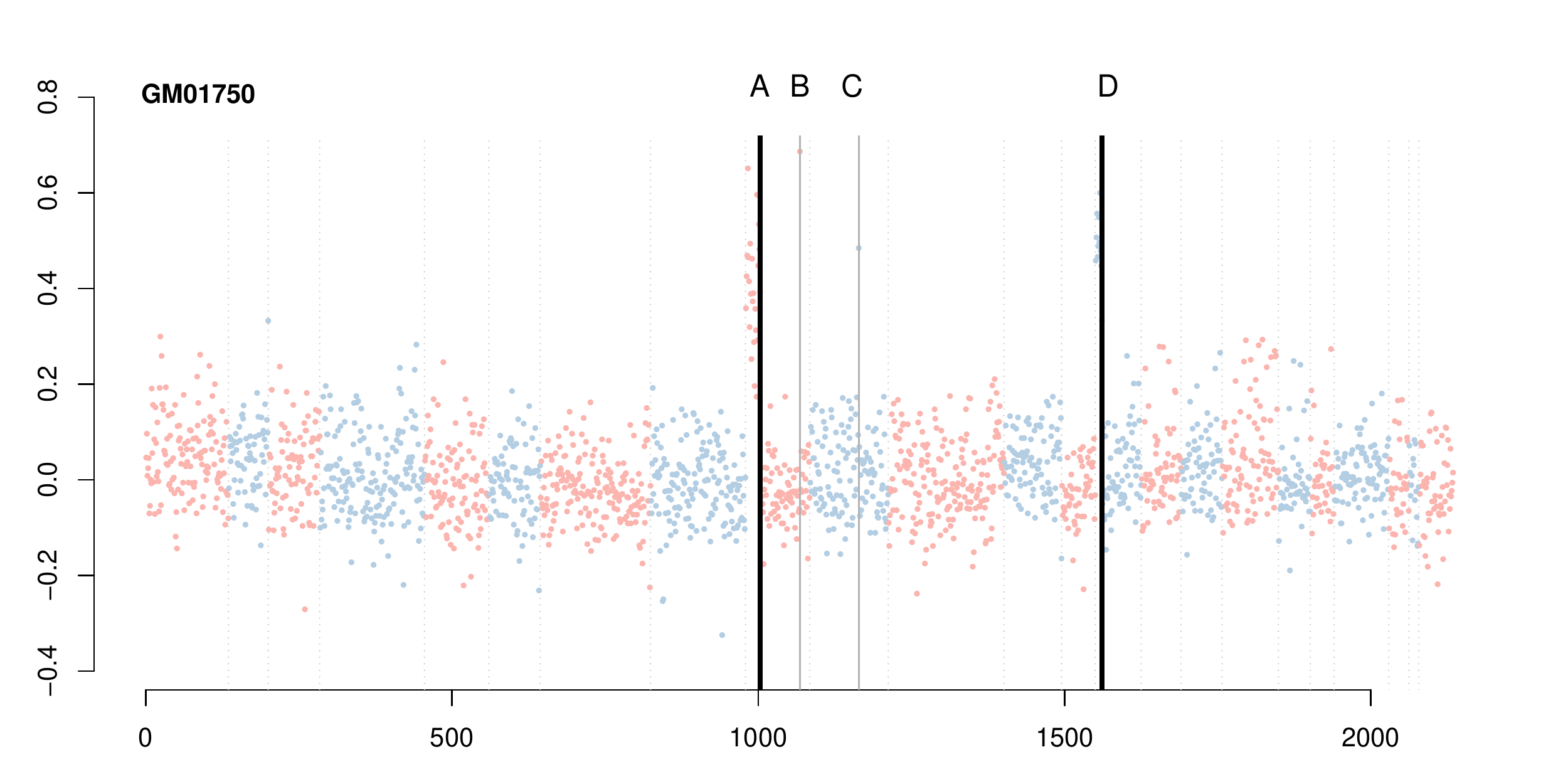}
  \caption{\it \small ``Pre-cut'' changepoint inference using
   saturated model tests for wild binary segmentation 
  marginalized over random intervals conducted on two cell lines,
    from \citet{Snijders2001}. Data points are colored in two alternating
    tones, to visually depict the chromosomal boundaries. 
    For each cell line, the letters A through D
    denote the estimated changepoints, $\hat b_1$ through $\hat b_4$ respectively.
    The bolded lines denote changepoints that were rejected under the null hypothesis 
    $H_0: v^T\theta = 0$ at a Type-I error control level $\alpha = 0.05$ after Bonferroni-correction.
    (Top): The analysis for the cell line GM01524, with all 23 chromosomes shown.
    (Bottom): The same setup as above, but for the cell line GM01750.
  }
  \label{fig:analysis}
\end{figure}

\begin{small}
\section{Code and supplemental material}
The code to perform estimation as well as saturated model tests are in
\url{https://github.com/robohyun66/binseginf}, while the code to perform
selected model tests are additionally in
\url{https://github.com/linnylin92/selectiveModel}.

The following is a brief summary of the supplements.
Appendix  A contains the proofs omitted from the main text.
Appendix  B contains the algorithmic details for the 
selected model test sampler in the known $\sigma^2$ setting.
Appendix  C contains numerous additional simulations results and details.
Appendix  D contains a description of the procedure
to choose $k$ adaptively and its corresponding simulation results.
Appendix  E contains additional results on our array CGH application.

\section{Acknowledgment}
The authors used Pittsburgh Supercomputing Center resources (Proposal/Grant
Number: DMS180016P). Sangwon Hyun was supported by supported by NSF grants
DMS-1554123 and DMS-1613202. Max G'Sell was supported by NSF grant
DMS-1613202. Ryan Tibshirani was supported by NSF grant DMS-1554123.

\bibliographystyle{agsm}
\bibliography{all}

\begin{thebibliography}{}

\bibitem[\protect\citeauthoryear{Aue and Horvath}{Aue and
  Horvath}{2013}]{aueHorvath2013}
Aue, A. and Horvath, L. (2013).
\newblock Structural breaks in time series.
\newblock {\em Journal of Time Series Analysis} {\bf 34,} 1--16.

\bibitem[\protect\citeauthoryear{Bochukova, Huang, Keogh, Henning, Purmann,
  Blaszczyk, Saeed, Hamilton-Shield, Clayton-Smith, O’Rahilly,
  et~al\mbox{.}}{Bochukova et~al.}{2010}]{bochukova2010large}
Bochukova, E.~G., Huang, N., Keogh, J., Henning, E., Purmann, C., Blaszczyk,
  K., Saeed, S., Hamilton-Shield, J., Clayton-Smith, J., O’Rahilly, S.,
  et~al. (2010).
\newblock Large, rare chromosomal deletions associated with severe early-onset
  obesity.
\newblock {\em Nature} {\bf 463,} 666.

\bibitem[\protect\citeauthoryear{Boysen, Kempe, Liebscher, Munk, and
  Wittich}{Boysen et~al.}{2009}]{boysen2009consistencies}
Boysen, L., Kempe, A., Liebscher, V., Munk, A., and Wittich, O. (2009).
\newblock Consistencies and rates of convergence of jump-penalized least
  squares estimators.
\newblock {\em Annals of Statistics} {\bf 37,} 157--183.

\bibitem[\protect\citeauthoryear{Consortium et~al\mbox{.}}{Consortium
  et~al.}{2008}]{international2008rare}
Consortium, I.~S. et~al. (2008).
\newblock Rare chromosomal deletions and duplications increase risk of
  schizophrenia.
\newblock {\em Nature} {\bf 455,} 237.

\bibitem[\protect\citeauthoryear{Fanciulli, Norsworthy, Petretto, Dong, Harper,
  Kamesh, Heward, Gough, De~Smith, Blakemore, et~al\mbox{.}}{Fanciulli
  et~al.}{2007}]{fanciulli2007fcgr3b}
Fanciulli, M., Norsworthy, P.~J., Petretto, E., Dong, R., Harper, L., Kamesh,
  L., Heward, J.~M., Gough, S.~C., De~Smith, A., Blakemore, A.~I., et~al.
  (2007).
\newblock Fcgr3b copy number variation is associated with susceptibility to
  systemic, but not organ-specific, autoimmunity.
\newblock {\em Nature genetics} {\bf 39,} 721.

\bibitem[\protect\citeauthoryear{Fithian, Sun, and Taylor}{Fithian
  et~al.}{2014}]{fithian2014optimal}
Fithian, W., Sun, D., and Taylor, J. (2014).
\newblock Optimal inference after model selection.
\newblock arXiv: 1410.2597.

\bibitem[\protect\citeauthoryear{Fithian, Taylor, Tibshirani, and
  Tibshirani}{Fithian et~al.}{2015}]{fithian2015selective}
Fithian, W., Taylor, J., Tibshirani, R., and Tibshirani, R.~J. (2015).
\newblock Selective sequential model selection.
\newblock arXiv: 1512.02565.

\bibitem[\protect\citeauthoryear{Fryzlewicz}{Fryzlewicz}{2014}]{fryzlewicz2014wild}
Fryzlewicz, P. (2014).
\newblock Wild binary segmentation for multiple change-point detection.
\newblock {\em Annals of Statistics} {\bf 42,} 2243--2281.

\bibitem[\protect\citeauthoryear{Hao, Niu, and Heping}{Hao et~al.}{2013}]{sara}
Hao, N., Niu, Y.~S., and Heping, Z. (2013).
\newblock {Multiple Change-Point Detection via a Screening and Ranking
  Algorithm}.
\newblock {\em Statistical Sinica} {\bf 23,} 1553--1572.

\bibitem[\protect\citeauthoryear{Hoefling}{Hoefling}{2010}]{hoefling2010path}
Hoefling, H. (2010).
\newblock A path algorithm for the fused lasso signal approximator.
\newblock {\em Journal of Computational and Graphical Statistics} {\bf 19,}
  984--1006.

\bibitem[\protect\citeauthoryear{Horvath and Rice}{Horvath and
  Rice}{2014}]{horvath2014}
Horvath, L. and Rice, G. (2014).
\newblock Extensions of some classical methods in change point analysis.
\newblock {\em TEST} {\bf 23,} 219--255.

\bibitem[\protect\citeauthoryear{Hyun, G'Sell, and Tibshirani}{Hyun
  et~al.}{2018}]{hyun2018exact}
Hyun, S., G'Sell, M., and Tibshirani, R.~J. (2018).
\newblock Exact post-selection inference for the generalized lasso path.
\newblock {\em Electronic Journal of Statistics} pages 1053--1097.

\bibitem[\protect\citeauthoryear{Jandhyala, Fotopoulos, Macneill, and
  Liu}{Jandhyala et~al.}{2013}]{jandhyala2013}
Jandhyala, V., Fotopoulos, S., Macneill, I., and Liu, P. (2013).
\newblock Inference for single and multiple change-points in time series.
\newblock {\em Journal of Time Series Analysis} {\bf 34,} 423--446.

\bibitem[\protect\citeauthoryear{Lai, Xing, and Zhang}{Lai
  et~al.}{2008}]{lai2008}
Lai, T.~L., Xing, H., and Zhang, N. (2008).
\newblock {Stochastic segmentation models for array-based comparative genomic
  hybridization data analysis}.
\newblock {\em Biostatistics} {\bf 9,} 290--307.

\bibitem[\protect\citeauthoryear{Lai, Johnson, Kucherlapati, and Park}{Lai
  et~al.}{2005}]{lai2005comparative}
Lai, W.~R., Johnson, M.~D., Kucherlapati, R., and Park, P.~J. (2005).
\newblock Comparative analysis of algorithms for identifying amplifications and
  deletions in array {CGH} data.
\newblock {\em Bioinformatics} {\bf 21,} 3763--3770.

\bibitem[\protect\citeauthoryear{Lee, Sun, Sun, and Taylor}{Lee
  et~al.}{2016}]{lee2016exact}
Lee, J., Sun, D., Sun, Y., and Taylor, J. (2016).
\newblock Exact post-selection inference, with application to the lasso.
\newblock {\em Annals of Statistics} {\bf 44,} 907--927.

\bibitem[\protect\citeauthoryear{Lin, Sharpnack, Rinaldo, and Tibshirani}{Lin
  et~al.}{2016}]{lin2016approximate}
Lin, K., Sharpnack, J., Rinaldo, A., and Tibshirani, R.~J. (2016).
\newblock Approximate recovery in changepoint problems, from $\ell_{2}$
  estimation error rates.

\bibitem[\protect\citeauthoryear{Lin, Sharpnack, Rinaldo, and Tibshirani}{Lin
  et~al.}{2017}]{lin2017sharp}
Lin, K., Sharpnack, J.~L., Rinaldo, A., and Tibshirani, R.~J. (2017).
\newblock A sharp error analysis for the fused lasso, with application to
  approximate changepoint screening.
\newblock In {\em Advances in Neural Information Processing Systems}, pages
  6884--6893.

\bibitem[\protect\citeauthoryear{Olshen, Seshan, Lucito, and Wigler}{Olshen
  et~al.}{2004}]{olshen2004circular}
Olshen, A., Seshan, V.~E., Lucito, R., and Wigler, M. (2004).
\newblock Circular binary segmentation for the analysis of array-based {DNA}
  copy number data.
\newblock {\em Biostatistics} {\bf 5,} 557--572.

\bibitem[\protect\citeauthoryear{Rudin, Osher, and Fatemi}{Rudin
  et~al.}{1992}]{rudin1992nonlinear}
Rudin, L.~I., Osher, S., and Fatemi, E. (1992).
\newblock Nonlinear total variation based noise removal algorithms.
\newblock {\em Physica D: Nonlinear Phenomena} {\bf 60,} 259--268.

\bibitem[\protect\citeauthoryear{Sebat, Lakshmi, Malhotra, Troge, Lese-Martin,
  Walsh, Yamrom, Yoon, Krasnitz, Kendall, et~al\mbox{.}}{Sebat
  et~al.}{2007}]{sebat2007strong}
Sebat, J., Lakshmi, B., Malhotra, D., Troge, J., Lese-Martin, C., Walsh, T.,
  Yamrom, B., Yoon, S., Krasnitz, A., Kendall, J., et~al. (2007).
\newblock Strong association of de novo copy number mutations with autism.
\newblock {\em Science} .

\bibitem[\protect\citeauthoryear{Snijders, Nowak, Segraves, Blackwood, Brown,
  Conroy, Hamilton, Hindle, Huey, Kimura, Law, Myambo, Palmer, Ylstra, Yue,
  Gray, Jain, Pinkel, and Albertson}{Snijders et~al.}{2001}]{Snijders2001}
Snijders, a.~M., Nowak, N., Segraves, R., Blackwood, S., Brown, N., Conroy, J.,
  Hamilton, G., Hindle, a.~K., Huey, B., Kimura, K., Law, S., Myambo, K.,
  Palmer, J., Ylstra, B., Yue, J.~P., Gray, J.~W., Jain, a.~N., Pinkel, D., and
  Albertson, D.~G. (2001).
\newblock Assembly of microarrays for genome-wide measurement of {DNA} copy
  number.
\newblock {\em Nature Genetics} {\bf 29,} 263--264.

\bibitem[\protect\citeauthoryear{Stefansson, Rujescu, Cichon, Pietil{\"a}inen,
  Ingason, Steinberg, Fossdal, Sigurdsson, Sigmundsson, Buizer-Voskamp,
  et~al\mbox{.}}{Stefansson et~al.}{2008}]{stefansson2008large}
Stefansson, H., Rujescu, D., Cichon, S., Pietil{\"a}inen, O.~P., Ingason, A.,
  Steinberg, S., Fossdal, R., Sigurdsson, E., Sigmundsson, T., Buizer-Voskamp,
  J.~E., et~al. (2008).
\newblock Large recurrent microdeletions associated with schizophrenia.
\newblock {\em nature} {\bf 455,} 232.

\bibitem[\protect\citeauthoryear{Tian and Taylor}{Tian and
  Taylor}{2018}]{tian2018selective}
Tian, X. and Taylor, J. (2018).
\newblock Selective inference with a randomized response.
\newblock {\em Annals of Statistics} {\bf 46,} 619--710.

\bibitem[\protect\citeauthoryear{Tian and Taylor}{Tian and
  Taylor}{2015}]{randomized-selinf}
Tian, X. and Taylor, J.~E. (2015).
\newblock {Selective inference with a randomized response}.

\bibitem[\protect\citeauthoryear{Tibshirani, Saunders, Rosset, Zhu, and
  Knight}{Tibshirani et~al.}{2005}]{tibshirani2005sparsity}
Tibshirani, R., Saunders, M., Rosset, S., Zhu, J., and Knight, K. (2005).
\newblock Sparsity and smoothness via the fused lasso.
\newblock {\em Journal of the Royal Statistical Society: Series B (Statistical
  Methodology)} {\bf 67,} 91--108.

\bibitem[\protect\citeauthoryear{Tibshirani, Rinaldo, Tibshirani, and
  Wasserman}{Tibshirani et~al.}{2018}]{asymppostsel}
Tibshirani, R.~J., Rinaldo, A., Tibshirani, R., and Wasserman, L. (2018).
\newblock Uniform asymptotic inference and the bootstrap after model selection.
\newblock {\em Ann. Statist.} {\bf 46,} 1255--1287.

\bibitem[\protect\citeauthoryear{Tibshirani, Taylor, Lockhart, , and
  Tibshirani}{Tibshirani et~al.}{2016}]{tibshirani2016exact}
Tibshirani, R.~J., Taylor, J., Lockhart, R., , and Tibshirani, R. (2016).
\newblock Exact post-selection inference for sequential regression procedures.
\newblock {\em Journal of the American Statistical Association} {\bf 111,}
  600--620.

\bibitem[\protect\citeauthoryear{Umezu and Takeuchi}{Umezu and
  Takeuchi}{2017}]{umezu2017selective}
Umezu, Y. and Takeuchi, I. (2017).
\newblock Selective inference for change point detection in multi-dimensional
  sequences.
\newblock arXiv: 1706.00514.

\bibitem[\protect\citeauthoryear{Vostrikova}{Vostrikova}{1981}]{vostrikova1981detecting}
Vostrikova, L. (1981).
\newblock Detecting ``disorder'' in multidimensional random processes.
\newblock {\em Soviet Mathematics Doklady} {\bf 24,} 55--59.

\bibitem[\protect\citeauthoryear{Walters, Jacquemont, Valsesia, De~Smith,
  Martinet, Andersson, Falchi, Chen, Andrieux, Lobbens, et~al\mbox{.}}{Walters
  et~al.}{2010}]{walters2010new}
Walters, R., Jacquemont, S., Valsesia, A., De~Smith, A., Martinet, D.,
  Andersson, J., Falchi, M., Chen, F., Andrieux, J., Lobbens, S., et~al.
  (2010).
\newblock A new highly penetrant form of obesity due to deletions on chromosome
  16p11. 2.
\newblock {\em Nature} {\bf 463,} 671.

\end{thebibliography}
\end{small}

\newpage
\appendix

\section{Additional proofs} \label{sec:proofs}

\subsection{Proof of \Fref{prop:wbs-polyhedral-event}, (WBS)}  \label{app:wbs_polyhedra}
\begin{proof}
The construction of $\Gamma$ is basically the same as that for BS in
\Fref{prop:bs-polyhedral-event}; the only difference is that, at step $k$, the
inequalities defining the new rows of $\Gamma$ are based on the intervals
\smash{$w_{j_k}$} and $w_\ell$, $\ell \in J_k \backslash \{j_k\}$, instead of
\smash{$I_{j_k}$} and $I_\ell$, $\ell \neq j_k$, respectively. 
To compute the upper bound on the number of rows $m$, observe that in step $\ell \in \{1,\ldots, k\}$, 
there are at most $B-\ell+1$ intervals remaining. Among these, the interval $j_k$ contributes $p-2$ inequalities, and the remaining $B-\ell$ intervals contributes $p-1$ inequalities.
\end{proof}

\subsection{Proof of \Fref{prop:cbs-polyhedral-event}, (CBS)} \label{app:cbs_polyhedra}

\begin{proof}
The proof follows similarly to the proof of 
\Fref{prop:bs-polyhedral-event}.
Observe that for any $k' < k$, the model $M^{\mathrm{CBS}}_{1:k'}(y_\obs)$ is strictly contained
in the model $M^{\mathrm{CBS}}_{1:k}(y_\obs)$. Hence, we can proceed using induction, and 
let $ b_i$ for $i \in \{1, \ldots, k\}$ denote $\hat b_i$ for simplicity, 
and do the same for $a_i$, $d_i$ and $j_i$. Let $C(x,2) = {x \choose 2}$ for simplicity as well.

For $k=1$, the following
$2 \cdot (C(n-1, 2)- 1)$ inequalities characterize the selection of the changepoint
model $\{a_1, b_1, d_1\}$,
 \begin{align*}
d_1 \cdot g^T_{(1, a_1, b_1, n)}y \geq g^T_{(1,r,t,n)}y, \quad \text{and}\quad   d_1 \cdot g^T_{(1, a_1, b_1, n)}y \geq -g^T_{(1,r,t,n)}y,
\end{align*}
for all $r, t \in \{1, \ldots, n-1\}$ where $r < t$, $r \neq a_1$ and $t \neq b_1$.

By induction, assume we have constructed the polyhedra for the model,
$M^{\mathrm{CBS}}_{1:(k-1)}(y_\obs) = \{a_{1:(k-1)}, b_{1:(k-1)}, d_{1:(k-1)}\}$.
To construct $M^{\mathrm{CBS}}_{1:k}(y_\obs)$,
all that remains is to characterize the $k$th parameters $\{a_k, b_k, d_k\}$.
To do this, assume that 
$j_k$ corresponds with the interval $I_k$ having the form $\{s_k, \ldots, e_k\}$.
Within this interval, we form the first $2 \cdot (C(|I_{j_k}|-1, 2)-1)$ inequalities of the form,
\begin{equation*}
 d_k \cdot g^T_{(s_k, a_k, b_k, e_k)}
 y \geq g^T_{(s_k, r,t, e_k)} y  \quad\text{and}\quad
  d_k \cdot g^T_{(s_k, a_k,b_k, e_k)}
 y \geq -g^T_{(s_k, r,t, e_k)} y 
  \end{equation*}
for all $r,t \in \{s_k, \ldots, e_k - 1\}$ where $r < t$ and $r \neq a_k$ and $t \neq b_k$.
 The remaining inequalities originate from the remaining intervals.
 For each interval $I_\ell$, for $\ell \in \{1, \ldots, 2 k -1\}\backslash \{j_k\}$,
 let $I_\ell$ have the form $\{s_\ell, \ldots, e_\ell \}$.
 We form the next $2 \cdot C(|I_\ell| -1, 2)$ inequalities of the form
 \begin{equation*}
 d_k \cdot g^T_{(s_k, a_k, b_k, e_k)}
 y \geq g^T_{(s_\ell, r, t, e_\ell)} y  \quad\text{and}\quad
  d_k \cdot g^T_{(s_k, a_k, b_k, e_k)}
 y \geq -g^T_{(s_\ell, r, t, e_\ell)} y 
 \end{equation*}
for all $r,t \in \{s_\ell, \ldots, e_\ell - 1\}$ where $r < t$.
\end{proof}

\subsection{Proof of \Fref{prop:additive_noise}, (Marginalization)}

\begin{proof}
For concreteness, we write the proof where $W$ represents additive noise, but
the proof generalizes to the setting where $W$ represents random intervals easily.
First write $T(y_\obs, v)$ as an integral
over the joint density of $W$ and $Y$,
\begin{align}
T(y_\obs, v) &= P(v^T Y \ge v^Ty_\obs|M(Y+W) = M(y_\obs + W),
\Pi_v^\perp Y = \Pi_v^\perp y_\obs) \nonumber\\
       &= \int \one (v^Ty \ge v^Ty_\obs) f_{W,Y|E_1,E_2}(w,y) dwdy . \label{eq:orig-randtg2}
\end{align}
Then the joint density $f_{W,Y|E_1,E_2}(w,y)$ partitions into two components,
whose latter component (a probability mass function) can be rewritten using
Bayes rule. For convenience, denote $g(w)=\mathbb{P}(E_1 |W = w, E_2)$.
\begin{align*}
  f_{W,Y|E_1,E_2}(w,y) dy dw &= f_{Y|W=w,E_1,E_2}(y) \cdot
                               f_{W|E_1,E_2}(w) \;dy\; dw\\
                             &= f_{Y|W=w,E_1,E_2}(y) \cdot
  \frac{\mathbb{P}(E_1|W=w,E_2)f_{W|E_2}(w)}{\mathbb{P}(E_1|E_2)} \; dy \; dw\\
                             &= f_{Y|W=w,E_1,E_2}(y) \cdot
  \frac{g(w) f_{W}(w)}{\int g(w') f_{W}(w') dw'} \; dy \; dw,
\end{align*}
where we used the independence between $W$ and $E_2$ in the last equality.
With this, $T(y_\obs, v) $ from \eqref{eq:orig-randtg2} becomes:
\begin{equation*}
T(y_\obs, v)  = \int \one (v^Ty \ge v^Ty_\obs) \cdot g(w) \cdot \frac{f_{W|E_2}(w)}{\int g(w') f_{W}(w') dw'}
  \cdot f_{Y|W=w,E_1,E_2}(y)\;dy\; dw.
\end{equation*}
Now, rearranging, we get:
\begin{align}
T(y_\obs, v)  &= \int   \underbrace{ \left[\int \one (v^Ty \ge v^Ty_\obs)\cdot
         f_{Y|W=w,E_1,E_2}(y) dy  \right]}_{T(y_\obs, v, w) } \underbrace{\frac{g(w)}{\int
         g(w') f_{W}(w') dw'}}_{a(w)} f_{W}(w)dw  \nonumber \\
       &= \int T(y_\obs, v, w) a(w) \; f_{W}(w)\; dw. \label{eq:simpler-final-form}
\end{align}
This proves the first equality in \Fref{prop:additive_noise}. To show 
what the weighting factor $a(w)$ equals,
observe that
by applying Bayes rule to the numerator of $a(w_\obs)$, and rearranging:
\begin{align*}
  a(w) &=   \frac{g(w)}{\int g(w') f_W(w')\;dw'} = \frac{\mathbb{P}(E_1|E_2, W=w)}{P(E_1|E_2)} = \frac{\mathbb{P}(W=w |E_1, E_2) }{\mathbb{P}(W=w |E_2)}\\
  &= \frac{\mathbb{P}(W=w |E_1,E_2)}{\mathbb{P}(W=w)}.
\end{align*}

Finally, to show the seocnd equality in \Fref{prop:additive_noise}, observe that
we can also represent $a(w)$ as
\begin{equation} \label{eq:a}
a(w) = \frac{g(w)}{\mathbb{E}[g(w)]}
\end{equation}
by definition, where the denominator is the
expectation taken with respect to the random variable $W$.
Leveraging the geometric theorems of \cite{lee2016exact,tibshirani2016exact},
it can be shown that
\begin{equation} \label{eq:g}
g(w) = P\Big(M(Y+W) = M(y_\obs + W) ~|~ \Pi^\perp_vY = \Pi^\perp_v y_\obs\Big)
= \Phi(\cV_{\text{up}}/\tau) - \Phi(\cV_{\text{lo}}/\tau).
\end{equation}
Also from the same references as well as stated in \Fref{sec:randomization}, we know that 
\begin{equation} \label{eq:t}
T(y_\obs, v, w) = \frac{\Phi(\cV_{\text{up}}/\tau) - \Phi(v^Ty_\obs/\tau)}{\Phi(\cV_{\text{up}}/\tau) - \Phi(\cV_{\text{lo}}/\tau)}
\end{equation}
Putting \eqref{eq:a}, \eqref{eq:g} and \eqref{eq:t} together into
\eqref{eq:simpler-final-form}, we complete the proof by obtaining
\begin{align*}
T(y_\obs, v)  = \frac{\int T(y_\obs, v, w)  g(w) f_W(w) dw}{ \int g(w) f_W(w) dw}
= \frac{\int \Phi(\cV_{\text{up}}/\tau) - \Phi(v^Ty_\obs/\tau) f_W(w) dw}{ \int \Phi(\cV_{\text{up}}/\tau) - \Phi(\cV_{\text{lo}}/\tau) f_W(w) dw}.
\end{align*}
\end{proof}

\section{Selected model tests, hit-and-run sampling for known
$\sigma^2$} \label{app:known_sigma}

The following is the hit-and-run sampler to estimate the tail probability of the law 
of \eqref{eq:selective-distribution-saturated}.
This is for the known $\sigma^2$ setting, which differs from the setting described in the main
text in \Fref{sec:computation}.
This was briefly described in \cite{fithian2015selective} 
but the authors have later implemented it in ways not originally
described in the above work to make it more efficient.
We do not claim novelty for the following algorithm, but simply state it for completion.
The original code can be found the repository \url{https://github.com/selective-inference},
and we reimplemented it to suite our coding framework and simulation setup.

We specialize our description to test the null hypothesis $H_0: v^T\theta = 0$ against the one-sided
alternative $H_1: v^T\theta > 0$.
There are some notation to clarify prior to describing the algorithm. 
Let $v \in\mathbb{R}^n$ denote the vector such that
\[
v^T y = \bar{y}_{(\hat c_j + 1):\hat c_{j+1}} - \bar{y}_{(\hat c_{j-1} + 1):\hat c_{j}}.
\]
As in \Fref{sec:computation}, let $A \in \mathbb{R}^{k\times n}$ denote the matrix such that the
last $k$ equations in the above display are satisfied if and only if $AY = Ay_\obs$.
Based on \Fref{sec:polyhedra},
observe that our goal reduces to sampling from the $n$-dimensional distribution 
\begin{equation}\label{eq:full_gaussian}
Y \sim \cN(0, \sigma^2 I_n), \quad \text{conditioned on} \quad \Gamma Y \geq 0,\; AY = Ay_\obs.
\end{equation}
where $I_n$ is the
$n \times n$
identity matrix.

The first stage of the algorithm \emph{removes the nullspace} of $A$ in the following sense.
Construct any matrix $B \in \mathbb{R}^{n \times n}$ such that it has full rank and the last 
$k$ rows are equal to $A$. 
Then, consider the following $n$-dimensional distribution.
\begin{equation} \label{eq:no_nullspace_gaussian1}
Y' \sim \cN(0, \sigma^2 B^TB), \quad \text{conditioned on} \quad  \Gamma B^{-1} Y' \geq 0,\; (Y')_{(n-k+1):n} = Ay_\obs.
\end{equation}
Note that $B^{-1}Y'$ has the same law as \eqref{eq:full_gaussian}.
Observe that the above distribution is a conditional Gaussian, meaning we can remove
the last conditioning event. Towards that end,
let $\Gamma''$ denote the first $n-k$ columns of the matrix $\Gamma B^{-1}$, and
let $u''$ denote the last $k$ columns of $\Gamma B^{-1}$ left-multiplying $Ay_\obs$.
Also, consider the following partitioning of the matrix $B^TB$,
\[
\sigma^2 B^TB = \begin{bmatrix}
B_{11} & B_{12} \\
B_{12}^T & B_{22}
\end{bmatrix},
\]
where $B_{11}$ is a $(n-k) \times (n-k)$ submatrix, $B_{12}$ is a $(n-k) \times k$ submatrix,
and $B_{22}$ is a $k\times k$ submatrix. 
Then, consider the following $n-k$-dimensional distribution.
\begin{equation}\label{eq:no_nullspace_gaussian2}
Y'' \sim \cN\Big(B_{12}B_{22}^{-1}(Ay_\obs), \; B_{11} - B_{12}B_{22}^{-1}B_{12}^T\Big), \quad \text{conditioned on} \quad  \Gamma'' Y'' \geq -u''.
\end{equation}
Note that $Y''$ has the same law as the first $n-k$ coordinates of \eqref{eq:no_nullspace_gaussian1}.

The next stage of the algorithm \emph{whitens} the above distribution so its covariance is the
identity.
Let $\mu''$ and $\Sigma''$ denote the mean and variance of the unconditional form
of the above distribution \eqref{eq:no_nullspace_gaussian2}. Let $\Theta$ be the matrix such that
$\Theta \Sigma'' \Theta^T = I_n$. This must exist since $\Sigma''$ is positive definite.
Consider the following $n-k$ dimensional distribution,
\begin{equation}\label{eq:conditional_gaussian}
Z \sim \cN (0, I_n) , \quad \text{conditioned on} \quad  \Gamma'' \Theta^{-1} Z \geq -u'' -\Gamma''\mu''.
\end{equation}
Note that $\Theta^{-1}Z+ \mu''$ has the same law as \eqref{eq:no_nullspace_gaussian2}. Hence, 
we have constructed linear mapping $F$ and $G$ between
\eqref{eq:full_gaussian} and \eqref{eq:conditional_gaussian} 
such that $F(Y) \overset{d}{=} Z$, and
$G(Z) \overset{d}{=} Y$.

In order to set up a hit-and-run sampler, generate $p$ unit vectors $g_1, \ldots, g_p$. (The choice of $p$ is arbitrary, and the specific method of generating these $p$ vectors is also arbitrary.)
Our hit-and-run sampler with move in the linear directions dictated by $g_1, \ldots, g_p$.
We are now ready to describe the hit-and-run sampler in \Fref{alg:hitandrun_knownsigma}, which leverages many of the same
calculations in  \eqref{eq:tg_statistic} and \eqref{eq:vlo_vup}.
The similarity arises since  $\Pi_{g_i}^{\perp} Z = \Pi_{g_i}^{\perp}(Z +  g_i)$ by definition of
projection.

\begin{algorithm}[t]
Choose a number $M$ of iterations.\\
Set $z^{(0)} = F(y_\obs)$, as described in the text.\\
Generate $p$ unit directions $g_1, \ldots, g_p$, each vector of length $n$.\\
Compute $U = \Gamma'' \Theta^{-1} z^{(0)} +u'' + \Gamma'' \mu''$, which represents the ``slack'' of each constraint.\\
Compute the $p$ vectors, $\rho_i = \Gamma'' \Theta^{-1} g_i$ for $i \in \{1,\ldots, p\}$.\\
 \For{$m \in \{1,\ldots,M\}$}{
 Select an index $i$ uniformly from $1$ to $p$.\\
Compute the truncation bounds
\[
\cV_{\text{lo}} = g_i^Tz^{(m-1)} - \min_{j:(\rho_i)_j > 0} U_j/(\rho_i)_j, \quad\text{and}\quad
\cV_{\text{up}} = g_i^Tz^{(m-1)} - \max_{j:(\rho_i)_j < 0} U_j/(\rho_i)_j.
\]\\
Sample $\alpha^{(m)}$ from a Gaussian with mean $g_i^Tz^{(m-1)}$ and
variance $1$, truncated to lie between $\cV_{\text{lo}}$ and $\cV_{\text{up}}$.\\
Form the next sample
\[
z^{(m)} = z^{(m-1)} + \alpha^{(m)} g_i, \quad \text{and}
\quad y^{(m)} = G(z^{(m)}).
\]\\
Update the slack variable,
\[
U \leftarrow U + \alpha^{(m)} \rho_i.
\]
}
Return the approximate for the tail probability of \eqref{eq:selective-distribution-selected-known-sigma},
$
\sum_{m=1}^{M} \one[v^Ty^{(m)} \geq v^Ty_\obs]/M.
$
 \caption{MCMC hit-and-run algorithm for selected model test with known $\sigma^2$} \label{alg:hitandrun_knownsigma}
\end{algorithm}

The computational efficiency of the above algorithm comes from the fact that little
multiplication needs to be done with the polyhedron matrix
$\Gamma'' \Theta^{-1}$, a potentially huge matrix. 
$U$ and $\rho_1, \ldots, \rho_p$, each vectors of the
same length, carry all the information needed about polyhedron 
throughout the entire procedure of generating $M$ samples.

\section{Additional simulation results}\label{app:simulations}

\subsection{Power comparison using unique detection}
\label{app:unique-detection}

Fused lasso was
      appeared to have a large drop in power compared to segmentation
      algorithms. 
In addition to these three measures shown in \Fref{sec:simulation}, for multiple
changepoint problems like middle mutations it is useful to measure performance
using an alternative measure of detection called unique detection. This is
useful because some algorithms -- mainly fused lasso, but to also binary
segmentation to some extent, primarily in later steps -- admit ``clumps'' of
nearby points. If this clumped detection pattern occurs in early steps, the
algorithm requires more steps than others to fully admit the correct
changepoints. In this case, detection alone is not an adequate metric, and
unique detection can be used in place.
\begin{equation} 
  \text{Unique detection probability} =  \frac{\# \text{changepoints which were
      approximately  detected}}{ \#
    \text{number of true changepoints.}}\label{eq:powdef4}\\
\end{equation}
In plain words, unique detection is measuring how many of the true changepoint
locations have been approximately recovered.

We present a simple case study. In addition to a 2-step fused lasso, imagine
using a 3-step fused lasso, but with post-processing. For post-processing,
declutter by centroid clustering with maximum distance of 2, and test the
$k_0<3$ changepoints, pitting the resulting segment test p-values against
$0.05/k_0$. A 2-step fused lasso's detection does not reach 1 even at high
signals ($\delta=4$) because of the aforementioned clumped detection
behavior. The resulting segment tests are also not powerful, since the segment
test contrast vectors consist of left and right segments which do not closely
resemble true underlying piecewise constant segments in the data. However, when
detection is replaced with unique detection, two things are noticeable. First,
decluttered lasso's detection performance is noticeably improved when going from
2 to 3 steps. Also, when unconditional power is calculated using unique
detection, binary segmentation does not have as large of an advantage over the
the several variants of fused lasso.
This is shown in \Fref{fig:unique-power-comparison}.
We see from the right figure (compared to
      the left) that the a ``decluttered'' version of 2- or 3-step fused lasso
      has much closer unconditional power to binary segmentation. 

\begin{figure}
  \centering
    \includegraphics[width=.33\linewidth]{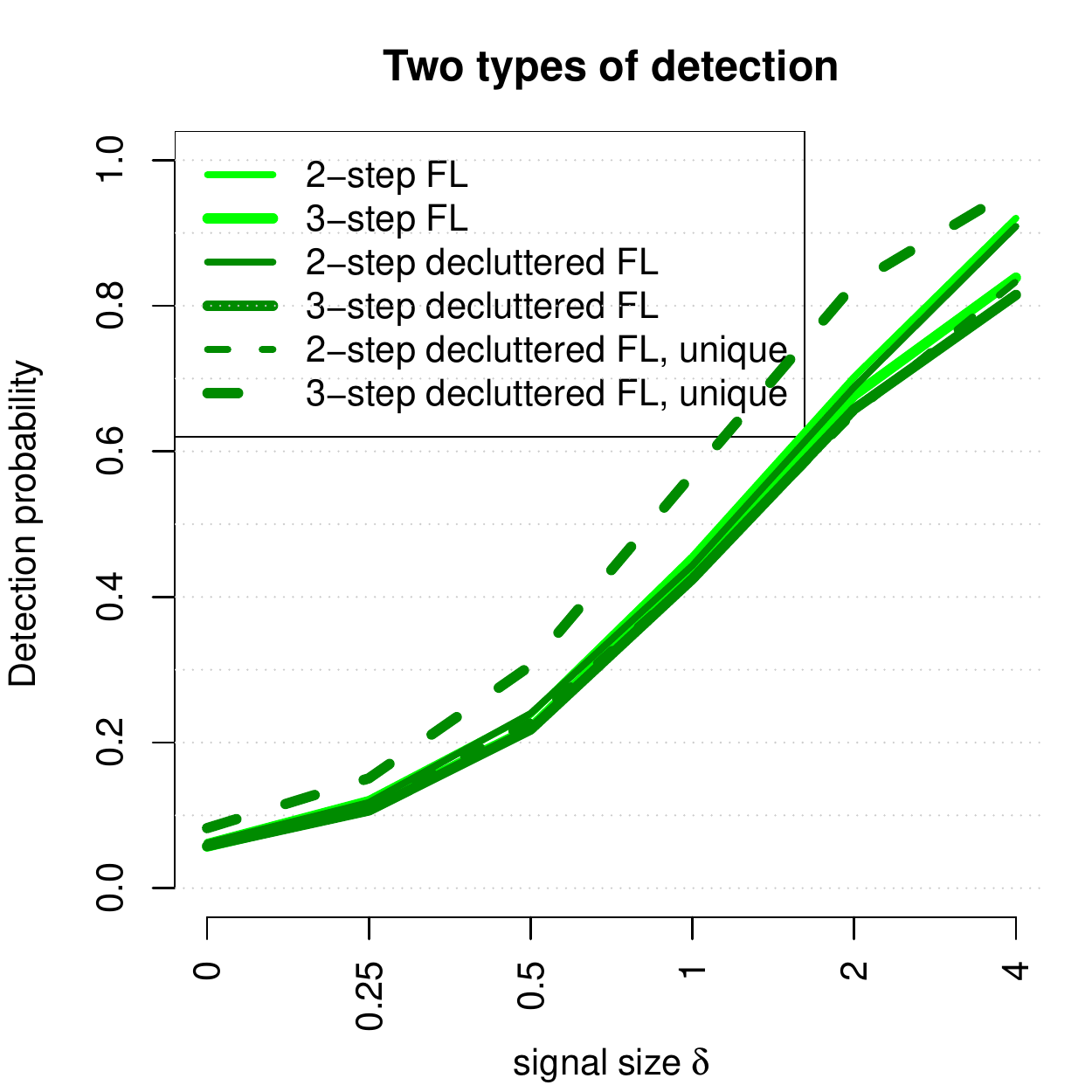}\hspace{-3mm} 
    \includegraphics[width=.33\linewidth]{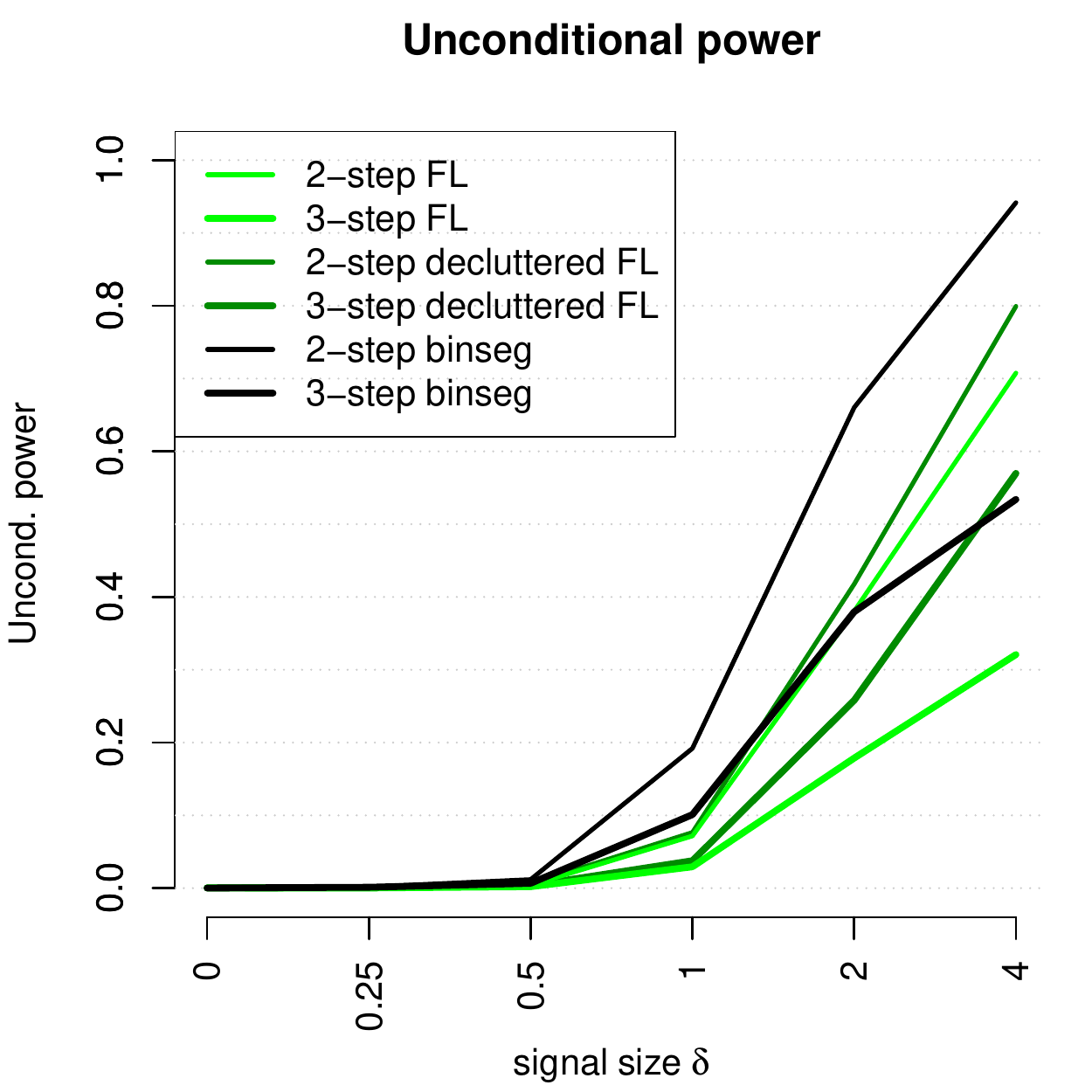}\hspace{-3mm}
    \includegraphics[width=.33\linewidth]{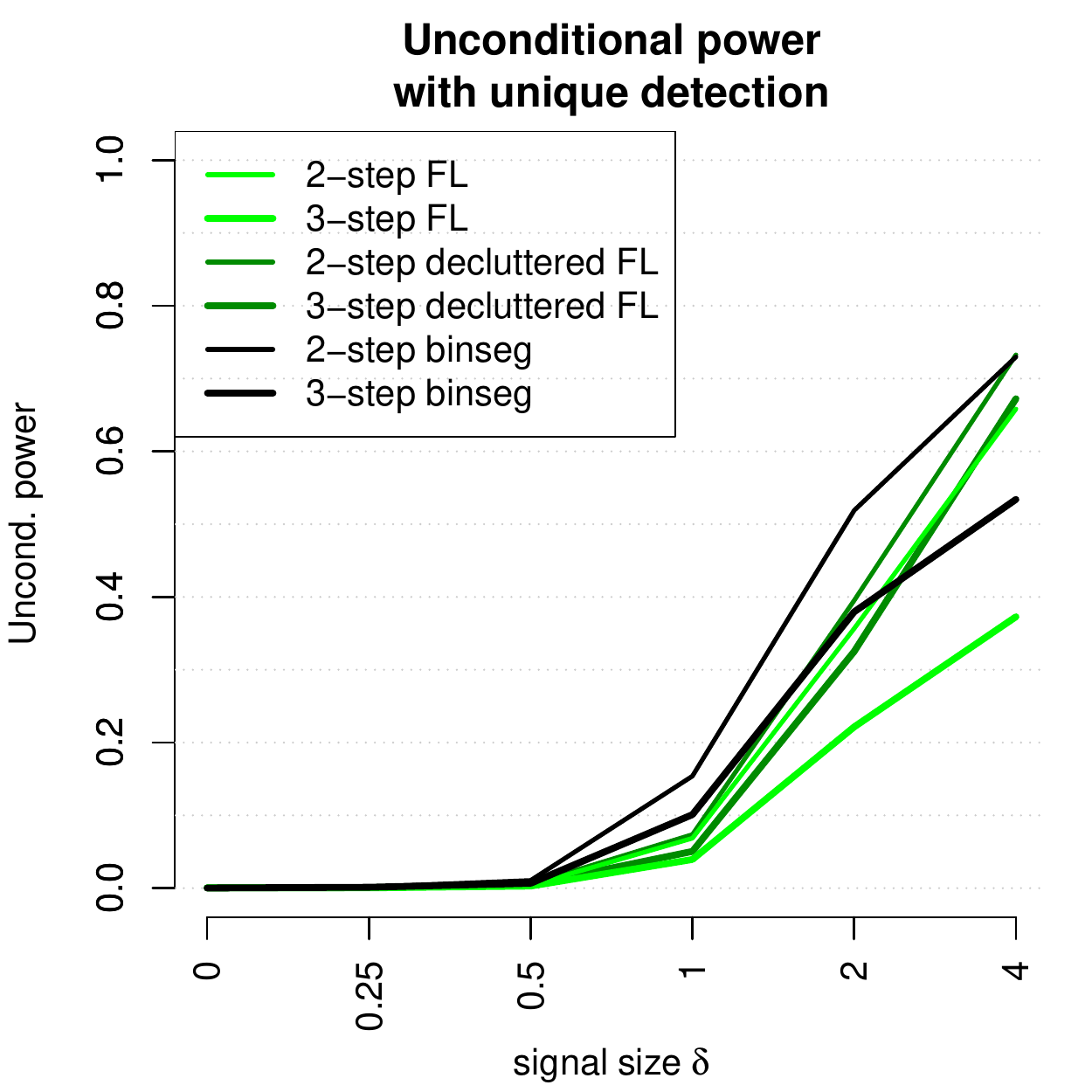}\hspace{-3mm}
    \caption{\it\small 
    (Left): Various detections for FL, either using 2 or 3 steps, and either using decluttering
    or not.       
    (Middle): The unconditional power of various segmentation algorithms.
    (Right): The unconditional power, but defined as the conditional power multiplied by
    the unique detection probability.
    } \label{fig:unique-power-comparison}
\end{figure}

\subsection{Power comparison with different mean shape}
\label{app:edge-mutation}

The synthetic mean discussed here consists of a single upward changepoint piece-wise
constant mean, as shown in \eqref{eq:edge-mutation} and
\Fref{fig:power-comparison-data-edge}. This is chosen to be another realistic
example of the mutation phenomenon as observed in array CGH datasets from
\citet{Snijders2001}, in addition to the case shown in the main text. We focus
on the \textit{duplication} mutation scenario, but the results apply similarly
to deletions. As before, the sample size $n=200$ was chosen to be in the scale
of the data length in a typical array CGH dataset in a single chromosome. An
example of this synthetic dataset can be seen in Figure
\ref{fig:power-comparison-data}.  For saturated model tests, WBS no longer outperforms
binary segmentation in power. This is expected since there is only a single
changepoint not accompanied by opposing-direction changepoints.

\begin{equation}\label{eq:edge-mutation}
\hspace{-20mm}\textbf{Edge mutation:}\hspace{5mm}
  y_i \sim \cN(\theta_i, 1), \;\; 
  \theta_i = \begin{cases}
 \delta & \text{ if } 161\le i \le 200\\
 0 & \text{ if otherwise }\\
  \end{cases}
\end{equation}

\begin{figure}[ht]
  \centering
  \includegraphics[width=.5\linewidth]{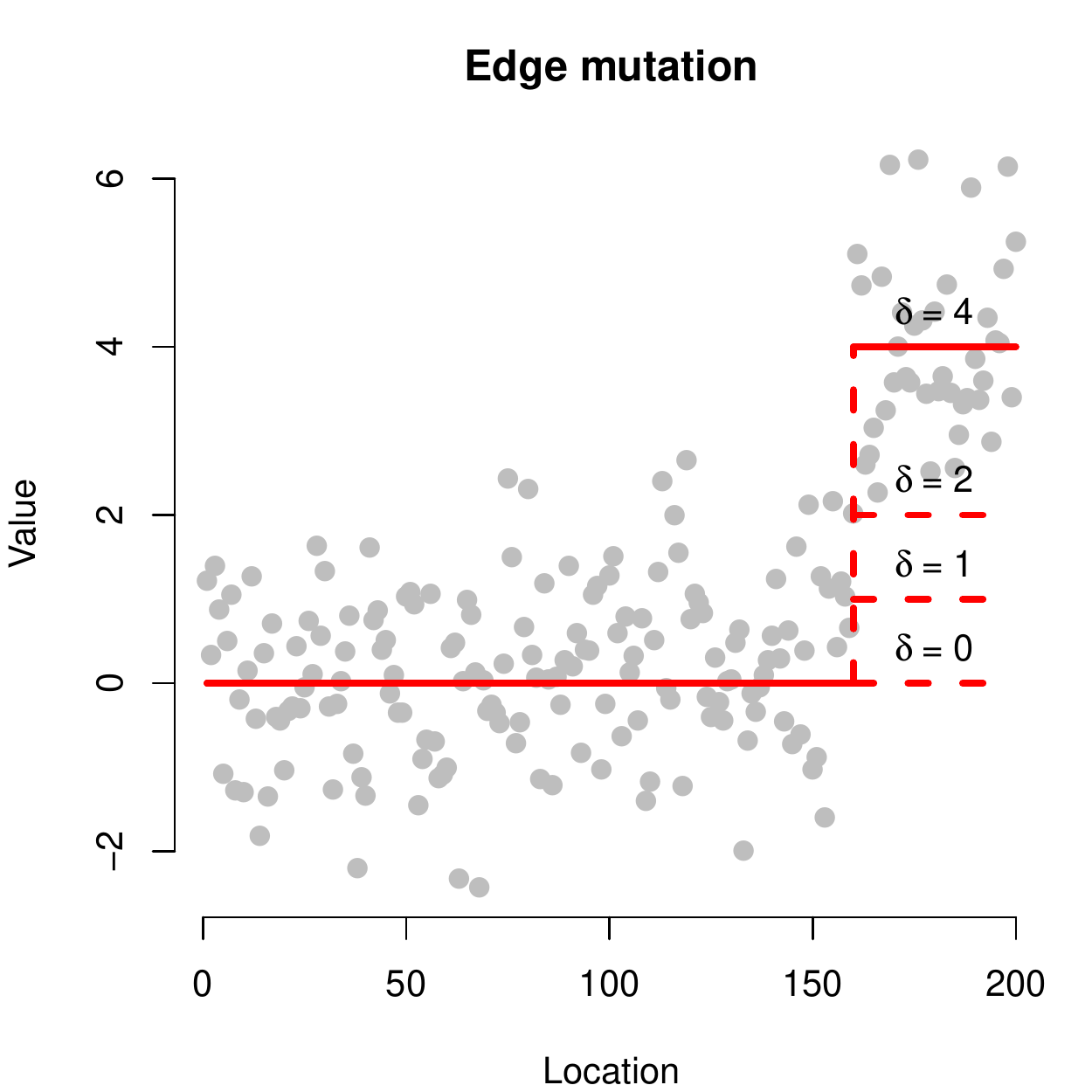}
  \caption{\it\small Analogous to \Fref{fig:power-comparison-data} but representing edge mutations.}
  \label{fig:power-comparison-data-edge}
\end{figure}

\begin{figure}
  \centering
    \includegraphics[width=.33\linewidth]{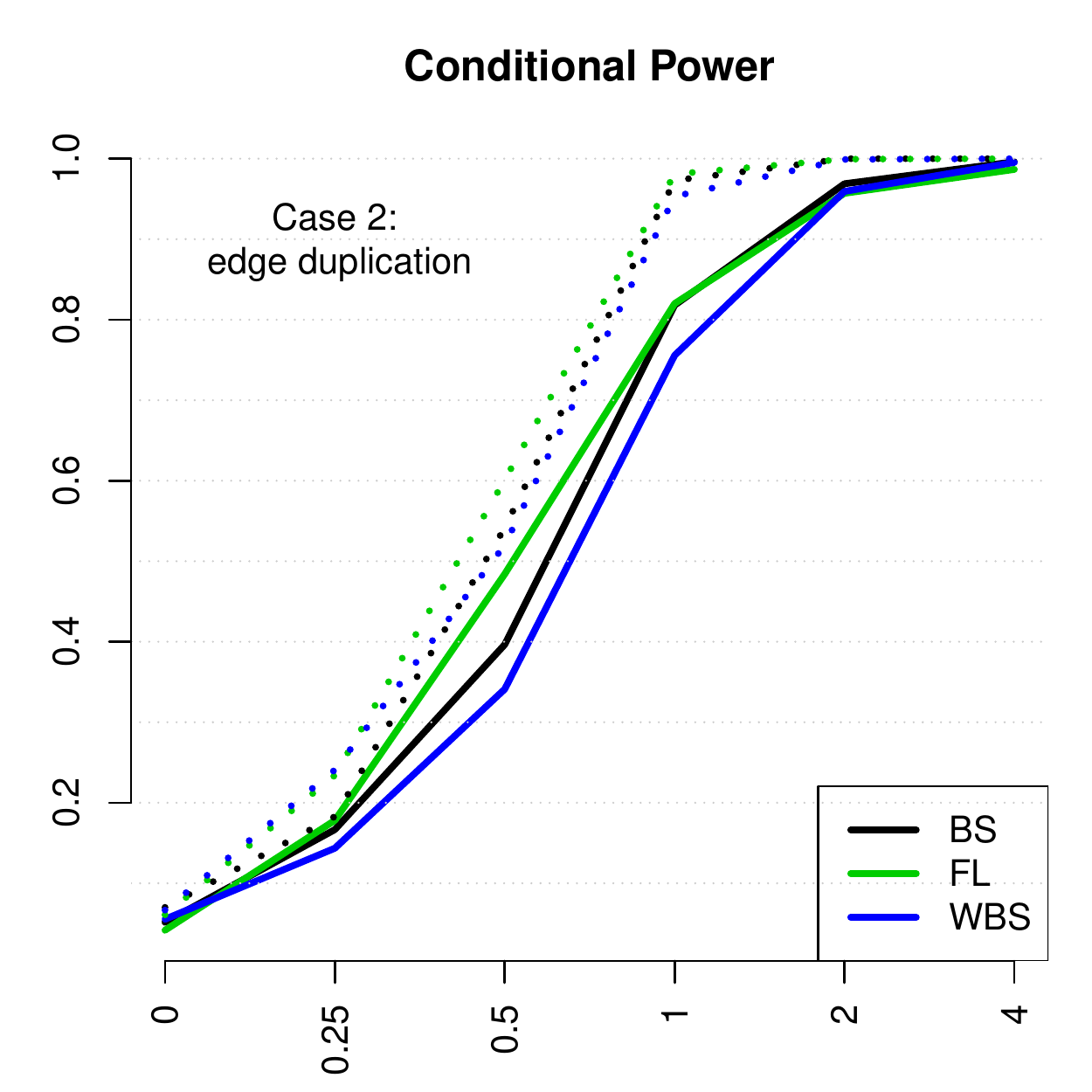}\hspace{-3mm}
    \includegraphics[width=.33\linewidth]{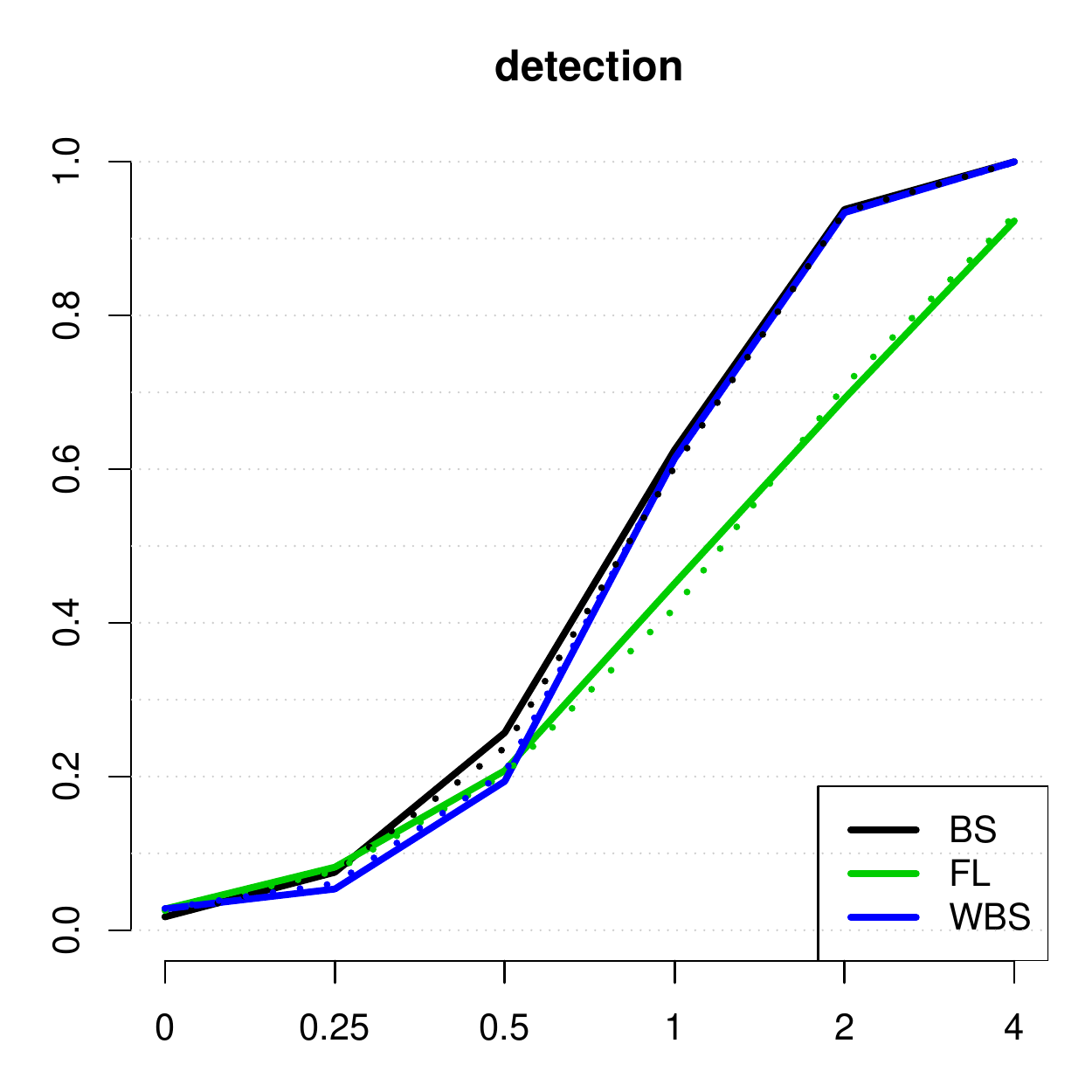}\hspace{-3mm}
    \includegraphics[width=.33\linewidth]{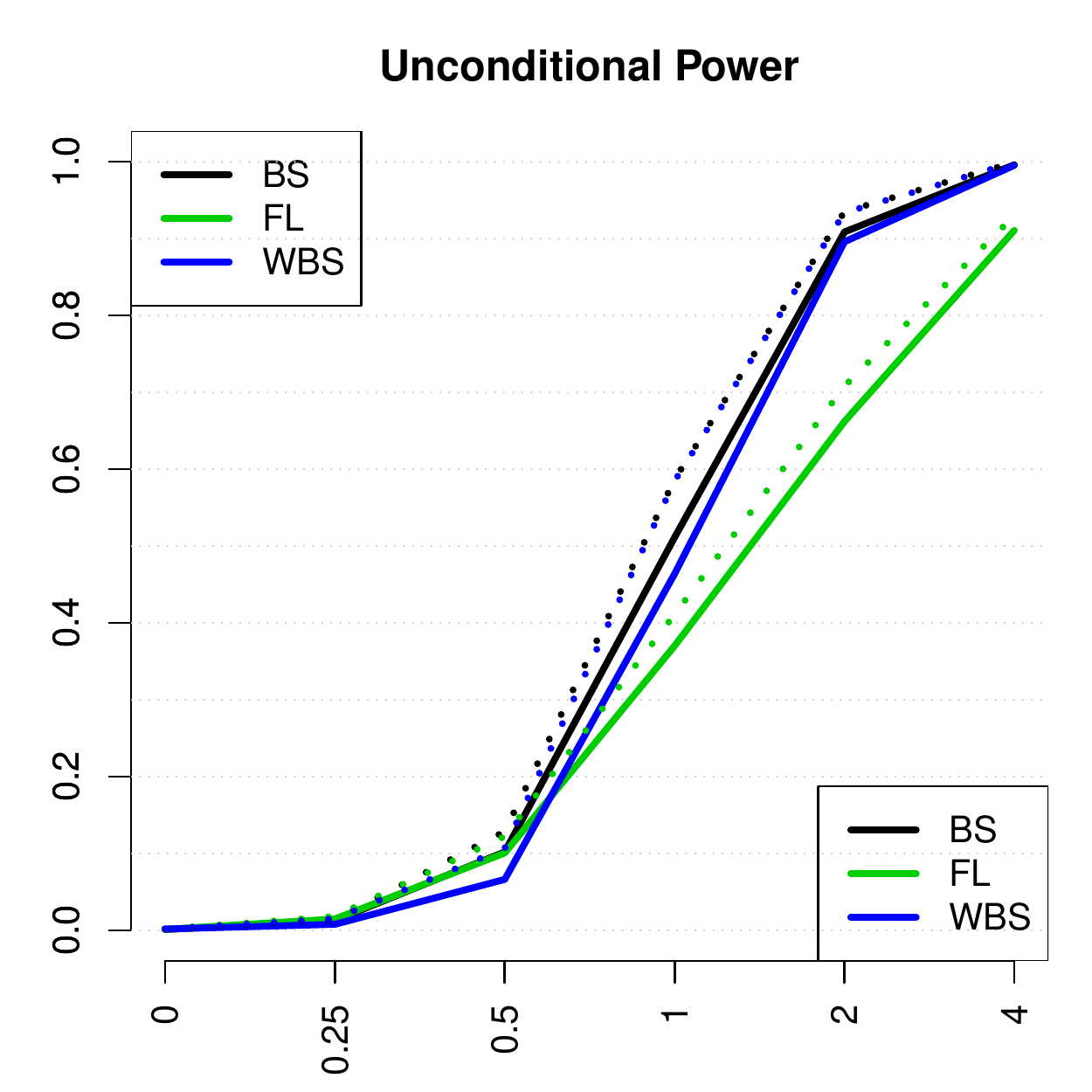}\hspace{-3mm}
    \caption{\it\small 
    Same setup as \Fref{fig:power-comparison} but
    for edge-mutation data. }
\label{fig:power-comparison-edge}
\end{figure}

\subsection{Sample splitting (continued)}
The results in \Fref{fig:samplesplit} were based on approximate detection where,
for methods used on the entire dataset of length $n$,
we defined a detection event as estimating $\pm 2$ of the true changepoint locations.
For sample splitting, this was defined as estimate $\pm 1$ of the true changepoint location based
on half the dataset. This choice of approximate detection is somewhat arbitrary, and
it is informative to see if the results would change if we considered only
exact detection.
We can see from \Fref{fig:samplesplit-exact} that randomized TG p-values have
comparable power with sample splitting inferences, among tests that are
regarding exactly the right changepoints.

\begin{figure}[ht]
  \centering
  \includegraphics[width=.33\linewidth]{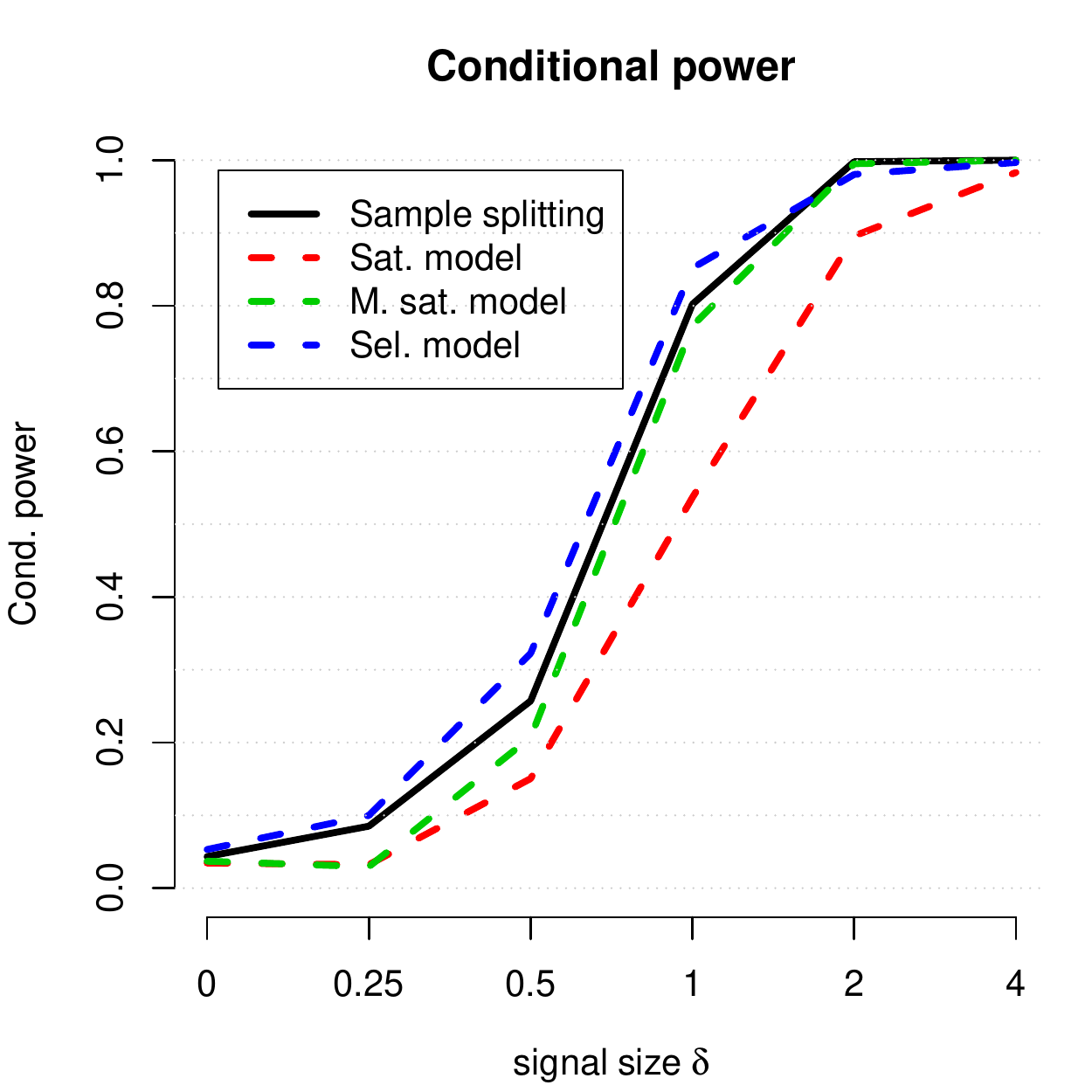}\hspace{-3mm}
  \includegraphics[width=.33\linewidth]{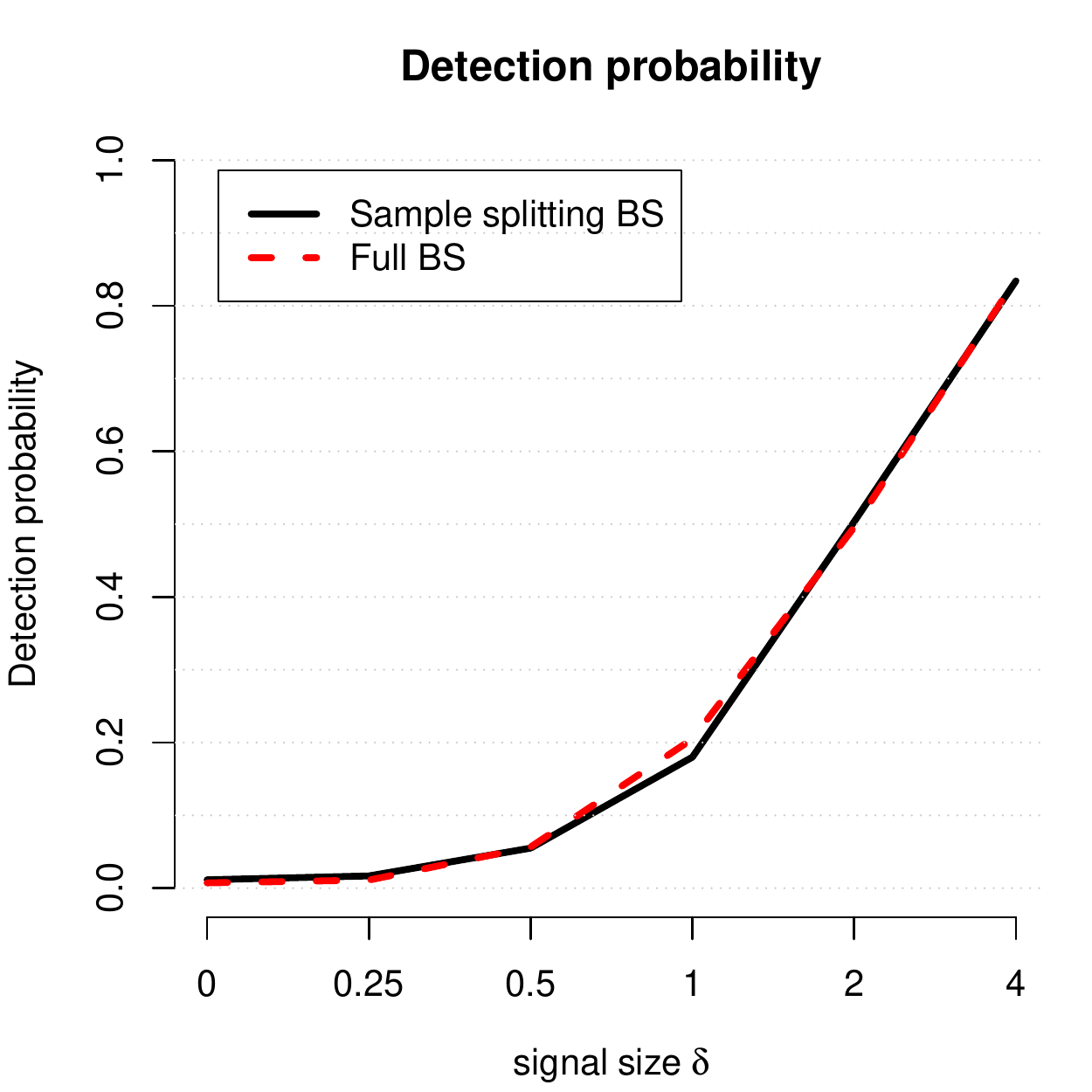}\hspace{-3mm}
  \includegraphics[width=.33\linewidth]{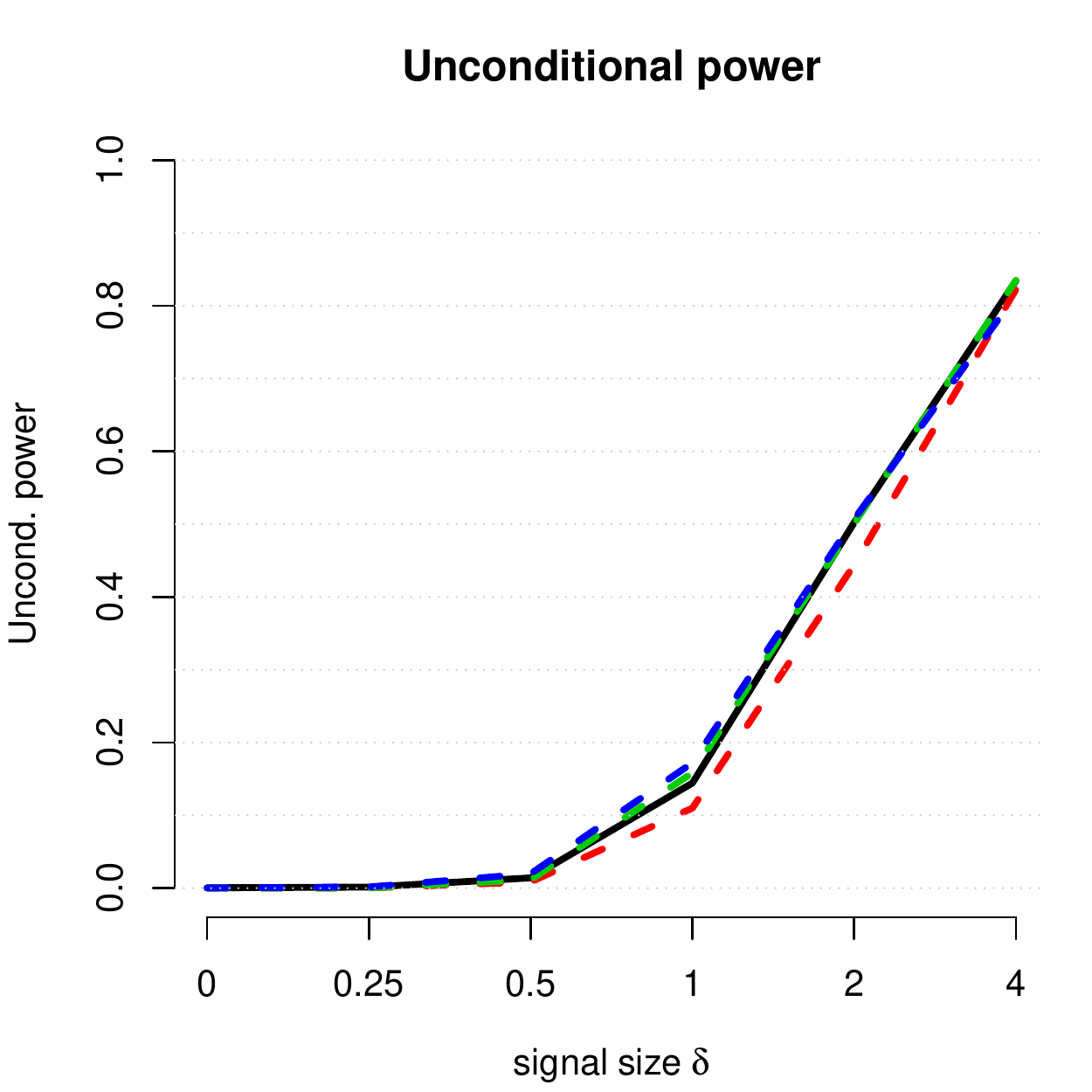}\hspace{-3mm}
  \caption{\it\small The same setup as in \Fref{fig:samplesplit} but with exact detection.}
  \label{fig:samplesplit-exact}
\end{figure}

\section{Model size selection using information criteria}
\label{app:ic}

Throughout the paper we assume that the number of algorithm steps $k$ is fixed.
\citet{hyun2018exact} introduces a stopping rule based on information criteria
(IC) which can be characterized as a polyhedral selection event. The IC for the
sequence of models $M_{1:\ell}, \ell=1,\ldots, n-1$ is
\begin{equation}
  J(M_{1:\ell}) = \|y - \hat y_{M_{1:\ell}(y)}\|^2_2 + p\big(M_{1:\ell}(y)\big).
\end{equation}
We omit the dependency on $y$ when obvious.
We use the BIC complexity penalty $p(M_k) = \sigma^2 \cdot k \cdot \log(n)$ for
this paper.  Also define
$S_\ell(y) = \sign\left(J(M_{1:\ell}) -J(M_{1:(\ell-1)})\right)$ to be the
sign of the difference in IC between step $\ell-1$ and $\ell$. This is a $+1$ for a rise
and $-1$ for a decline. A data-dependent stopping rule $\hat k$ is defined as
\begin{equation}\label{eq:stoprule}
  \hat k (y) = \min\{k : S_k(y) = S_{k+1}(y) = \ldots = S_{k+q}(y) = 1\}
\end{equation}
which is a local minimization of IC, defined as the first time $q$ consecutive
rises occur. As discussed in \cite{hyun2018exact}, $q=2$ is a reasonable choice
for the changepoint detection. To carry out valid selective inference, we
condition on the selection event $\one[ S_{1:(k+q)}(y) = S_{1:(k+q)}(y_\obs)]$,
which is enough to determine $\hat k$. A $k$-step model for $k$ chosen by
\eqref{eq:stoprule} can be understood to be $ M_{1:\hat k}(Y) =
M_{1:k}(y_\obs)$. The corresponding selection event $P_{M_{1:\hat{k}}}$ is with the
additional halfspaces, as outlined in \cite{hyun2018exact}. Simulations in Figure
\ref{fig:ic-power} show that introducing IC stopping is valid, by controlled
type-I error, but comes at the cost of considerable power
loss. 

\begin{figure}

  \includegraphics[width=.31\linewidth]{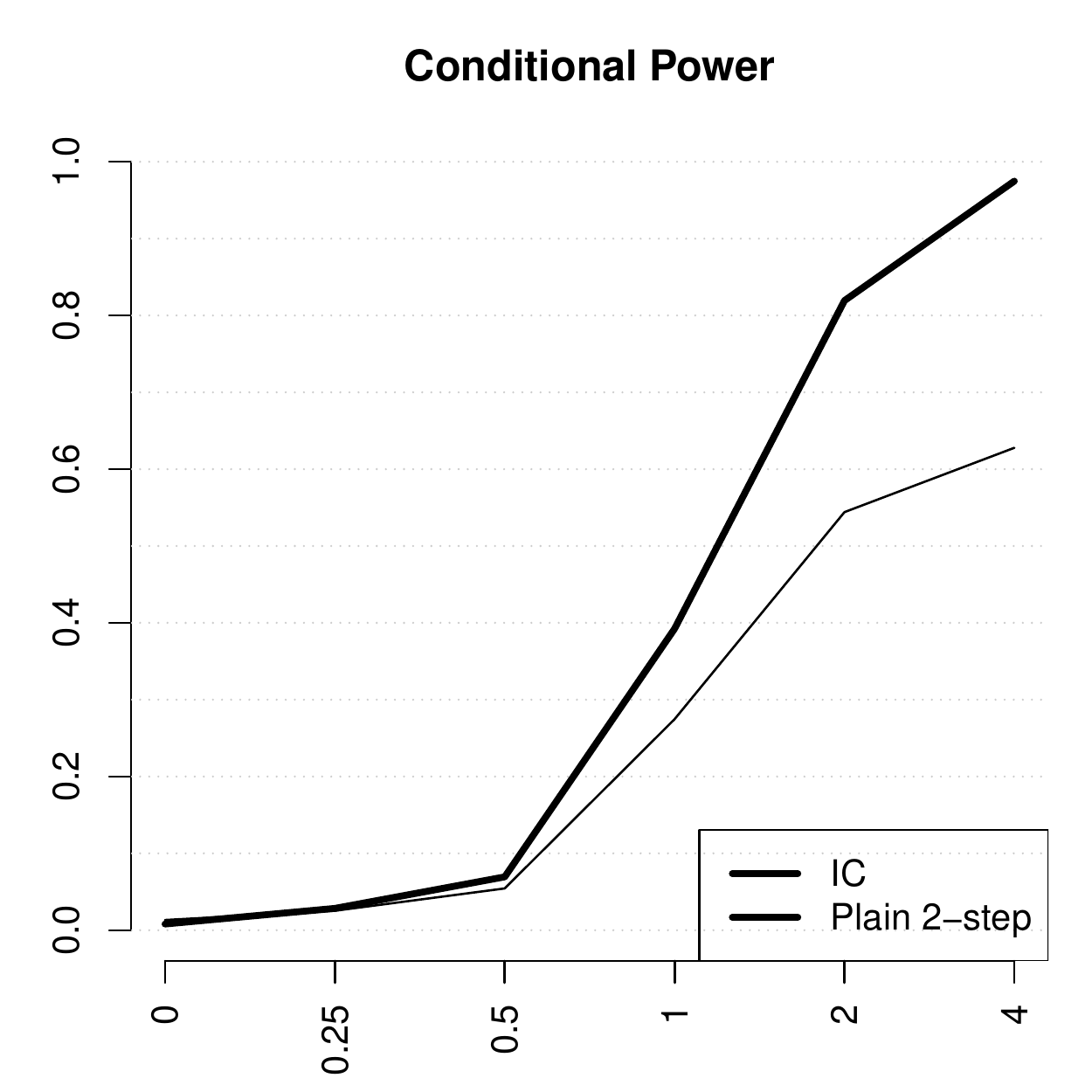}
  \includegraphics[width=.31\linewidth]{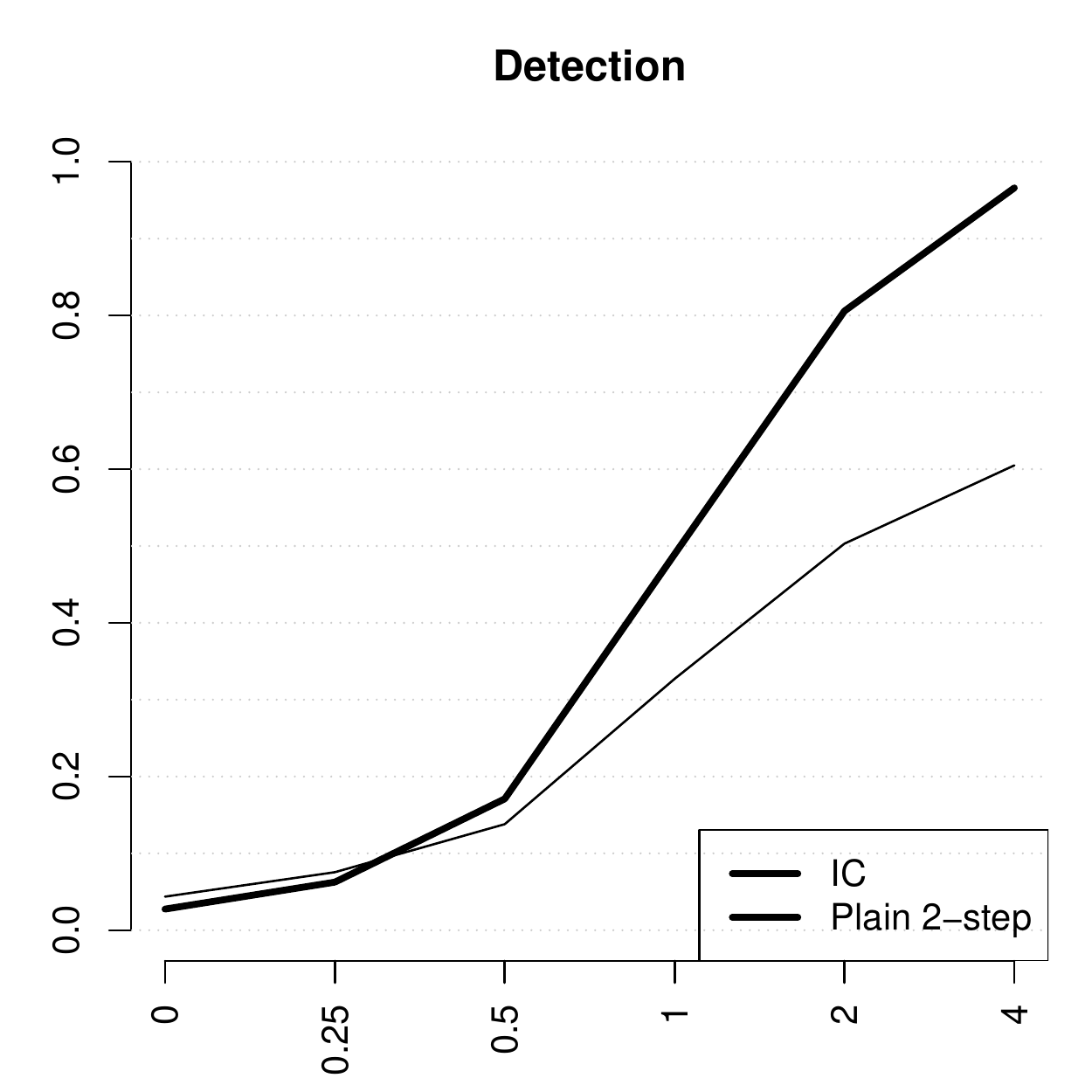}
  \includegraphics[width=.31\linewidth]{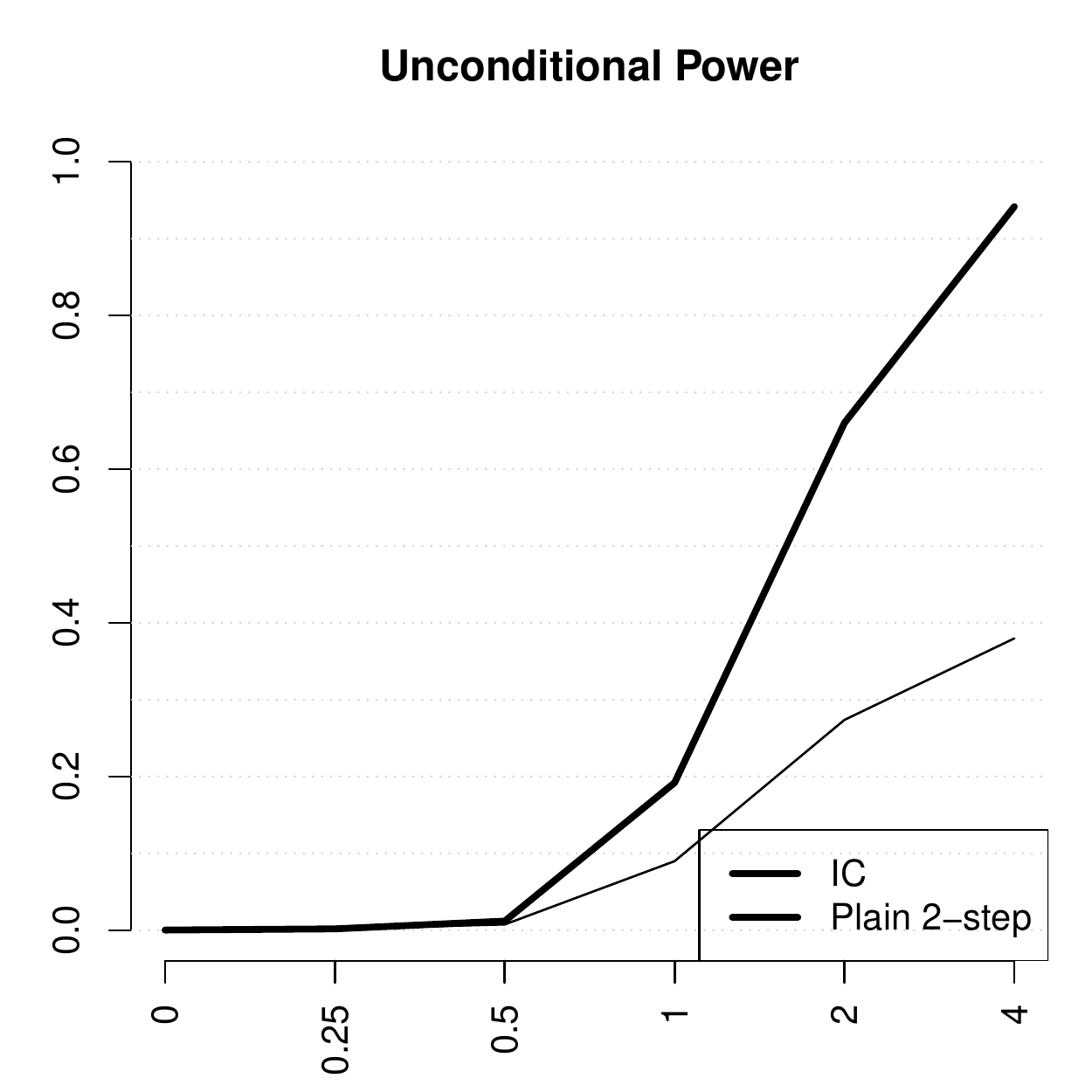}
  \caption{\small\it
  Similar setup as \Fref{fig:power-comparison}.
    In the middle-mutation data example from
    \eqref{eq:middle-mutation}. IC-stopped binary segmentation inference (bold
    line) is compared to a fixed 2-step binary segmentation inferences (thin
    line). We can see that the power and detection are considerably lower. The
    average number of steps taken per each $\delta$ on x-axis ticks are
    $1.34, 1.86, 3.02, 3.64, 3.77, 3.72$, respectively.}
  \label{fig:ic-power}
\end{figure} 
\end{document}